\newtheorem{theorem}{Theorem}
\newtheorem{lemma}{Lemma}
\newtheorem{corollary}{Corollary}
\newtheorem{definition}{Definition}
\newtheorem{proposition}{Proposition}
\begin{document}

\title{Resolute and symmetric mechanisms \\ for two-sided matching problems}

\author{\textbf{Daniela Bubboloni} \\
{\small {Dipartimento di Matematica e Informatica U.Dini} }\\
\vspace{-6mm}\\
{\small {Universit\`{a} degli Studi di Firenze} }\\
\vspace{-6mm}\\
{\small {viale Morgagni 67/a, 50134 Firenze, Italy}}\\
\vspace{-6mm}\\
{\small {e-mail: daniela.bubboloni@unifi.it}}\\
\vspace{-6mm}\\
{\small tel: +39 055 2759667}\\
\vspace{-6mm}\\
{\small https://orcid.org/0000-0002-1639-9525}
\and
\textbf{Michele Gori}
\\
{\small {Dipartimento di Scienze per l'Economia e  l'Impresa} }\\
\vspace{-6mm}\\
{\small {Universit\`{a} degli Studi di Firenze} }\\
\vspace{-6mm}\\
{\small {via delle Pandette 9, 50127, Firenze, Italy}}\\
\vspace{-6mm}\\
{\small {e-mail: michele.gori@unifi.it}}\\
\vspace{-6mm}\\
{\small tel: +39 055 2759707}\\
\vspace{-6mm}\\
{\small https://orcid.org/0000-0003-3274-041X}
\and
\textbf{Claudia Meo}
\\
{\small {Dipartimento di Scienze Economiche e Statistiche} }\\
\vspace{-6mm}\\
{\small {Universit\`{a}} degli Studi di Napoli Federico II}\\
\vspace{-6mm}\\
{\small {via Cintia, 80126, Napoli, Italy}}\\
\vspace{-6mm}\\
{\small {e-mail:  claudia.meo@unina.it}}\\
\vspace{-6mm}\\
{\small tel: +39 081 675154 }\\
\vspace{-6mm}\\
{\small https://orcid.org/0000-0003-0870-3555}
}

\maketitle

\begin{abstract}
\noindent We focus on the one-to-one two-sided matching model with two disjoint sets of agents of equal size, where each agent in a set has preferences on the agents in the other set modeled by a linear order. A matching mechanism associates a set of matchings to each preference profile; resoluteness, that is the capability to select a unique matching, and stability are important properties for a matching mechanism. The two versions of the deferred acceptance algorithm are resolute and stable matching mechanisms but they are unfair since they strongly favor one side of the market. We introduce a property for matching mechanisms that relates to fairness; such property, called symmetry, captures different levels of fairness and generalizes existing notions. We provide several possibility and impossibility results mainly involving the most general notion of symmetry, known as gender fairness, resoluteness, stability, weak Pareto optimality and minimal optimality. In particular, we prove that: resolute, gender fair matching mechanisms exist if and only if each side of the market consists of an odd number of agents; there exists no resolute, gender fair, minimally optimal matching mechanism. Those results are obtained by employing algebraic methods based on group theory, an approach not yet explored in matching theory.
\end{abstract}
\vspace{4mm}

\noindent \textbf{Keywords}: matching mechanism; stability; fairness; equity; symmetry; group theory.

\section{Introduction}

In this paper we analyze issues about fairness for matching models  with two sides. In these models there are two disjoint sets of agents and each agent on one side has preferences over the agents on the other side. Agents on different sides have to be matched taking into account these preferences. A wide range of allocation problems from diverse fields can be analyzed within such a framework. Well-known examples include the college admissions market, the labour market for medical interns, auction markets and the living kidney donors market.

Our focus is, in particular, on the most basic example of matching models with two sides, the so called marriage model (Gale and Shapley, 1962), where agents on the two sides are interpreted as men and women. For simplicity, we adopt the terminology related to Gale and Shapley's metaphor along the paper. We denote the two disjoint sets of agents by $W$ and $M$ and refer to their elements as women and men, respectively; the set $W \cup M$ is denoted by $I$ and its elements are called individuals. We assume that the sizes of the two sets $W$ and $M$ are equal.\footnote{Most applications involve sets of agents having different sizes; however, in some circumstances assuming the same size is not a restriction. Consider, for instance, a university which offers to its near-graduate students the opportunity to experience a short-term work experience in selected public and private companies. Assume further that there are $n$ internship programs available and that each program is interested to select exactly one student. If the university preliminarily shortlists $n$ students among the ones applying, we face a situation where $n$ students have to be assigned to $n$ internship programs. A good assignment should consider both the preferences of each student over the companies who offer such internships and the preferences of each company over the students.} Moreover, the preference relation of each woman is modeled by a linear order on the set $M$; analogously, the preference relation of each man is modeled by a linear order on the set $W$. Since the two sets $W$ and $M$ are fixed, a marriage model is described by a profile of linear orders, one for each individual. Given a marriage model, one is interested to determine a matching, that is,  a bijective function from $I$ to $I$ which associates each woman with a man, each man with a woman, and has the additional property of being equal to its inverse.

Stability is a highly desirable property a matching may meet. A matching is stable if there is no blocking pair, namely, a woman and a man who both prefer each other to their current partners in the matching. In their seminal paper, Gale and Shapley (1962) proved that every marriage model admits a stable matching and also described an algorithm that finds such a matching. The algorithm involves more rounds. In the first round, every woman makes a proposal to the man she prefers most; every man who receives proposals from different women chooses his most preferred woman and gets temporarily matched with her, while all the other women who proposed to him are rejected.  In each subsequent round, each unmatched woman makes a proposal to her most-preferred man to whom she has not yet proposed (regardless of whether that man is already matched), and each man who receives some proposals gets matched to the woman he prefers most among the ones who proposed to him and the woman he has been already matched to, if any. In particular, if he has a provisional partner and he prefers another woman to her, he rejects the provisional partner who becomes unmatched again. This process is repeated until all women have been matched to a partner. The role of the two groups of individuals can be reversed with men proposing matches and women deciding whether to accept or reject each proposal. In general, the stable matchings produced by the two versions of the algorithm are different. Gale and Shapley also proved that the stable matching generated by the algorithm when women propose is optimal for all the women, in the sense that it associates each woman with the best partner she can have among all the stable matchings. For this reason, it is called woman-optimal stable matching. However, it is the worst stable matching for the other side of the market: indeed, it associates each man with the worst partner he can get among all the stable matchings.\footnote{This property is a consequence of a more general result due to Knuth (1976). See, also, Roth and Sotomayor (1990), Theorem 2.13 and Corollary 2.14.} Symmetrically, the man-optimal stable matching is the stable matching that results from the algorithm when the men propose: it is the best stable matching for men and the worst stable matching for women.

The purpose of matching theory is, however, more general than determining a set of matchings for a specific marriage model. In fact, its objective is to determine a method able to select a sensible set of matchings for any conceivable marriage model, namely a correspondence that associates a set of matchings with each preference profile. Such a correspondence is called a matching mechanism. Hence, a matching mechanism operates as a centralized clearinghouse that collects the preferences  of all market participants and provides a set of matchings; whether those matchings can be determined by using efficient algorithms is an important question for practical applications that has received considerable attention. Several properties may be imposed on a matching mechanism.
First of all, a matching mechanism is required to be non-empty valued. Especially when used for practical applications, a matching mechanism should also be resolute, that is, it should be able to select a unique matching for each preference profile. Stability is a further desirable property for matching mechanisms, particularly from a practical point of view: a matching  mechanism is stable if it selects a set of stable matchings for each preference profile. Among stable matching  mechanisms, the two mechanisms $GS_w$ and $GS_m$, which associate  with each preference profile the woman-optimal stable matching and the man-optimal stable matching respectively, are very popular and have been implemented in practice, for instance, in the National Resident Matching Program (NRMP) to match doctors to residency programs or fellowships and also in some school districts to match students to high schools (Roth, 1984, Abdulkadiro\u{g}lu and S\"{o}nmez, 2003). In terms of applications, there is evidence that stable matching mechanisms perform better than unstable ones in those situations where agents voluntarily accept the proposed matching; in fact, in many cases, matching  mechanisms that are not stable had largely failed and been abandoned while the stable ones succeeded and survived (see, for example, Roth, 1991). However, when matchings are imposed to agents, other properties involving equity and fairness appear more compelling than stability. Indeed, a matching mechanism is expected to be non-discriminatory towards agents, that is, some form of equity should be satisfied. The two mechanisms $GS_w$ and $GS_m$ clearly fail such objective because they favor one side of the market (women and men, respectively) over the other. This has motivated the design of alternative matching mechanisms able to select stable matchings other than the woman-optimal and the man-optimal stable matchings in order to reduce the conflict of interests between women and men, and to treat the two parts more fairly. We can mention, among others, the Sex-Equal matching mechanism (Gusfield and Irving, 1989 and  Romero-Medina, 2001) and the Egalitarian Stable matching mechanism (Gusfield and Irving, 1989).\footnote{For the definitions of these two mechanisms see Section \ref{ResProb}.}   All these mechanisms share a common approach: first, a measure is introduced to quantify the fairness of a matching; then, such measure is optimized (usually, minimized due to its interpretation) on the set of stable matchings.  A recap of several of these matching mechanisms can be found in Cooper and Manlove (2020).

In this paper, we take a different approach to fairness based on the idea that individuals should be treated independently on their identities: a mechanism is fair if a change in the identities of the individuals results into the same change in the output.\footnote{Our interpretation of fairness is somehow close to the concept of procedural fairness analyzed in Klaus and Klijn (2006) and based on probabilistic considerations: a (random) mechanism is procedurally fair whenever each agent has the same probability to move at a certain point in the procedure that determines the final probability distribution.}
Given the bilateral nature of the model, equity may concern either just agents belonging to the same group (women or men) or the totality of individuals. Indeed, in many situations, it is natural to require that no woman should gain a systematic advantage over the other women on the basis of her identity and, symmetrically, no man should gain a systematic advantage on the other men on the basis of his identity. In other words, an equal-treatment of individuals should be guaranteed within each group. This property is called anonymity and has been considered in the literature on two-sided matchings (see, for instance, Sasaki and Toda, 1992,  and Ha{\l}aburda, 2010). On the other hand, in other circumstances, it could be natural to impose a form of equity across the two sets of individuals by requiring that neither women nor men should have an advantage on the other group solely on the basis of their gender. A fairness notion of the latter kind has been firstly introduced in Masarani and Gokturk (1989) under the name of gender indifference: roughly speaking, it states that if we swap agents belonging to different sets in a specific way (that is, we rename women as men and men as women according to a given fixed rule), this should result in a corresponding change of names in the matchings selected by the matching mechanism.\footnote{Masarani and Gokturk (1989) only consider resolute matching mechanisms (or, matching algorithms, as they call them). Moreover, they account only for a specific change in the identities of individuals, namely, if $W=\{1,\ldots, n\}$ and $M=\{n+1,\ldots, 2n\}$, then woman 1 becomes man $n+1$ and vice versa, woman 2 becomes man $n+2$ and vice versa, and so on. Endriss (2020) and Root and Bade (2023) also analyse gender indifference, with the latter authors referring to this property as weak gender-neutrality.} A more general version of gender indifference has been provided by \"{O}zkal-Sanver (2004) under the name of gender fairness. Gender fairness, which implies anonymity\footnote{See Proposition 3.1 in \"{O}zkal-Sanver (2004). See also Section \ref{SymMM} in this paper.}, takes into account more general changes in the identities of individuals: women are renamed as men and men are renamed as women in any possible fashion.

In this paper we introduce a general notion of fairness for matching mechanisms that we call symmetry. Anonymity, gender indifference and gender fairness can be seen as special instances of this notion. In order to provide an informal definition of symmetric mechanisms, we first need some preliminary notions about how to model changes both in the identities and, subsequently, in the preferences of the individuals. 

First of all, we use permutations over the set $I$ of individuals, namely bijective functions from $I$ to $I$,  in order to model changes in the individual identities. In particular, we are interested in changes in the identities that account for the presence of two different types of individuals in the market. Hence, we focus on permutations $\varphi$ over $I$ that keep the partition $\{W,M\}$ of $I$ fixed, that is, $\{\varphi(W), \varphi(M)\}= \{W,M\}$. We denote by $G^*$ the set of such permutations. A permutation $\varphi$ in $G^*$ may refer to a change in individuals' identities either within the sets $W$ and $M$ or across the sets $W$ and $M$. In the former case, $\varphi$ is such that $\varphi(W)=W$ and $\varphi(M)=M$; we denote by $G$ the set of such permutations. In the latter case, $\varphi$ is such that $\varphi(W)=M$ and $\varphi(M)=W$.

Second, we account for the fact that every permutation in $G^*$ also affects preference profiles, namely, the complete list of individual preference relations. Suppose, for instance, that $W=\{1,2,3\}$ and $M=\{4,5,6\}$ and let $\varphi\in G^*$ be the permutation that switches $1$ and $4$, $2$ and $5$, $3$ and $6$. If woman 1's preferences on $M$ are $[5,6,4]$\footnote{Throughout the paper we represent linear orders with ordered lists with obvious meaning: for instance, the writing $[5,6,4]$ representing woman 1's preferences on $M$ means that woman 1 prefers man 5  to man 6, and man 6 to man 4.} in a given preference profile, when individual identities are changed according to $\varphi$, we get that man 4's preferences on $W$ are $[2,3,1]$ in the preference profile modified by $\varphi$. Given a preference profile $p$ and a permutation $\varphi\in G^*$, we denote by  $p^\varphi$ the preference profile obtained from $p$ by changing individuals' identities according to $\varphi$. 

The notion of symmetry that we introduce is the following. Consider a subset $U$ of $G^*$. We say that a matching mechanism $F$ is $U$-symmetric if, for every preference profile $p$ and every $\varphi\in U$, the set of matchings associated with $p^{\varphi}$ by $F$ corresponds to the set of the matchings associated with $p$ by $F$, each of them modified according to $\varphi$. Roughly speaking, $F$ is $U$-symmetric whenever a change in individuals' names in the input according to a permutation in $U$ entails the same change of names in the matchings of the output.

The definition of $U$-symmetry is very general and allows to model diverse levels of fairness depending on which subset $U$ of $G^*$ is considered. Moreover, the notion of $U$-symmetry is able to recover the notions of fairness that we have listed before by choosing suitable sets $U$. In fact, gender fairness coincides with the notion of $(G^*\setminus G)$-symmetry; anonymity is equivalent to $G$-symmetry; gender indifference coincides with $\{\varphi\}$-symmetry, where $\varphi$ is a matching, namely a permutation such that $\varphi(W)=M$ and $\varphi(M)=W$ and that coincides with its inverse. It is also worth mentioning that if $U\subseteq V\subseteq G^*$, then $V$-symmetry implies $U$-symmetry and so $G^*$-symmetry is the strongest possible level of symmetry a matching mechanism may meet.

By analyzing the matching mechanisms most frequently used in the literature, a clear contrast between resoluteness and symmetry emerges. Indeed, the matching mechanisms $GS_w$ and $GS_m$ are resolute, $G$-symmetric (i.e. anonymous) but, as one may expect, fail to be $G^*$-symmetric. On the other hand, the mechanism that associates with any preference profile the set of the stable matchings is $G^*$-symmetric but far from being resolute. Analogously, the Sex-Equal and the Egalitarian Stable matching mechanisms, introduced by Gusfield and Irving (1989) to reduce the conflict of interests between the two parts by selecting suitable stable matchings, are both $G^*$-symmetric but not resolute.\footnote{See Propositions \ref{STsym} and \ref{SEsym} in Appendix \ref{GT2}.}

On the basis of those observations, in this paper we undertake the following research project: under the assumption that $W$ and $M$ have the same size, studying the existence of matching mechanisms that are resolute, satisfy some form of symmetry and, possibly, fulfill some optimality properties like, for instance, stability.

The following two strong impossibility and possibility theorems, which are the main findings in our paper, are about the strongest possible level of symmetry and do not involve any optimality condition: 
\begin{itemize}
  \item[(A)] If the size of $W$ is even, then there exists no resolute and $G^*$-symmetric matching mechanism (Theorem \ref{n2}).
  \item[(B)] If the size of $W$ is odd, then there exists a resolute and $G^*$-symmetric matching mechanism (Theorem \ref{main-mathc}).
\end{itemize}
Of course, given the possibility result (B), a natural question is whether the existence of a resolute and $G^*$-symmetric matching mechanism continues to be guaranteed when one also asks for some optimality condition. We provide a negative answer; indeed, we prove that resoluteness and $G^*$-symmetry together cannot coexist either with a very weak optimality condition that we introduce in our paper and call minimal optimality.\footnote{A matching $\mu$ is minimally optimal if there is at least an individual who does not get her/his worst choice under $\mu$ (Definition \ref{DefOptCond}). A matching mechanism is minimally optimal if it only selects minimally optimal matchings (Definition \ref{mm-prop}).}
\begin{itemize}
\item[(C)] There exists no resolute, $G^*$-symmetric and minimally optimal matching mechanism (Theorem \ref{no-sym-mo}).
\end{itemize}
Since minimal optimality is weaker than both Pareto optimality and weakly Pareto optimality\footnote{A matching $\mu$ is Pareto optimal if there is no other matching $\mu'$ that makes at least one individual better off and no individuals worse off; weakly Pareto optimal if there is no other matching $\mu'$ that makes all the individual better off (Definition \ref{DefOptCond}). A matching mechanism is Pareto optimal [weakly Pareto optimal] if it only selects Pareto optimal [weakly Pareto optimal] matchings (Definition \ref{mm-prop}).} which, in turn, are weaker than stability (see Proposition \ref{implications}), the previous result implies that there are no resolute, $G^*$-symmetric and stable [Pareto optimal; weakly Pareto optimal] matching mechanisms.

From (C), it turns out that resolutness and $G^*$-symmetry are basically incompatible with any type of optimality condition. Moreover, for every $U\subseteq G$, that is, for every set of permutations that exchange individuals' identities only within the two groups $W$ and $M$, resoluteness and $U$-symmetry are consistent with stability. Indeed, as already observed, $GS_w$ and $GS_m$ are resolute, stable and $G$-symmetric, and then they are $U$-symmetric, as well.
We investigate then whether some optimality conditions may be consistent with resoluteness and a minimal level of symmetry among those that allow for an exchange of individuals' identities between the two groups $W$ and $M$, that is, $\{\varphi\}$-symmetry, where $\varphi\in G^*\setminus G$ is a fixed matching.
The results we found are summarized below:
\begin{itemize}
\item[(D)] There exists a resolute, $\{\varphi\}$-symmetric and weakly Pareto optimal matching mechanism (Theorem \ref{existence3}).
\item[(E)] If the size of $W$ is 2, then there exists a resolute, $\{\varphi\}$-symmetric and stable matching mechanism (Theorem \ref{existence2}).
\item[(F)] If the size of $W$ is at least 3, then there exists no resolute, $\{\varphi\}$-symmetric and stable matching mechanism (Theorem \ref{endriss}).
\end{itemize}
Thus, while (D) shows that resoluteness and $\{\varphi\}$-symmetry are consistent with weak Pareto optimality, they are not consistent with stability, unless the number of women and men is two.

We remark that some impossibility results concerning stability, resoluteness and a sort of fairness have been provided in the literature about matching mechanisms. Masarani and Gokturk (1989) basically show that resoluteness, $G^*$-symmetry, stability and a further axiom called maximin optimality cannot coexist. Endriss (2020) proves instead (F) by using tools from logic and computer science. In particular, he first shows that stability and gender indifference can be encoded in a specific formal language; then, he proves that an impossibility result involving those properties can be reduced to an impossibility result where the sizes of $W$ and $M$ is three, that this latter case can be fully automated using Satisfiability (SAT) solving technology, and that the automated proof can be made human-readable. We present a proof of (F) which differs from the one provided by Endriss.

In fact, our paper is based on a strongly innovative methodology. Specifically, all our main results are proved by exploiting notions and techniques from group theory. While an algebraic approach based on symmetric groups and groups actions has been already successfully used in social choice literature (E\u gecio\u glu, 2009, E\u gecio\u glu and Giritligil, 2013,  Bubboloni and Gori, 2014, 2015, 2016, 2021, 2023, Do\u gan and Giritligil, 2015,  Bartholdi et al., 2021,  Kivinen, 2023, 2024), at the best of our knowledge, this is the first time that such an approach has been employed in matching theory to provide general possibility and impossibility results.

Concepts and theorems from group theory turn out to be powerful tools in situations where permutations play a important role. Since matchings are themselves special permutations of the set $I$ and the various levels of fairness a matching mechanism may meet can be described through suitable sets of permutations of the set $I$, the application of group theory within matching theory emerges naturally. The use of group theory allows for new methods to prove known results and helps in finding new results. Indeed, as already said, in our paper we provide a new proof of (F) and, more significantly, we also provide original results that likely cannot be proved without a deep understanding of suitable permutation groups. For example, the proof of (B) is based, among other things, on a detailed analysis of the centralizers of the semiregular subgroups of $G^*$ (Theorem \ref{main-semireg2}), and it seems difficult to figure out an alternative proof. Furthermore, group theory improves our understanding of the concept of symmetry. In particular, it is possible to prove that a matching mechanism is symmetric with respect to a set $U$ of permutations of $I$ if and only if it is symmetric with respect to the group generated by $U$ (Proposition \ref{iff}). That fact has significant consequences: for instance, using non-obvious results from group theory, it can be shown that the combination of anonymity and gender indifference is equivalent to $G^*$-symmetry, and that gender fairness is also equivalent to $G^*$-symmetry. Moreover, knowing that only subgroups matter, we get that the possible levels of symmetry of a matching mechanism are as many as the subgroups of $G^*$.

The paper is organized as follows. Section 2 contains some basic preliminaries about groups, permutations and relations needed for the model. In Section 3 we illustrate the setup and the key notion of symmetric mechanism. In Section \ref{reserach-problem} we state our research problem and we present the main results of the paper. A generalization to the model with outside option is presented in Section 5. Finally, Section 6 concludes. The appendices collect several notions and results about matchings, matching mechanisms and group theory and most of the proofs. Appendix \ref{semi-reg}, in particular, is crucial as it presents technical results from group theory which play a central role in the proofs of the two main theorems of our paper.

\section{Basic preliminaries}
Throughout the paper, $\mathbb{N}$ denotes the set of positive integers.
Given $k\in \mathbb{N}$, we set $ \ldbrack k \rdbrack\coloneq \{n\in \mathbb{N}: n\leq k\}$.

\subsection{Groups and permutations}

A finite group is a finite set endowed with an operation that is associative, admits a neutral element  and has the property that each element admits an inverse. In a generic group the neutral element is denoted by $1$. However, for specific groups the notation can be specialized.\footnote{We refer to Milne (2021) for a reader-friendly but wide introduction to group theory, which turns out to be substantially exhaustive for the purposes of our paper.} 

Let $X$ be a finite set. $\mathrm{Sym}(X)$ denotes the group of the bijective functions from $X$ to itself with the operation defined, for every $\varphi_1,\varphi_2\in \mathrm{Sym}(X)$, by the right-to-left composition $\varphi_1\varphi_2\in \mathrm{Sym}(X)$.\footnote{One of the reasons why we refer to Milne (2021) instead of to more classical references, such as Robinson (1996), is that Milne adopts the right-to-left composition while  usually the left-to-right is used.} The neutral element of $\mathrm{Sym}(X)$ is the identity function on $X$, denoted by $id_X$. 
$\mathrm{Sym}(X)$ is called the symmetric group on $X$ and its elements are called permutations of $X$.

Given $k\in \mathbb{N}$ with $2\leq k\leq |X|$, we say that $\varphi\in \mathrm{Sym}(X)$ is a $k$-cycle if
there exist distinct $x_1,\dots, x_k\in X$ such that $\varphi(x_i)=x_{i+1}$ for all $i\in \ldbrack k-1 \rdbrack$, $\varphi(x_k)=x_{1}$, and $\varphi(x)=x$ for all $x\in X\setminus \{x_1,\dots, x_k\}$. In such a case, $\varphi$ is denoted by $(x_1\dots x_k)$, the set $\{x_1,\dots, x_k\}$ is called the support of $\varphi$ and $k$ is called the length of $\varphi$.
For instance, if $X=\ldbrack 5 \rdbrack$, $\varphi=(134)$ denotes the element of $\mathrm{Sym}(X)$ such that $\varphi(1)=3$, $\varphi(2)=2$, $\varphi(3)=4$, $\varphi(4)=1$, $\varphi(5)=5$, and is a $3$-cycle. We say that a permutation of $X$ is a cycle if it is a $k$-cycle for some $k\in \mathbb{N}$ with $2\leq k\leq |X|$. A result on permutations (see Proposition 4.26 in Milne, 2021) assures that any permutation of $X$ different from $id_X$ can be expressed, in a unique way up to reordering, as the product of commuting cycles whose supports are pairwise disjoint. Such a product is called  the cycle decomposition of the permutation. For instance, if $X=\ldbrack 6 \rdbrack$ and $\varphi\in \mathrm{Sym}(X)$ is defined by $\varphi(1)=3$, $\varphi(2)=4$, $\varphi(3)=1$, $\varphi(4)=6$, $\varphi(5)=5$, $\varphi(6)=2$, we have that $\varphi=(13)(246)=(246)(13)$. In what follows, we always write permutations different from the identity using their cycle decomposition.

\subsection{Relations}\label{sec-relation}

Let $X$ be a nonempty and finite set. A relation on $X$ is a subset of $X^2$. The set of the relations on $X$ is denoted by $\mathbf{R}(X)$. 

Let $R\in \mathbf{R}(X)$. Given $x,y\in X$, we usually write $x\succeq_R y$ instead of $(x,y)\in R$ and  $x\succ_R y$ instead of $(x,y)\in R$ and $(y,x)\notin R$. We say that $R$ is complete if, for every $x,y\in X$, $x\succeq_R y$ or $y\succeq_R x$;
antisymmetric if, for every $x,y\in X$, $x\succeq_R y$ and $y\succeq_R x$ imply $x=y$;
transitive if, for every $x,y,z\in X$, $x\succeq_Ry$ and $y\succeq_R z$ imply $x\succeq_R z$;
a linear order if it is complete, transitive and antisymmetric.
The set of linear orders on $X$ is denoted by $\mathbf{L}(X)$.

Let $R\in \mathbf{L}(X)$. Given $x\in X$, we set $\mathrm{Rank}_R(x)=|\{y\in X: y\succeq_{R} x\}|$. If $|X|=n$, we represent $R$ by an ordered list $[x_1,\ldots, x_n]$, where $X=\{x_1,\ldots, x_n\}$ and, for every $i,j\in \ldbrack n \rdbrack$,  $x_i\succeq_R x_j$ if and only if $i\le j$. Such a representation completely encodes the relation $R$. Note also that if $R=[x_1,\ldots, x_n]$, then, for every $i\in \ldbrack n \rdbrack$, $\mathrm{Rank}_R(x_i)=i$. Thus, for every $x\in X$, the number $\mathrm{Rank}_R(x)$ represents the position of $x$ in the ordered list representing $R$.

Let $R\in \mathbf{R}(X)$, $Y$ be a finite set with $X\subseteq Y$, and $\varphi\in\mathrm{Sym}(Y)$. We denote by
$\varphi R$ the relation on $ \varphi(X)$ defined by
\begin{equation}\label{prodottorel}
\varphi R \coloneq\left\{(x,y)\in \varphi(X)^2: (\varphi^{-1}(x),\varphi^{-1}(y))\in R\right\}.
\end{equation}
Note that, for every $x,y\in X$, $(x,y)\in R$ if and only if $(\varphi(x),\varphi(y))\in\varphi R$.
It can be easily checked that if $\varphi_1,\varphi_2\in\mathrm{Sym}(Y)$, then $(\varphi_1\varphi_2)R=\varphi_1(\varphi_2 R)$. Hence, the writing $\varphi_1\varphi_2 R$ is not ambiguous.

The following lemma will be fundamental to set up our model in the next section.

\begin{lemma}\label{phiRR}
Let $X$ and $Y$ be nonempty and finite sets with $X\subseteq Y$, $R\in \mathbf{L}(X)$ and $\varphi\in\mathrm{Sym}(Y)$. 
Then $\varphi R\in \mathbf{L}(\varphi(X))$; $\varphi R=R$ if and only if, for every $x\in X$, $\varphi(x)=x$; if 
$|X|=m$ and $R=[x_1,\ldots, x_m]$, where $\{x_1,\ldots, x_m\}=X$, then $\varphi R=[\varphi(x_1),\ldots, \varphi(x_m)]$.
\end{lemma}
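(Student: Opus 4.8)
The plan is to verify the three assertions in turn, all directly from the definition \eqref{prodottorel} of $\varphi R$ together with the bijectivity of $\varphi$. First I would show $\varphi R\in\mathbf{L}(\varphi(X))$. By construction $\varphi R\subseteq\varphi(X)^2$, so it is a relation on $\varphi(X)$; it remains to check completeness, transitivity and antisymmetry. Fix $x',y'\in\varphi(X)$ and write $x'=\varphi(x)$, $y'=\varphi(y)$ with $x,y\in X$ (possible and unique since $\varphi$ restricted to $X$ is a bijection onto $\varphi(X)$). Using the observation recorded just after \eqref{prodottorel}, namely that $(a,b)\in R\iff(\varphi(a),\varphi(b))\in\varphi R$ for $a,b\in X$, each of the three properties of $\varphi R$ translates back into the corresponding property of $R$: completeness of $R$ at $(x,y)$ gives $x'\succeq_{\varphi R}y'$ or $y'\succeq_{\varphi R}x'$; antisymmetry transports because $\varphi(x)=\varphi(y)$ forces $x=y$; transitivity transports by picking a third element $z'=\varphi(z)$. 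Since $R\in\mathbf{L}(X)$, all three hold, so $\varphi R\in\mathbf{L}(\varphi(X))$.

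Next I would prove the equivalence $\varphi R=R\iff\varphi(x)=x$ for all $x\in X$. The direction ``$\Leftarrow$'' is immediate: if $\varphi$ fixes $X$ pointwise then $\varphi(X)=X$ and, for $x,y\in X$, $(\varphi^{-1}(x),\varphi^{-1}(y))=(x,y)$, so \eqref{prodottorel} yields $\varphi R=R$. For ``$\Rightarrow$'', suppose $\varphi R=R$; then in particular $\varphi(X)=X$ as the ground set of the relation, so $\varphi$ restricts to a permutation of $X$. Take the list representation $R=[x_1,\dots,x_m]$; the maximal element of $R$ (the unique $x\in X$ with $\mathrm{Rank}_R(x)=1$, i.e. $x\succeq_R y$ for all $y$) must be sent by $\varphi$ to the maximal element of $\varphi R=R$, hence fixed; removing it and arguing by induction on $m$ (or, equivalently, comparing $\mathrm{Rank}$ values, since $\varphi R=R$ forces $\mathrm{Rank}_R(\varphi(x))=\mathrm{Rank}_R(x)$ for every $x$, and $\mathrm{Rank}_R$ is injective on $X$) gives $\varphi(x_i)=x_i$ for all $i$.

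Finally, for the list-representation claim, assume $R=[x_1,\dots,x_m]$ with $\{x_1,\dots,x_m\}=X$; I want $\varphi R=[\varphi(x_1),\dots,\varphi(x_m)]$. By the first part $\varphi R\in\mathbf{L}(\varphi(X))$ and $\varphi(X)=\{\varphi(x_1),\dots,\varphi(x_m)\}$ with the $\varphi(x_i)$ pairwise distinct, so $\varphi R$ does have a list representation on these $m$ elements; it suffices to check the order is the claimed one. For $i,j\in\ldbrack m\rdbrack$ with $i\le j$ we have $x_i\succeq_R x_j$, hence $\varphi(x_i)\succeq_{\varphi R}\varphi(x_j)$ by the transport observation, which is exactly the statement that $[\varphi(x_1),\dots,\varphi(x_m)]$ represents $\varphi R$. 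I do not expect any real obstacle here: the only thing to be careful about is keeping the bijection $\varphi|_X:X\to\varphi(X)$ explicit so that every element of $\varphi(X)$ is written as $\varphi$ of a unique element of $X$, which is what makes the ``transport'' equivalence usable in both directions.
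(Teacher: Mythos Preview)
Your proposal is correct and uses the same transport observation $(a,b)\in R\iff(\varphi(a),\varphi(b))\in\varphi R$ that drives the paper's proof. The only difference is organizational: the paper first establishes the list identity $\varphi R=[\varphi(x_1),\ldots,\varphi(x_m)]$ and reads off both $\varphi R\in\mathbf{L}(\varphi(X))$ and the ``$\Rightarrow$'' direction (by equating the two lists $[x_1,\ldots,x_m]=[\varphi(x_1),\ldots,\varphi(x_m)]$) as immediate consequences, whereas you verify the linear-order axioms and the fixed-point equivalence separately before turning to the list representation. Both routes are short and equivalent; the paper's ordering is marginally more economical, while yours makes the individual claims more self-contained.
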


\begin{proof}
Assume $|X|=m$. Since $R\in \mathbf{L}(X)$, we know that $R$ can be represented by an ordered list $[x_1,\ldots, x_n]$, where $\{x_1,\ldots, x_m\}=X$. 
Of course, $\varphi(X)=\{\varphi(x_1),\ldots, \varphi(x_m)\}$. Consider $i,j\in \ldbrack m \rdbrack$. We know that $x_i\succeq_R x_j$ if and only if $i\le j$. By the definition of $\varphi R$, we deduce that $\varphi(x_i)\succeq_{\varphi R} \varphi(x_j)$ if and only if $x_i\succeq_{R} x_j$. Thus, we have that $\varphi(x_i)\succeq_{\varphi R} \varphi(x_j)$ if and only if $i\le j$. Then, we conclude that $\varphi R\in \mathbf{L}(\varphi(X))$ and $\varphi R=[\varphi (x_1),\ldots, \varphi (x_m)]$. In particular, if, for every $x\in X$, $\varphi(x)=x$, then we get $\varphi R=R$. Assume now, conversely, that $\varphi R=R$. Then, the relations $\varphi R$ and $R$ must be defined on the same set and so $\varphi(X)=X$. Moreover, they must have the same representation as ordered lists. Thus, we have that $[x_1,\ldots, x_m]=[\varphi(x_1),\ldots, \varphi(x_m)]$, which implies that $\varphi(x_i)=x_i$ for all $i\in  \ldbrack m \rdbrack$, that is, $\varphi(x)=x$ for all $x\in X$.
\end{proof}

\section{Setup of the model}\label{setup}

\subsection{Preference profiles}\label{model}

Let $W$ and $M$ be two finite disjoint sets with $|W|=|M|=n\ge 2$ and let $I=W\cup M$.
We refer to the sets $W$, $M$ and $I$ as the set of women, the set of men and the set of individuals, respectively. For simplicity, we assume $W=\{1,\ldots,n\}$ and $M=\{n+1,\ldots,2n\}$. In order to simplify the reading, we adopt the following rule throughout the paper (with few exceptions): the letter $x$ is used to denote the elements of $W$, the letter $y$  is used to denote the elements of $M$ and the letter $z$  is used to denote the elements of $I$, that is, individuals for whom it is not important to specify or it is not known  whether they belong to $W$ or $M$.

Each woman has preferences on the set $M$ and each man has preferences on the set $W$. We assume that the preferences of women are represented by linear orders on $M$ and the preferences of men are represented by linear orders on $W$. A preference profile $p$ is a function from $I$ to $\mathbf{L}(W)\cup \mathbf{L}(M)$ such that, for every $x\in W$,  $p(x)\in \mathbf{L}(M)$ and, for every $y\in M$, $p(y)\in \mathbf{L}(W)$. We denote by $\mathcal{P}$ the set of preference profiles. We represent a preference profile $p\in\mathcal{P}$ by the table
\[
\begin{array}{|ccc||ccc|}
\hline
1&\ldots&n&n+1&\ldots&2n\\
\hline
\hline
p(1)&\ldots&p(n)&p(n+1)&\ldots&p(2n)\\
\hline
\end{array},
\]
where in the first row there are the names of individuals and in the second row the corresponding preferences.
For instance, if $W=\{1,2,3\}$ and $M=\{4,5,6\}$, the table
\begin{equation}\label{tablep}
\begin{array}{|ccc||ccc|}
\hline
1&2&3&4&5&6\\
\hline
\hline
4&4&6&2&3&3\\
5&6&5&1&1&2\\
6&5&4&3&2&1\\
\hline
\end{array}
\end{equation}
represents the preference profile $p$ such that
\[
p(1)=[4,5,6]\in \mathbf{L}(M), \; p(2)=[4,6,5]\in \mathbf{L}(M), \; p(3)=[6,5,4]\in \mathbf{L}(M),
\]
\[
p(4)=[2,1,3]\in \mathbf{L}(W), \; p(5)=[3,1,2]\in \mathbf{L}(W), \; p(6)=[3,2,1]\in \mathbf{L}(W).
\]
Note that, in the first row of the table representing a preference profile, the elements of $I$ are written in increasing order. However, a reordering of the columns is always allowed. For example, the table in \eqref{tablep} and the tables
\[
\begin{array}{|ccc||ccc|}
\hline
3&2&1&4&6&5\\
\hline
\hline
6&4&4&2&3&3\\
5&6&5&1&2&1\\
4&5&6&3&1&2\\
\hline
\end{array}\quad\quad 
\begin{array}{|ccc||ccc|}
\hline
4&5&6&2&1&3\\
\hline
\hline
2&3&3&4&4&6\\
1&2&1&6&5&5\\
3&1&2&5&6&4\\
\hline
\end{array}
\]
all represent the same preference profile.

\subsection{Matchings} \label{Matchings}

A matching is a permutation $\mu\in \mathrm{Sym}(I)$ such that, for every $x\in W$, $\mu(x)\in M$; for every $y\in M$, $\mu(y)\in W$; for every $z\in I$, $\mu(\mu(z))=z$. The set of matchings is denoted by $\mathcal{M}$.
Since $y=\mu (x)$ implies $\mu(y)=\mu(\mu (x))=x$, $\mu$ decomposes into the product of $n$ disjoint $2$-cycles, each of them having the form $(z\  \mu(z))$. Hence, the cycle decomposition of $\mu$ clearly  illustrates what are the couples formed under the matching $\mu$.

Consider, for instance,  $W=\{1,2,3\}$ and $M=\{4,5,6\}$. Then we have that 
\[
\mathcal{M}=\Big\{(14)(25)(36),(14)(26)(35),(16)(25)(34),(15)(24)(36),(15)(26)(34),(16)(24)(35)\Big\}.
\]
Thus, for example, the matching $(14)(25)(36)$ matches woman $1$ with man $4$, woman $2$ with man $5$ and woman $3$ with man $6$.

Let us introduce some optimality conditions a matching may meet. The first three are well known.
The last one, to the best of our knowledge, is new.
\begin{definition}\label{DefOptCond} {\rm Let $p\in \mathcal{P}$ and $\mu\in\mathcal{M}$. We say that $\mu$ is:
\begin{itemize}
\item {\it stable} for $p$ if there is no pair $(x,y)\in W\times M$ such that $y\succ_{p(x)} \mu(x)$ and $x\succ_{p(y)} \mu(y)$;
\item {\it Pareto optimal} for $p$ if there is no matching $\mu'\in\mathcal{M}$ such that, for every $z \in I$, $\mu'(z) \succeq_{p(z)} \mu(z)$  and, for some  $z^* \in I$, $\mu'(z^*) \succ_{p(z^*)} \mu(z^*)$;
\item {\it weakly Pareto optimal} for $p$ if there is no matching $\mu'\in\mathcal{M}$ such that, for every $z \in I$, $\mu'(z) \succ_{p(z)} \mu(z)$;
\item {\it minimally optimal} for $p$ if there exists $z\in I$ such that $\mathrm{Rank}_{p(z)}(\mu(z))<n$.
\end{itemize}}
\end{definition}
Thus, a matching is stable if there is no pair of individuals formed by a woman and a man such that both prefer each other to the partners assigned to them by the matching; such pairs are called, in the literature, \lq\lq blocking pairs\rq\rq. A matching is not Pareto optimal for $p$ if there exists another matching making some individual better off without making any other individual worse off;  not weakly Pareto optimal for $p$ if there exists another matching making every individual better off; not minimally optimal for $p$ if all the individuals are matched with their least preferred choice. 
It is worth mentioning that, given a preference profile $p$, there exists at most one matching that is not minimally optimal for $p$ (see Proposition \ref{soloMO} in Appendix \ref{OptCond}). The following proposition describes the relation among the properties in Definition \ref{DefOptCond}. Its proof is in Appendix \ref{OptCond}.

\begin{proposition}\label{implications}
Let $p\in \mathcal{P}$ and $\mu\in\mathcal{M}$. Then the following facts hold true:
\begin{itemize}
\item[$(i)$]if $\mu$ is stable for $p$, then $\mu$ is Pareto optimal for $p$;
\item[$(ii)$]if $\mu$ is Pareto optimal for $p$, then $\mu$ is weakly Pareto optimal for $p$;
\item[$(iii)$]if $\mu$ is weakly Pareto optimal for $p$, then $\mu$ is minimally optimal for $p$.
\end{itemize}
\end{proposition}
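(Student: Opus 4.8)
The plan is to establish the three implications one at a time, handling $(ii)$ first since it is purely definitional, then $(i)$, and finally $(iii)$, which contains the only non-routine step. For $(ii)$, I would argue contrapositively: if $\mu$ is not weakly Pareto optimal for $p$, fix $\mu'\in\mathcal{M}$ with $\mu'(z)\succ_{p(z)}\mu(z)$ for every $z\in I$. Since $I$ is nonempty, this same $\mu'$ satisfies $\mu'(z)\succeq_{p(z)}\mu(z)$ for all $z\in I$ together with $\mu'(z^*)\succ_{p(z^*)}\mu(z^*)$ for (any) $z^*\in I$, so $\mu$ is not Pareto optimal for $p$. Hence Pareto optimality implies weak Pareto optimality.

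For $(i)$, again by contraposition, assume $\mu$ is not Pareto optimal for $p$ and pick $\mu'\in\mathcal{M}$ with $\mu'(z)\succeq_{p(z)}\mu(z)$ for all $z\in I$ and $\mu'(z^*)\succ_{p(z^*)}\mu(z^*)$ for some $z^*\in I$. Set $z^{**}:=\mu'(z^*)$, so $\{z^*,z^{**}\}$ consists of one woman and one man and $\mu'(z^{**})=z^*$. From the Pareto improvement, $z^*=\mu'(z^{**})\succeq_{p(z^{**})}\mu(z^{**})$; moreover $z^*\neq\mu(z^{**})$, since $z^*=\mu(z^{**})$ would force $z^{**}=\mu(z^*)$, contradicting $\mu'(z^*)=z^{**}\succ_{p(z^*)}\mu(z^*)$. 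Because $p(z^{**})$ is a linear order, $\succeq$ together with $\neq$ yields $z^*\succ_{p(z^{**})}\mu(z^{**})$. Writing $(x,y)$ for the pair $\{z^*,z^{**}\}$ with $x\in W$ and $y\in M$, we now have both $y\succ_{p(x)}\mu(x)$ and $x\succ_{p(y)}\mu(y)$ (these are the two displayed strict preferences, read off whichever of $z^*,z^{**}$ is the woman), so $(x,y)$ is a blocking pair and $\mu$ is not stable for $p$.

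For $(iii)$, by contraposition assume $\mu$ is not minimally optimal for $p$; since ranks lie in $\{1,\dots,n\}$, this means $\mathrm{Rank}_{p(z)}(\mu(z))=n$ for every $z\in I$, i.e. each individual is matched under $\mu$ with their least preferred partner. The key step is to exhibit a single $\mu'\in\mathcal{M}$ with $\mu'(z)\neq\mu(z)$ for every $z\in I$: then $\mu'(z)\succeq_{p(z)}\mu(z)$ trivially (the latter is the worst element) and, being distinct from it, $\mu'(z)\succ_{p(z)}\mu(z)$, so $\mu$ is not weakly Pareto optimal. To build such a $\mu'$, I would take a fixed-point-free $g\in\mathrm{Sym}(W)$ — for instance the $n$-cycle $(1\,2\,\cdots\,n)$, which exists precisely because $n\geq 2$ — and set $\mu'(x):=\mu(g(x))$ for $x\in W$ and $\mu'(y):=g^{-1}(\mu(y))$ for $y\in M$. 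A direct check shows $\mu'$ maps $W$ to $M$, maps $M$ to $W$, and satisfies $\mu'(\mu'(z))=z$ for all $z$, hence $\mu'\in\mathcal{M}$; injectivity of $\mu$ and $g(x)\neq x$ give $\mu'(x)\neq\mu(x)$ for all $x\in W$, and then $\mu'(y)\neq\mu(y)$ for all $y\in M$ as well, since $\mu'(y)=\mu(y)=:x$ would force $\mu'(x)=y=\mu(x)$.

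The only genuinely non-routine point is this construction in $(iii)$: one must produce a matching that strictly improves on $\mu$ for every individual at once, and that is exactly where the standing hypothesis $n\geq 2$ enters, via the existence of a fixed-point-free permutation of $W$ (equivalently, of a matching edge-disjoint from $\mu$). Parts $(i)$ and $(ii)$, by contrast, are just careful unwindings of Definition \ref{DefOptCond}.
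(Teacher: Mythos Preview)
Your proof is correct and follows essentially the same approach as the paper: parts $(i)$ and $(ii)$ are identical in substance, and for $(iii)$ both you and the paper construct a matching $\mu'$ disjoint from $\mu$ by composing with a fixed-point-free permutation of $W$ (the paper uses the specific cyclic shift $x\mapsto x+1\bmod n$, you allow any such $g$, but the idea is the same). The only cosmetic difference is that the paper routes $(iii)$ through Proposition~\ref{soloMO} first, whereas you argue the contrapositive directly; neither gains anything over the other.
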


Gale and Shapley (1962) proved that, for every $p\in \mathcal{P}$, a stable matching exists and provided an algorithm to find such a matching. In this algorithm, each side is assigned a specific role: according to their preferences, individuals on one side make proposals to individuals on the other side who are allowed to accept or refuse the proposals. Hence, the algorithm has two distinct versions, each producing a stable matching. Those matchings, in general, differ from each other, but both possess a special property. Indeed, Gale and Shapley proved that the matching resulting from the algorithm where women make proposals associates each woman with the best possible partner she can have within stable matchings, and each man with the worst possible partner he can have within stable matchings. Symmetrically, the algorithm where men make proposals associates each man with the best possible partner he can have within stable matchings, and each woman with the worst possible partner she can have within stable matchings. For such reasons, in the literature, these two matchings are frequently referred to as the woman-optimal stable matching and the man-optimal stable matching, respectively.

\subsection{Matching mechanisms}\label{secMM}

A matching mechanism is a correspondence from $\mathcal{P}$ to $\mathcal{M}$.
Thus, a matching mechanism is a procedure that allows to select a set of matchings for any given preference profile.

We denote by $TO$ the matching mechanism that associates with any $p\in \mathcal{P}$ the whole set $\mathcal{M}$;
by $ST$ the matching mechanism that associates with any $p\in \mathcal{P}$ the set of matchings that are stable for $p$;
by $GS_w$ [$GS_m$] the matching mechanism that associates with any $p\in \mathcal{P}$ the set having as unique element the woman-optimal [man-optimal] stable matching for $p$;
by $GS$ the matching mechanism that associates with any $p\in \mathcal{P}$ the set whose elements are the woman-optimal stable matching and the man-optimal stable matching for $p$;
by $PO$ the matching mechanism that associates with any $p\in \mathcal{P}$ the set of matchings that are Pareto optimal for $p$;
by $WPO$ the matching mechanism that associates with any $p\in \mathcal{P}$ the set of matchings that are weakly Pareto optimal for $p$;
by $MO$ the matching mechanism that associates with any $p\in \mathcal{P}$ the set of the matchings that are minimally optimal for $p$.\footnote{$TO$, $ST$, $GS$, $PO$, $WPO$ and $MO$ stand for \lq\lq Totality\rq\rq, \lq\lq Stable\rq\rq, \lq\lq Gale and Shapley\rq\rq, \lq\lq Pareto Optimal\rq\rq,  \lq\lq Weakly Pareto Optimal\rq\rq,  \lq\lq Minimally Optimal\rq\rq,\ respectively.}

Let us introduce some basic definitions.

\begin{definition}
{\rm Let $F$ be a matching mechanism on $\mathcal{P}$. We say that $F$ is:
\begin{itemize}
\item {\it decisive} if, for every $p\in \mathcal{P}$, $F(p)\neq\varnothing$;
\item {\it resolute} if, for every $p\in \mathcal{P}$, $|F(p)|=1$.
\end{itemize}}
\end{definition}

Of course, if $F$ is resolute, then $F$ is decisive. Note that $GS$, $ST$, $PO$, $WPO$, $MO$ and $TO$ are decisive but they are not resolute; on the other hand, $GS_w$ and $GS_m$ are resolute (Gale and Shapley, 1962).

\begin{definition}
{\rm Let $F$ and $F'$ be matching mechanisms. We say that $F'$ is a {\it refinement} of $F$ if, for every $p\in \mathcal{P}$, $F'(p)\subseteq F(p)$. If $F'$ is a refinement of $F$ we write $F'\subseteq F$.}
\end{definition}

Note that if $F$, $F'$ and $F''$ are matching mechanisms, then $F''\subseteq F'$ and $F'\subseteq F$ imply $F''\subseteq F$. Moreover, we have that $GS\subseteq ST\subseteq PO\subseteq WPO\subseteq MO\subseteq TO$, $GS_w\subseteq GS$ and $GS_m\subseteq GS$.

Given a matching mechanism $F$, it is natural to interpret any refinement of $F$ as a way to reduce the ambiguity of $F$. Resolute refinements of $F$ are particularly important as they allow to completely eliminate the ambiguity of $F$. Of course, any decisive matching mechanism admits in general many resolute refinements, so it becomes important to understand whether some of them fulfill desirable properties.

The next definition presents some properties a matching mechanism may have.

\begin{definition}\label{mm-prop}
{\rm Let $F$ be a matching mechanism. We say that $F$ is
{\it stable} [{\it Pareto optimal; weakly Pareto optimal;  minimally optimal}\,] if, for every $p\in \mathcal{P}$ and $\mu\in F(p)$, $\mu$ is stable [Pareto optimal; weakly Pareto optimal;  minimally optimal] for $p$.}
\end{definition}

Note that a matching mechanism $F$ is stable [Pareto optimal; weakly Pareto optimal; minimally optimal] if and only if $F$ is a refinement of $ST$ [$PO$; $WPO$; $MO$].

\subsection{Symmetric matching mechanisms} \label{SymMM}

In this section we define the key concept of our paper, that is, the notion of symmetric matching mechanism. Some preliminaries are needed before providing such definition. First, we introduce some notation in order to model changes in individuals' identities. To this aim, we consider the  subsets of $\mathrm{Sym}(I)$ given by
\[
G^*\coloneq\{\varphi\in \mathrm{Sym}(I): \{\varphi(W)\,,\varphi(M)\}=\{W,M\}\},
\]
and
\[
G\coloneq\{\varphi\in \mathrm{Sym}(I): \varphi(W)=W,\,\varphi(M)=M\}.
\]
These two sets will play a main role throughout the paper. $G^*$ is formed by the permutations over $I$ that keep the partition $\{W,M\}$ of $I$ fixed; these permutations model changes in individuals' identities either within the sets $W$ and $M$ (i.e. women are renamed as women and men are renamed as men) or across them (i.e. women are renamed as men and men are renamed as women). $G$ is the subset of $G^*$ formed by the permutations over $I$ that transform the set $W$ into itself and the set $M$ into itself; those permutations only model changes in individuals' identities within the sets $W$ and $M$. Note that, as a consequence, the permutations in $G^*\setminus G$ rename each woman with one of men's names and each man with one of women's names. We can also distinguish two further subsets of $G$, namely
\[
G_W\coloneq\{\varphi\in \mathrm{Sym}(I):  \varphi(y)=y \mbox{ for all }y\in M\},
\]
\[
G_M\coloneq\{\varphi\in  \mathrm{Sym}(I):  \varphi(x)=x \mbox{ for all }x\in W\}.
\]
$G_W$ consists of those permutations of individuals' names that leave men's names unchanged; $G_M$ consists of those permutations of individuals' names that leave women's names unchanged.

For instance, if $W=\{1,2\}$ and $M=\{3,4\}$, we have that
\begin{equation}\label{Gstar2}
G^*=\{id_I,(12),(34),(12)(34),(13)(24),(14)(23),(1324),(1423)\},
\end{equation}
\[
G=\{id_I,(12),(34),(12)(34)\},\quad  G_W=\{id_I,(12)\},\quad G_M=\{id_I,(34)\}.
\]
Given $p\in \mathcal{P}$ and $\varphi\in G^*$, taking into account  \eqref{prodottorel}, we denote by $p^\varphi$ the preference profile
defined, for every $z\in I$, by
\begin{equation}\label{action-w}
p^\varphi(z)=\varphi p(\varphi^{-1}(z)).
\end{equation}
By Lemma \ref{phiRR}, it is easily checked that  $p^\varphi$ actually belongs to $\mathcal{P}$.
Moreover, \eqref{action-w} is equivalent to state that, for every $z\in I$,
\begin{equation}\label{action-w2}
p^\varphi(\varphi(z))=\varphi p(z),
\end{equation}
which is more intuitive than \eqref{action-w}.
Indeed, consider $\varphi \in G^*$ and $p\in\mathcal{P}$ represented by the table
\[
\begin{array}{|ccc||ccc|}
\hline
1&\ldots&n&n+1&\ldots&2n\\
\hline
\hline
p(1)&\ldots&p(n)&p(n+1)&\ldots&p(2n)\\
\hline
\end{array}.
\]
Then, by \eqref{action-w2}, we deduce that
$p^\varphi$ is represented by the table
\[
\begin{array}{|ccc||ccc|}
\hline
\varphi(1)&\ldots&\varphi(n)&\varphi(n+1)&\ldots&\varphi(2n)\\
\hline
\hline
\varphi p(1)&\ldots&\varphi p(n)&\varphi p(n+1)&\ldots&\varphi p(2n)\\
\hline
\end{array}.
\]
For a concrete example, consider the preference profile $p$ represented by the table in \eqref{tablep}. If $\varphi=(123)(46)\in G$, then
\[
p^{\varphi}=
\begin{array}{|ccc||ccc|}
\hline
2&3&1&6&5&4\\
\hline
\hline
6&6&4&3&1&1\\
5&4&5&2&2&3\\
4&5&6&1&3&2\\
\hline
\end{array}
=
\begin{array}{|ccc||ccc|}
\hline
1&2&3&4&5&6\\
\hline
\hline
4&6&6&1&1&3\\
5&5&4&3&2&2\\
6&4&5&2&3&1\\
\hline
\end{array};
\]
if  $\varphi=(1426)(35)\in G^*$, then
\[
p^{\varphi}=
\begin{array}{|ccc||ccc|}
\hline
4&6&5&2&3&1\\
\hline
\hline
2&2&1&6&5&5\\
3&1&3&4&4&6\\
1&3&2&5&6&4\\
\hline
\end{array}
=
\begin{array}{|ccc||ccc|}
\hline
1&2&3&4&5&6\\
\hline
\hline
5&6&5&2&1&2\\
6&4&4&3&3&1\\
4&5&6&1&2&3\\
\hline
\end{array}.
\]

We introduce the notion of conjugate in order to define the main concept of the paper. Let $H$ be a group and $h\in H$. The conjugate of $k\in H$ by $h$ is defined by $hkh^{-1}\in H$ and denoted by $k^h$; the conjugate of $K\subseteq H$ by $h$ is defined by $hKh^{-1}\coloneq\{hkh^{-1}\in H:k\in K\}$ and denoted by $K^h$.
 Note that, for every $k,h_1,h_2\in H$ and $K\subseteq H$, we have that 
 \begin{equation}\label{conprop}
 (k^{h_1})^{h_2}=k^{h_2h_1}\quad \hbox{and}\quad  (K^{h_1})^{h_2}=K^{h_2h_1}.
 \end{equation}
Moreover, for every $h\in H$ and $K\subseteq H$, the function from $K$ to $K^h$ that associates $k^h$ with any $k\in K$ is a bijection.

Consider, for instance, $(123)(45), (253)\in \mathrm{Sym}(\ldbrack 5\rdbrack)$. Then, an easy computation shows that $[(123)(45)]^{(253)}$, that is the conjugate of $(123)(45)$ by $(253)$, is given by $(152)(43)$. As one can notice, the permutation $(152)(43)$ is obtained by $(123)(45)$ by replacing each element of each cycle by its image through $(253)$. That is a general fact. Indeed, if a permutation $\sigma$ is a product of the permutations $\sigma_1,\dots, \sigma_k$ and $\varphi$ is a further permutation, then
\begin{equation}\label{conj-split}
(\sigma_1\cdots\sigma_k)^{\varphi}=\sigma_1^{\varphi}\cdots\sigma_k^{\varphi}.\footnote{In the language of group theory, we are just expressing the well-known fact  that conjugation is a homomorphism. }
\end{equation}
Moreover, if $\sigma=(x_1\cdots x_m)$ is a $m$-cycle, then
\begin{equation}\label{conj-cyc}
\sigma^{\varphi}=(\varphi(x_1)\cdots \varphi(x_m)).
\end{equation}
In particular, the cycle structure of a permutation, that is the number of cycles in which it splits and their lengths, is invariant by conjugation.  For further details on the computation of conjugates in symmetric groups, see Example 4.29 and Chapter $4$ in Milne (2021).
It is worth noticing that, for every $\varphi\in G^*$,  $\mathcal{M}^{\varphi}=\mathcal{M}$ (see Proposition \ref{car-match} in Appendix \ref{GT2}).

We are now ready to provide the definition of symmetric matching mechanism.
\begin{definition}\label{usymm}
{\rm Let $F$ be a matching mechanism and $U\subseteq G^*$. We say that $F$ is $U$-{\it symmetric} if, for every $p\in \mathcal{P}$ and $\varphi\in U$, $F(p^{\varphi})= F(p)^{\varphi}$.
If $F$ is $G^*$-symmetric, we simply say that $F$ is symmetric.}
\end{definition}
In order to describe the intuition behind the definition, assume that $W=\{1,2,3\}$ and $M=\{4,5,6\}$ and consider $U\subseteq G^*$ and a $U$-symmetric matching mechanism $F$. Assume that $\varphi=(1426)(35)\in U$ and that 
\[
F(p)=\{(14)(25)(3 6),(1 5)(2 6)(3 4)\}.
\]
Then, by Definition \ref{usymm}, and using \eqref{conj-split} and \eqref{conj-cyc}, we get
\[
F(p^\varphi)=F(p)^{\varphi}=\big\{[(1 4)(2 5)(3 6)]^{\varphi}, [(1 5)(2 6)(3 4)]^{\varphi}\big\}
=\{(4 2)(6 3)(5 1), (4 3)(6 1)(5 2)\}.
\]
The above example clearly shows that, if $F$ is $U$-symmetric, a change in individuals' names in the input according to a permutation in $U$ entails the same change of names in the matchings of the output. This means that  the notion of $U$-symmetry models a notion of  fairness, which may assume various shapes depending on the choice of $U$.
For instance, if $F$ is $G_W$-symmetric, then women are equally treated among themselves by $F$; if $F$ is $G_M$-symmetric, then men are equally treated among themselves by $F$; if $F$ is $G$-symmetric, then women are equally treated among themselves and men are equally treated among themselves by $F$; if $F$ is symmetric, then women are equally treated among themselves, men are equally treated among themselves and the two groups of women and men are equally treated by $F$.

Definition \ref{usymm} provides a generalization of several properties analyzed in the literature on matching mechanisms. Indeed, the standard concept of anonymity, called peer indifference by Masarani and Gokturk (1989), coincides with $G$-symmetry; the property of gender indifference introduced by Masarani and Gokturk (1989) coincides with  $\{\varphi\}$-symmetry, where $\varphi$ is a suitable element of $\mathcal{M}$ (see also Endriss, 2020); the property of gender fairness by \"{O}zkal-Sanver (2004) coincides with $(G^*\setminus G)$-symmetry. Of course, a matching mechanism $F$ satisfies both peer indifference and gender indifference if and only if $F$ is  $(G\cup\{\varphi\})$-symmetric.

\section{Research problem and main results}\label{reserach-problem}

This section is divided into three parts. We first illustrate and motivate our research problem; then, we present an existence condition which is fundamental for proving the possibility and impossibility theorems that represent the main results of our paper; in the last part of the section we collect those results.

\subsection{The research problem} \label{ResProb}

Proving or disproving the existence of resolute and symmetric matching mechanisms, possibly satisfying further properties, is certainly an interesting and not trivial problem. Since a resolute and symmetric matching mechanism can be seen as a resolute and symmetric refinement of $TO$ (that is, the matching mechanism that associates with any preference profile the  whole set $\mathcal{M}$ of matchings), and since many properties of matching mechanisms can be described using the concept of refinement (e.g. stability, Pareto optimality, weak Pareto optimality, minimal optimality), the aforementioned existence problem can be fruitfully specialized in the following existence problem.

\vspace{2mm}

\noindent {\bf Main Problem.} Let $U \subseteq G^*$ and $F$ be a matching mechanism.
Find conditions on $U$ and $F$ that guarantee the existence of a resolute refinement of $F$ that is $U$-symmetric.
\vspace{2mm}

Unfortunately, a strong tension between resoluteness, symmetry and optimality requirements emerges. Indeed, $GS_w$ and $GS_m$ are resolute and $G$-symmetric but fail to be symmetric; on the other hand, $GS$, $ST$, $PO$, $WPO$, $MO$ and $TO$ are symmetric but not resolute (see Proposition \ref{STsym} in  Appendix \ref{GT2} for the proof of the symmetry properties of the aforementioned matching mechanisms).

Several stable matching mechanisms have been introduced in the literature with the aim to guarantee more equity between the two sets $W$ and $M$ than the one obtained by $GS_w$ and $GS_m$. The Sex-Equal mechanism, introduced by Gusfield and Irving (1989) and analyzed in Romero-Medina (2001), and the Egalitarian Stable mechanism, introduced in Gusfield and Irving (1989), are two of them. Roughly speaking, both of them are based on the idea to quantify first, for every stable matching, the envy perceived by every individual and then consider the stable matchings minimizing a suitable function involving the level of envy of all the individuals. Given a preference profile $p$, a matching $\mu$ and $z\in I$, we have that the number $\mathrm{Rank}_{p(z)}(\mu(z))-1$ counts the number of individuals envied by $z$ under the matching $\mu$. Indeed, consider for instance, $x \in W$. Then $x$ envies the partners of all the men $y\in M$ such that $y \succ_{p(x)} \mu(x)$ and their number is exactly $\mathrm{Rank}_{p(x)}(\mu(x))-1$. For every $p\in\mathcal{P}$ and $\mu\in\mathcal{M}$, we can consider then the quantities
\begin{equation*}
\delta(p,\mu) \coloneq \left|\sum_{x \in W}\mathrm{Rank}_{p(x)}(\mu(x))-\sum_{y \in M}\mathrm{Rank}_{p(y)}(\mu(y))\right|,
\end{equation*}
and
\begin{equation*}
e(p,\mu) \coloneq \sum_{x \in W}\mathrm{Rank}_{p(x)}(\mu(x))+\sum_{y \in M}\mathrm{Rank}_{p(y)}(\mu(y)).
\end{equation*}
Since
\[
\sum_{x \in W}\mathrm{Rank}_{p(x)}(\mu(x)) \quad \mbox{ and }\quad\sum_{y \in M}\mathrm{Rank}_{p(y)}(\mu(y))
\]
basically represent the aggregate level of envy of the men and the aggregate level of envy of the women, we have that
$\delta(\mu,p)$ measure the distance between those two values while $e(\mu, p)$ measures the overall level of envy in the society.
The Sex-Equal matching mechanism, denoted by $SE$, is the matching mechanism defined, for every $p\in\mathcal{P}$, by
\[
SE(p) \coloneq \underset{\mu \in ST(p)}{\mathrm{arg\,min}}\; \delta(p,\mu);
\]
the Egalitarian Stable matching mechanism, denoted by $ES$, is the matching mechanism defined, for every $p\in\mathcal{P}$, by
\[
ES(p) \coloneq \underset{\mu \in ST(p)}{\mathrm{arg\,min}}\; e(p,\mu).
\]
Of course, $SE$ and $ES$ are decisive and stable. Moreover, they are also symmetric, as proved in Proposition \ref{SEsym} in Appendix \ref{GT2}.
However, by Example 1 in Romero-Medina (2001, page 201), they are not resolute.

\subsection{A preliminary existence condition} \label{PreEx}

Here, we present a general existence result for $U$-symmetric resolute refinements of $U$-symmetric matching mechanisms (Theorem \ref{F-ref-consistent}). Some preliminary notions and results are needed for stating such theorem. They are illustrated in what follows.

Let $H$ be a finite group and $K\subseteq H$. We say that $K$ is a subgroup of $H$, and we write $K\leq H$, if the restriction of the operation  on $H$ to the set $K\times K$ produces a group structure for $K$. It is well known that, since $H$ is finite, $K\le H$ holds  if and only if, $K\neq\varnothing$ and for every $h,k\in K$, the product $hk$ belongs to $K$. Clearly, if $K'\le K$ and $K\le H$ then $K'\le H$.
The subgroup of $H$ generated by $K$, denoted by $\langle K\rangle$, is the intersection of all the subgroups of $H$ containing $K$. For instance, $\langle \varnothing\rangle=\{1\}$. If $h\in H$, we write  $\langle h\rangle$ instead of $\langle\{ h\}\rangle$.
Of course, if $K\le H$, then $\langle K\rangle=K$.
It is well known that, since $H$ is finite, when $K\ne \varnothing$, $\langle K\rangle$ is the set made up by all the elements of $K$ and the products of a finite number of elements in $K$. For further details on the concepts and statements above, see Chapter 1 in Milne (2021). For our purposes it is important to observe that $G\le G^*\le \mathrm{Sym}(X)$, $G_W\le G$ and $G_M\le G$. Moreover, we have $|G|=(n!)^2$ and $|G^*|=2(n!)^2$. Those facts are easily checked.

As a first result, we show that dealing with symmetry for subsets of $G^*$ is equivalent to dealing with symmetry for subgroups of $G^*$.
The proof of Proposition 2 is in Appendix \ref{GT2}.

\begin{proposition}\label{iff}
Let $F$ be a matching mechanism and $U\subseteq G^*$.
Then $F$ is $U$-symmetric if and only if $F$ is $\langle U\rangle$-symmetric.
\end{proposition}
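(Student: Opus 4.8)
The proof splits naturally into the two implications, and only one of them requires any work. The direction $\langle U\rangle$-symmetry $\Rightarrow$ $U$-symmetry is immediate, since $U\subseteq\langle U\rangle$ and the defining condition $F(p^\varphi)=F(p)^\varphi$ is simply required for fewer permutations $\varphi$. So the plan is to concentrate on the forward direction: assuming $F(p^\varphi)=F(p)^\varphi$ for all $p\in\mathcal{P}$ and all $\varphi\in U$, deduce the same identity for every $\varphi\in\langle U\rangle$.

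The key structural fact I would isolate first is that the assignment $\varphi\mapsto(p\mapsto p^\varphi)$ behaves like a group action: using \eqref{action-w} together with the already-noted identity $(\varphi_1\varphi_2)R=\varphi_1(\varphi_2 R)$ for relations, one checks that $(p^{\varphi_2})^{\varphi_1}=p^{\varphi_1\varphi_2}$ and $p^{id_I}=p$. Similarly, from \eqref{conprop} (or directly from \eqref{conj-split}–\eqref{conj-cyc}) one has $(\mathcal{M}^{\varphi_2})^{\varphi_1}=\mathcal{M}^{\varphi_1\varphi_2}$, and more to the point $F(p)^{\varphi_1\varphi_2}=(F(p)^{\varphi_2})^{\varphi_1}$ for any subset $F(p)\subseteq\mathcal{M}$. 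These compatibilities are the engine of the argument. I would state them as a short preliminary computation rather than grind through the relation-level details, since the excerpt already records the needed identities.

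Granting that, the forward direction is a closure argument. Let $V=\{\varphi\in G^*: F(p^\varphi)=F(p)^\varphi \text{ for all } p\in\mathcal{P}\}$. By hypothesis $U\subseteq V$, and clearly $id_I\in V$, so $V\neq\varnothing$. I claim $V\le G^*$; since $G^*$ is finite it suffices to show $V$ is closed under products. Take $\varphi_1,\varphi_2\in V$ and $p\in\mathcal{P}$. Then, applying the action identity and then the two hypotheses in turn,
\[
F(p^{\varphi_1\varphi_2})=F\big((p^{\varphi_2})^{\varphi_1}\big)=F(p^{\varphi_2})^{\varphi_1}=\big(F(p)^{\varphi_2}\big)^{\varphi_1}=F(p)^{\varphi_1\varphi_2},
\]
so $\varphi_1\varphi_2\in V$. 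Hence $V\le G^*$, and since $V$ is a subgroup containing $U$ and $\langle U\rangle$ is the smallest such subgroup, $\langle U\rangle\subseteq V$; that is, $F$ is $\langle U\rangle$-symmetric. (One subtle point worth a line: closure under products already forces closure under inverses in a finite group, because $\varphi^{-1}=\varphi^{k-1}$ where $k$ is the order of $\varphi$; this is exactly the finiteness criterion for subgroups recalled just before the proposition, so no separate inverse-compatibility check for the profile action is needed.)

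The only mild obstacle is bookkeeping: one must be careful that $p\mapsto p^\varphi$ is a \emph{right} action in the sense $(p^{\varphi_2})^{\varphi_1}=p^{\varphi_1\varphi_2}$ — the order matters and follows from the right-to-left composition convention fixed in Section 2 — and that the conjugation $K\mapsto K^\varphi$ on subsets of $\mathcal{M}$ composes the same way via \eqref{conprop}. Once the two composition laws are aligned, the displayed chain of equalities goes through verbatim and the subgroup-closure packaging finishes the proof. No new group theory beyond the finite-subgroup criterion and the definition of $\langle U\rangle$ is required.
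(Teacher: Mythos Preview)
Your proof is correct and follows essentially the same approach as the paper: both directions are handled identically, and the forward direction rests on the same chain of equalities $F(p^{\varphi_1\varphi_2})=F((p^{\varphi_2})^{\varphi_1})=F(p^{\varphi_2})^{\varphi_1}=(F(p)^{\varphi_2})^{\varphi_1}=F(p)^{\varphi_1\varphi_2}$, using the action identity \eqref{action-e} and the conjugation identity \eqref{conprop}. The only difference is packaging: the paper frames this as an induction on word length over the sets $U_k$ of $k$-fold products, while you frame it as showing the set $V$ of ``good'' permutations is closed under products and hence a subgroup containing $U$; both rely on the same finite-group fact that products of elements of $U$ exhaust $\langle U\rangle$.
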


Proposition \ref{iff} is remarkable and emblematic of the role that group theory may play in the context of matching theory. In fact, it points out that whenever one focuses on the symmetry with respect to a set of permutations, it inevitably ends up with considering the symmetry with respect to a subgroup of permutations. As a consequence, while considering various levels of symmetry, we are allowed to restrict our attentions to subgroups of $G^*$ only. As a further consequence, since $\langle G^*\setminus G\rangle=\langle G\cup\{\varphi\}\rangle=G^*$ (see Propositions \ref{generato1} and \ref{generato2} in Appendix \ref{GT2}), we deduce that the property of gender fairness coincides with $G^*$-symmetry and that the properties of peer indifference and gender indifference together are equivalent to $G^*$-symmetry. We emphasize that, to the best of our knowledge, the fact that requiring gender fairness is the same as requiring both peer indifference and gender indifference does not seem to be present in the literature.

Let us introduce another concept that will turn out very important for our research.
\begin{definition}
{\rm Let $U\le G^*$ and $p\in\mathcal{P}$. The $U$-{\it stabilizer} of $p$ is the set
\[
\mathrm{Stab}_U(p)\coloneq\left \{\varphi\in U: p^\varphi=p \right \}.
\]}
\end{definition}
Remarkably, $\mathrm{Stab}_U(p)$ is a subgroup of $U$ and, for every $\varphi\in U$, we have $\mathrm{Stab}_U(p^{\varphi})=[\mathrm{Stab}_U(p)]^{\varphi}$.
Moreover, it is immediately checked that $\mathrm{Stab}_U(p)=\mathrm{Stab}_{G^*}(p)\cap U$. That explains, in particular, the central role of the $G^*$-stabilizers. Indeed, the knowledge of the $G^*$-stabilizer of a preference profile allows to know its $U$-stabilizer for all $U\le G^*$.

As an easy example, let $W=\{1,2\}$, $M=\{3,4\}$ and $p$ be the preference profile represented by the table
 \begin{equation}\label{pspeciale}
p=
\begin{array}{|cc||cc|}
\hline
1&2&3&4\\
\hline
\hline
3&4&2&1\\
4&3&1&2\\
\hline
\end{array}.
\end{equation} 
Using \eqref{Gstar2}, it turns out that $\mathrm{Stab}_{G^*}(p)=\{id_I,(1324),(12)(34),(1423)\}$. As a consequence, if $U\coloneq\{id_I, (12)(34)\}$, we find $\mathrm{Stab}_{U}(p)=\{id_I,(12)(34)\}$; if instead $U\coloneq\{id_I, (12)\}$, we find $\mathrm{Stab}_{U}(p)=\{id_I\}.$

In the next definition we introduce, for every $U\le G^*$, a matching mechanism denoted by $C^U$. Such a mechanism plays a crucial role to address the aforementioned Main Problem, as shown by all the subsequent results of the section, especially by Theorem \ref{F-ref-consistent}.

\begin{definition}
{\rm Let $U\le G^*$. We denote by $C^U$ the matching mechanism defined, for every $p\in\mathcal{P}$, by 
\[
C^{U}(p)\coloneq \left\{\mu\in \mathcal{M}: \forall \varphi\in \mathrm{Stab}_U(p),\; \mu^\varphi =\mu\right\}.
\]}
\end{definition}
The letter $C$ intentionally resembles the word centralizer largely used by group theorists.
Let $H$ be a group and $K\le H$. The centralizer of $K$ in $H$ is defined by
\[
C_{H}(K)\coloneq\{h\in H: \forall k \in K,\;kh=hk\}.
\]
It is well known that $C_{H}(K)$ is a subgroup of $H$ and that, for every $h\in H$, we have that $C_{H}(K^h)=[C_{H}(K)]^h$.

It is then clear that, given $U\le G^*$ and $p\in\mathcal{P}$, 
\begin{equation}\label{nuova-veste}
C^{U}(p)=\left\{\mu\in \mathcal{M}: \forall \varphi\in \mathrm{Stab}_U(p),\; \varphi\mu=\mu \varphi\right\}=C_{G^*}( \mathrm{Stab}_U(p))\cap \mathcal{M}.
\end{equation}
Note that, since $\mathcal{M}$ is not a subgroup of $G^*$, in general, $C^{U}(p)$ is not a subgroup of $G^*$. 
We now illustrate the properties of $C^U$. The purpose is, on the one hand, to better understand the new concept and, on the other hand, to communicate a more vivid intuition of it.
\begin{proposition}\label{facile1} Let $U\le G^*$. Then the following facts hold true:
\begin{itemize}
\item[$(i)$] $C^U$ is $U$-symmetric; 
\item[$(ii)$] if $F$ is a  resolute and $U$-symmetric matching mechanism, then $F$ is a refinement of $C^U$.
\end{itemize}
\end{proposition}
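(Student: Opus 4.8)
\medskip
\noindent\textbf{Proof plan.} The statement splits into an easy half, $(ii)$, and a bookkeeping half, $(i)$. I would dispatch $(ii)$ first and then establish $(i)$ by exploiting the three structural facts already recorded: that $\mathcal{M}^\varphi=\mathcal{M}$ for every $\varphi\in G^*$ (Proposition \ref{car-match}), that $\mathrm{Stab}_U(p^\varphi)=[\mathrm{Stab}_U(p)]^\varphi$ for every $\varphi\in U$, and the conjugation rules in \eqref{conprop}.

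For $(ii)$, let $F$ be resolute and $U$-symmetric and fix $p\in\mathcal{P}$, writing $F(p)=\{\mu\}$ with $\mu\in\mathcal{M}$. I would take an arbitrary $\varphi\in\mathrm{Stab}_U(p)$, so that $p^\varphi=p$; then $U$-symmetry gives $\{\mu^\varphi\}=\{\mu\}^\varphi=F(p)^\varphi=F(p^\varphi)=F(p)=\{\mu\}$, whence $\mu^\varphi=\mu$. Since $\varphi$ was an arbitrary element of $\mathrm{Stab}_U(p)$, this says exactly that $\mu\in C^U(p)$, so $F(p)\subseteq C^U(p)$; as $p$ was arbitrary, $F\subseteq C^U$. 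That settles $(ii)$.

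For $(i)$, I would prove the inclusion $C^U(p^\varphi)\subseteq C^U(p)^\varphi$ for every $p\in\mathcal{P}$ and every $\varphi\in U$, and then obtain the reverse inclusion for free by applying this statement to the profile $p^\varphi$ and the permutation $\varphi^{-1}\in U$, using $(p^\varphi)^{\varphi^{-1}}=p$. For the inclusion, take $\mu\in C^U(p^\varphi)$. Since conjugation by $\varphi$ restricts to a bijection of $\mathcal{M}$ onto $\mathcal{M}^\varphi=\mathcal{M}$, we may write $\mu=\nu^\varphi$ with $\nu\coloneq\mu^{\varphi^{-1}}\in\mathcal{M}$; it remains to check $\nu\in C^U(p)$. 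Fix $\chi\in\mathrm{Stab}_U(p)$. By \eqref{conprop}, $\nu^\chi=(\mu^{\varphi^{-1}})^\chi=\mu^{\chi\varphi^{-1}}$, and applying the same bijection once more, $\nu^\chi=\nu$ is equivalent to $(\nu^\chi)^\varphi=\nu^\varphi$, i.e.\ to $\mu^{\varphi\chi\varphi^{-1}}=\mu$, i.e.\ to $\mu^{\chi^\varphi}=\mu$. But $\chi^\varphi\in[\mathrm{Stab}_U(p)]^\varphi=\mathrm{Stab}_U(p^\varphi)$, so the hypothesis $\mu\in C^U(p^\varphi)$ delivers precisely $\mu^{\chi^\varphi}=\mu$. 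Hence $\nu^\chi=\nu$ for every $\chi\in\mathrm{Stab}_U(p)$, so $\nu\in C^U(p)$ and therefore $\mu=\nu^\varphi\in C^U(p)^\varphi$, as wanted.

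I do not expect a genuine obstacle here: the argument is pure manipulation of conjugation exponents, and the group hypothesis $U\le G^*$ enters only to guarantee $\varphi^{-1}\in U$ and the stabilizer transformation rule. The one point that must not be glossed over is the repeated use of $\mathcal{M}^\varphi=\mathcal{M}$ — it is what makes the equality $C^U(p^\varphi)=C^U(p)^\varphi$ meaningful as an identity of subsets of $\mathcal{M}$, and it provides the bijectivity of conjugation that lets one trade $\nu^\chi=\nu$ for $\mu^{\chi^\varphi}=\mu$. So the only mildly delicate step is simply not losing track of which element conjugates which in that equivalence.
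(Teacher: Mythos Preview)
Your proof is correct. Part $(ii)$ is identical to the paper's argument. For part $(i)$, the paper takes a slightly more abstract route: it invokes the centralizer reformulation \eqref{nuova-veste}, $C^U(p)=C_{G^*}(\mathrm{Stab}_U(p))\cap\mathcal{M}$, and then chains the general identities $\mathrm{Stab}_U(p^\varphi)=[\mathrm{Stab}_U(p)]^\varphi$, $C_{G^*}(K^\varphi)=[C_{G^*}(K)]^\varphi$, and $\mathcal{M}^\varphi=\mathcal{M}$ to obtain $C^U(p^\varphi)=C^U(p)^\varphi$ as a single equality of sets, with no element chasing and no separate treatment of the two inclusions. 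Your argument unwinds exactly this at the level of individual elements, effectively reproving the centralizer conjugation identity in the special case needed; it is correct and self-contained, but the paper's version is shorter because the group-theoretic facts it cites have already been recorded in the text.
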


\begin{proof} 
$(i)$ We have to show that, for every $p\in \mathcal{P}$ and $\varphi\in U$, $C^U(p^{\varphi})= C^U(p)^{\varphi}$. 
Consider then $p\in \mathcal{P}$ and $\varphi\in U$. Recalling that conjugation defines a bijection and using \eqref{nuova-veste} and the properties of stabilizers and centralizers, we have that
\[
C^{U}(p^{\varphi})=C_{G^*}( \mathrm{Stab}_U(p^{\varphi}))\cap \mathcal{M}=C_{G^*}( [\mathrm{Stab}_U(p)]^{\varphi})\cap \mathcal{M}^{\varphi}
\]
\[
=(C_{G^*} [\mathrm{Stab}_U(p)])^{\varphi}\cap \mathcal{M}^{\varphi}
=(C_{G^*} [\mathrm{Stab}_U(p)]\cap \mathcal{M})^{\varphi}=C^U(p)^{\varphi}.
\]

$(ii)$ Let $F$ be a resolute and $U$-symmetric matching mechanism. Consider $p\in \mathcal{P}$ and let $\mu$ be the unique matching in $F(p)$. By \eqref{nuova-veste}, we prove that $\mu\in C^U(p)$ by showing that, for every $\varphi\in \mathrm{Stab}_U(p)$, the equality $ \mu^\varphi=\mu$ holds. Let $\varphi\in \mathrm{Stab}_U(p)$. Then, $p^\varphi=p$ and, since $F$ is $U$-symmetric, we have
$$
\{\mu\}=F(p)=F(p^\varphi)=F(p)^\varphi=\{\mu\}^\varphi=\{\mu^\varphi\},
$$
that implies the desired equality.
\end{proof}
By the previous proposition, it turns out that $C^U$ is the natural container for all the resolute and $U$-symmetric matching mechanisms. Indeed, the output of a resolute and $U$-symmetric matching mechanism on a preference profile $p$ necessarily belongs to $C^U(p)$.

We are now ready to state the main result of the section. Its proof is in Appendix \ref{Appendix-orbits}.

\begin{theorem}\label{F-ref-consistent}
Let $U\le G^*$ and $F$ be a $U$-symmetric matching mechanism. Then, $F$ admits a $U$-symmetric and resolute refinement if and only if, for every $p\in \mathcal{P}$, $F(p)\cap C^{U}(p)\neq\varnothing$.
\end{theorem}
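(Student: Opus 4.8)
The plan is to prove the two implications separately; the forward (necessity) direction is essentially immediate from Proposition~\ref{facile1}, while the backward (sufficiency) direction requires an explicit orbit-by-orbit selection, which is where the real work lies.

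For the \emph{only if} direction, suppose $F$ admits a $U$-symmetric and resolute refinement $F'$. Since $F'$ is resolute and $U$-symmetric, Proposition~\ref{facile1}$(ii)$ gives $F'\subseteq C^U$; since $F'$ is a refinement of $F$, we also have $F'(p)\subseteq F(p)$ for every $p\in\mathcal P$. Hence $\varnothing\neq F'(p)\subseteq F(p)\cap C^U(p)$, which is the stated condition. (This direction does not use the hypothesis that $F$ itself is $U$-symmetric.)

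For the \emph{if} direction, assume $F(p)\cap C^U(p)\neq\varnothing$ for every $p\in\mathcal P$. First I would record that $U$ acts on the finite set $\mathcal P$ by $p\mapsto p^\varphi$: a direct computation from \eqref{action-w} shows $(p^{\varphi_1})^{\varphi_2}=p^{\varphi_2\varphi_1}$ and $p^{id_I}=p$. Partition $\mathcal P$ into the orbits of this action; in each orbit $O$ pick a representative $p_O\in O$ and, invoking the hypothesis, a matching $\mu_O\in F(p_O)\cap C^U(p_O)$. For an arbitrary $p\in\mathcal P$ lying in the orbit $O$, write $p=p_O^{\varphi}$ for some $\varphi\in U$ and set $F'(p)\coloneq\{\mu_O^{\varphi}\}$.

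It then remains to verify four things, and the well-definedness of $F'$ is the only delicate point; everything else is routine. If $p=p_O^{\varphi_1}=p_O^{\varphi_2}$ with $\varphi_1,\varphi_2\in U$, then applying the action of $\varphi_2^{-1}$ yields $p_O^{\varphi_2^{-1}\varphi_1}=p_O$, so $\psi\coloneq\varphi_2^{-1}\varphi_1\in\mathrm{Stab}_U(p_O)$; since $\mu_O\in C^U(p_O)$ we have $\mu_O^{\psi}=\mu_O$, and therefore, using \eqref{conprop}, $\mu_O^{\varphi_1}=\mu_O^{\varphi_2\psi}=(\mu_O^{\psi})^{\varphi_2}=\mu_O^{\varphi_2}$, so $F'(p)$ does not depend on the chosen $\varphi$. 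Resoluteness is immediate since $|F'(p)|=1$ by construction. For the refinement property, if $p=p_O^\varphi$ then, since $F$ is $U$-symmetric and $\mu_O\in F(p_O)$, we get $\mu_O^\varphi\in F(p_O)^\varphi=F(p_O^\varphi)=F(p)$, whence $F'(p)\subseteq F(p)$ (the same argument with $C^U$, which is $U$-symmetric by Proposition~\ref{facile1}$(i)$, additionally gives $F'(p)\subseteq C^U(p)$). Finally, for $U$-symmetry: given $p=p_O^\varphi$ and $\psi\in U$, the profile $p^\psi=p_O^{\psi\varphi}$ lies in the same orbit $O$, so by definition $F'(p^\psi)=\{\mu_O^{\psi\varphi}\}=\{(\mu_O^\varphi)^\psi\}=F'(p)^\psi$. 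The main obstacle in executing this plan is precisely the well-definedness step: it is exactly there that the intersection with $C^U(p)$ — equivalently, the demand that the selected matching be fixed by the \emph{entire} stabilizer $\mathrm{Stab}_U(p_O)$ — enters, and without it a naive orbit selection would in general clash with $U$-symmetry.
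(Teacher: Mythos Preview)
Your proof is correct and follows essentially the same approach as the paper: the forward direction via Proposition~\ref{facile1}$(ii)$ is identical, and your backward direction inlines exactly what the paper separates out as Theorem~\ref{existence} (orbit representatives, well-definedness via the stabilizer condition encoded in $C^U$, then resoluteness, refinement, and $U$-symmetry). The only difference is organizational---the paper packages the construction of the resolute $U$-symmetric mechanism from orbit data as a standalone lemma before applying it---but the mathematical content is the same.
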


From Theorem \ref{F-ref-consistent}, we immediately deduce that a necessary condition for the existence of a $U$-symmetric and resolute refinement of a $U$-symmetric matching mechanism is the decisiveness of $C^{U}$. Unfortunately, $C^{U}$ may fail to be decisive for some values of the size of $W$ and $M$ and for some subgroups $U$ of $G^*$. Assume, for instance, $W=\{1,2\}$, $M=\{3,4\}$, and $U=G^*$. If $p$ is the preference profile in \eqref{pspeciale}, we know that $\mathrm{Stab}_{G^*}(p)=\{id_I,(1324),(12)(34),(1423)\}$. Using \eqref{Gstar2}, a direct computation shows that $C_{G^*}( \mathrm{Stab}_{G^*}(p))=\mathrm{Stab}_{G^*}(p)$. As a consequence, we have $C^{G^*}(p)=C_{G^*}( \mathrm{Stab}_{G^*}(p))\cap \mathcal{M}=\mathrm{Stab}_{G^*}(p)\cap \mathcal{M}=\varnothing$.

\subsection{Possibility and impossibility results} \label{PossImpossThs}

We are now ready to state the possibility and impossibility results proved in the paper. The proofs of the results collected in this section are in Appendix \ref{main-proofs}. Theorems \ref{n2} and \ref{main-mathc} are about $G^*$-symmetry and are the main results of the paper. We stress that the proof of Theorem \ref{main-mathc} relies on several results and facts from group theory, some of them definitely non-trivial. Recall that $n$ denotes the size of $W$ and $M$.

\begin{theorem}\label{n2}
Let $n$ be even. Then there exists no resolute and symmetric matching mechanism.
\end{theorem}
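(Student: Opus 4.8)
The strategy is to use Theorem~\ref{F-ref-consistent} with $U=G^*$ and $F=TO$: since $TO$ is trivially $G^*$-symmetric, a resolute and symmetric matching mechanism exists if and only if $C^{G^*}$ is decisive, i.e.\ $C^{G^*}(p)\neq\varnothing$ for every $p\in\mathcal{P}$. Hence it suffices to exhibit, when $n$ is even, a single preference profile $p$ for which $C^{G^*}(p)=\varnothing$. Recalling \eqref{nuova-veste}, $C^{G^*}(p)=C_{G^*}(\mathrm{Stab}_{G^*}(p))\cap\mathcal{M}$, so the task splits into two parts: first, produce a profile $p$ whose $G^*$-stabilizer $S\coloneq\mathrm{Stab}_{G^*}(p)$ is a conveniently large subgroup of $G^*$; second, show that no matching commutes with every element of $S$, i.e.\ $C_{G^*}(S)\cap\mathcal{M}=\varnothing$.

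\textbf{Constructing the profile.} The natural candidate generalizes the example in \eqref{pspeciale}. Writing $W=\{1,\dots,n\}$ and $M=\{n+1,\dots,2n\}$, I would consider the permutation $\varphi_0\in G^*\setminus G$ given by the single $2n$-cycle $(1,\,n{+}1,\,2,\,n{+}2,\,\dots,\,n,\,2n)$, which interleaves women and men and satisfies $\varphi_0(W)=M$, $\varphi_0(M)=W$. I would then write down a profile $p$ that is invariant under $\varphi_0$ (and hence under the cyclic group $\langle\varphi_0\rangle$, which has order $2n$), built by taking one woman's list and one man's list as ``seeds'' and propagating them around the cycle via the rule $p^{\varphi_0}=p$, i.e.\ $p(\varphi_0(z))=\varphi_0\,p(z)$ from \eqref{action-w2}. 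One checks directly that $\langle\varphi_0\rangle\leq\mathrm{Stab}_{G^*}(p)$; it is enough to have \emph{this much} symmetry, so I would not need to compute the full stabilizer. The point of interleaving is that $\varphi_0^2$ is a product of two disjoint $n$-cycles, one on $W$ and one on $M$, which is exactly the kind of element that matchings struggle to commute with.

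\textbf{The emptiness argument.} Suppose $\mu\in\mathcal{M}$ commutes with $\varphi_0$. Since $\mu$ is a matching, it swaps $W$ and $M$; since $\varphi_0$ also swaps $W$ and $M$, the product $\psi\coloneq\mu\varphi_0$ lies in $G$, i.e.\ it preserves $W$ and $M$. From $\mu\varphi_0=\varphi_0\mu$ we get $\mu\varphi_0\mu^{-1}=\varphi_0$, so conjugation by $\mu$ fixes the $2n$-cycle $\varphi_0$; using \eqref{conj-cyc}, this forces $\mu$ to act on the cyclic sequence $(1,n{+}1,2,n{+}2,\dots)$ as a power of $\varphi_0$ — that is, $\mu=\varphi_0^k$ for some $k$. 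But $\mu=\mu^{-1}$ forces $\varphi_0^{2k}=id_I$, so $2k\equiv 0\pmod{2n}$, i.e.\ $k\equiv 0\pmod n$, giving $\mu\in\{id_I,\varphi_0^{\,n}\}$. Now $id_I\notin\mathcal{M}$ (it fixes $W$ pointwise rather than mapping it to $M$), and $\varphi_0^{\,n}$ is the product of $n$ transpositions $(i,\,n{+}i)$ — wait, one must check: $\varphi_0^{\,n}$ sends position $j$ in the cycle to position $j+n$, and since the cycle lists $W$ and $M$ alternately, $\varphi_0^{\,n}$ maps each woman to a man and is an involution, so $\varphi_0^{\,n}\in\mathcal{M}$. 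So this naive version gives $C_{G^*}(\langle\varphi_0\rangle)\cap\mathcal{M}=\{\varphi_0^{\,n}\}\neq\varnothing$, and the argument as stated does not yet kill the profile. The fix — and this is where evenness of $n$ must enter — is to enlarge the stabilizer: choose the seed lists so that $p$ is \emph{also} invariant under a second permutation $\tau\in G^*$ that does \emph{not} commute with $\varphi_0^{\,n}$ (for instance a transposition within $W$ composed with the corresponding transposition within $M$, arranged so that $\tau$ and $\varphi_0^{\,n}$ fail to commute, which is possible precisely because $n\geq 2$ is even and one can pair up the $n$ women). Then $C_{G^*}(\mathrm{Stab}_{G^*}(p))\subseteq C_{G^*}(\langle\varphi_0,\tau\rangle)$, and since no matching lies in $C_{G^*}(\varphi_0)\cap C_{G^*}(\tau)$, we conclude $C^{G^*}(p)=\varnothing$.

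\textbf{Main obstacle.} The delicate point is pinning down exactly which extra symmetry $\tau$ to impose so that (a) a profile $p$ invariant under both $\varphi_0$ and $\tau$ genuinely exists (the propagation rules from the two generators must be consistent, i.e.\ one needs $\langle\varphi_0,\tau\rangle$ to act without forcing a contradiction on the seed lists), and (b) $C_{G^*}(\langle\varphi_0,\tau\rangle)\cap\mathcal{M}=\varnothing$, with the parity of $n$ doing the real work in (b). I expect the cleanest route is to let $\tau$ be a product of $\lceil n/2\rceil$ disjoint ``double transpositions'' pairing women two-by-two and men two-by-two; when $n$ is even these pairings are perfect, and a short direct computation (along the lines of the $n=2$ case already worked out after Theorem~\ref{F-ref-consistent}, where $\mathrm{Stab}_{G^*}(p)=\{id_I,(1324),(12)(34),(1423)\}$ self-centralizes and meets $\mathcal{M}$ trivially) shows the centralizer contains no involution swapping $W$ and $M$. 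Verifying this last computation carefully, and checking the consistency of the profile construction for general even $n$, is the technical heart of the proof.
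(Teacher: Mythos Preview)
Your overall plan---build a profile $p$ with $\langle\varphi_0\rangle\leq\mathrm{Stab}_{G^*}(p)$ for the $2n$-cycle $\varphi_0=(1,\,n{+}1,\,2,\,n{+}2,\dots,n,\,2n)$, then show $C_{G^*}(\langle\varphi_0\rangle)\cap\mathcal{M}=\varnothing$ and invoke Theorem~\ref{F-ref-consistent}---is exactly the paper's approach. You also correctly compute that $C_{G^*}(\langle\varphi_0\rangle)=\langle\varphi_0\rangle$ and that the only involutions there are $id_I$ and $\varphi_0^{\,n}$.

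The gap is a parity slip in the very next line, and it happens to be the crux of the theorem. You write that ``since the cycle lists $W$ and $M$ alternately, $\varphi_0^{\,n}$ maps each woman to a man,'' and conclude $\varphi_0^{\,n}\in\mathcal{M}$. But a shift by $n$ positions in an alternating list changes the type precisely when $n$ is \emph{odd}. Here $n$ is \emph{even}, so $\varphi_0^{\,n}$ sends each position to one of the same parity, hence $\varphi_0^{\,n}(W)=W$ and $\varphi_0^{\,n}\in G$. By Proposition~\ref{car-match}(i) a matching must lie in $G^*\setminus G$, so $\varphi_0^{\,n}\notin\mathcal{M}$. Therefore your ``naive version'' already yields $C_{G^*}(\langle\varphi_0\rangle)\cap\mathcal{M}=\varnothing$, and the proof is finished at that point---this is exactly Proposition~\ref{centralizza-cicloni}(iv) in the paper. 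The entire second layer with an auxiliary $\tau$ is unnecessary (and, as sketched, its consistency is far from clear). The evenness of $n$ does its work right where you overlooked it.
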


\begin{theorem}\label{main-mathc}
Let $n$ be odd. Then there exists a resolute and symmetric matching mechanism.
\end{theorem}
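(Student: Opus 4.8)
By Theorem~\ref{F-ref-consistent} applied with $U = G^*$ and $F = TO$, producing a resolute and symmetric matching mechanism is equivalent to showing that $C^{G^*}$ is decisive, i.e.\ that $C^{G^*}(p) \neq \varnothing$ for every $p \in \mathcal{P}$. By \eqref{nuova-veste}, $C^{G^*}(p) = C_{G^*}(\mathrm{Stab}_{G^*}(p)) \cap \mathcal{M}$, so the task reduces to a purely group-theoretic statement: for every subgroup $S$ of $G^*$ that arises as a stabilizer $\mathrm{Stab}_{G^*}(p)$ of some preference profile, the centralizer $C_{G^*}(S)$ contains a matching. The first step is therefore to identify which subgroups of $G^*$ can occur as stabilizers; these are exactly the semiregular subgroups of $G^*$ (a permutation group is semiregular if no nonidentity element fixes a point), since a permutation $\varphi$ fixes $p$ only if $\varphi$ acts on each individual's name in a way compatible with fixing that individual's linear order, which by Lemma~\ref{phiRR} forces $\varphi$ to move every individual it touches through a whole cycle of name-changes. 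So I would first establish (or quote from Appendix~\ref{semi-reg}) that $\mathrm{Stab}_{G^*}(p)$ is always semiregular, and conversely that every semiregular subgroup of $G^*$ is realized as some stabilizer.

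\textbf{The group-theoretic core.} The heart of the argument is then: \emph{if $n$ is odd and $S \le G^*$ is semiregular, then $C_{G^*}(S)$ contains a matching.} This is precisely the content of the promised Theorem~\ref{main-semireg2} on centralizers of semiregular subgroups of $G^*$. The plan is to split into two cases according to whether $S \le G$ or $S \not\le G$. If $S \le G$, then $S$ acts semiregularly on $W$ and semiregularly on $M$, and in fact (since $|W| = |M| = n$) $S$ embeds into $\mathrm{Sym}(W)$ and into $\mathrm{Sym}(M)$ as semiregular groups of the same order; a matching $\mu$ centralizing $S$ must intertwine these two actions. Here oddness of $n$ is essential: a semiregular group on a set of odd size has all orbits of odd length, and one must build an $S$-equivariant bijection $W \to M$ that moreover is an involution when combined with its inverse — the parity of orbit lengths is exactly what lets one pair up orbits and fixed structure consistently. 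If $S \not\le G$, then $S \cap G$ has index $2$ in $S$ and $S$ contains an element swapping $W$ and $M$; one analyzes $S$ through this index-$2$ subgroup $S \cap G \le G$ together with a coset representative $\tau$ with $\tau(W) = M$, and constructs the desired matching in $C_{G^*}(S)$ by combining an equivariant map for $S \cap G$ with the constraint imposed by $\tau$, again using that $n$ is odd to avoid the parity obstruction that produced the counterexample for $n=2$ in Section~\ref{PreEx}.

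\textbf{Assembling the conclusion.} Once Theorem~\ref{main-semireg2} is in hand, the proof of Theorem~\ref{main-mathc} is short: for every $p \in \mathcal{P}$, the subgroup $S = \mathrm{Stab}_{G^*}(p)$ is semiregular, hence by Theorem~\ref{main-semireg2} the set $C_{G^*}(S) \cap \mathcal{M} = C^{G^*}(p)$ is nonempty; thus $C^{G^*}$ is decisive, and applying Theorem~\ref{F-ref-consistent} with $F = TO$ (which is trivially $G^*$-symmetric and satisfies $TO(p) \cap C^{G^*}(p) = C^{G^*}(p) \neq \varnothing$) yields a resolute and $G^*$-symmetric refinement of $TO$ — that is, a resolute and symmetric matching mechanism.

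\textbf{Main obstacle.} The delicate part is entirely the group-theoretic lemma, Theorem~\ref{main-semireg2}: controlling the structure of an arbitrary semiregular subgroup $S$ of $G^*$ (which need not be cyclic or abelian, and whose intersection with $G$ and relationship to the swap elements can be intricate) and then producing, within its centralizer, a permutation that is not merely $S$-equivariant but is also a fixed-point-free involution respecting the bipartition $\{W,M\}$. The odd-size hypothesis must be used precisely at the point where even orbit lengths would otherwise force a parity clash — the same clash that underlies Theorem~\ref{n2} — so pinpointing where oddness enters and verifying it genuinely removes the obstruction is the crux of the whole argument.
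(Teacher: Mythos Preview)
Your plan matches the paper's proof essentially step for step: reduce via Corollary~\ref{facile3} (i.e.\ Theorem~\ref{F-ref-consistent} with $F=TO$) to showing $C^{G^*}(p)\neq\varnothing$, invoke Theorem~\ref{stabilizzatori-semireg} to get that $\mathrm{Stab}_{G^*}(p)$ is semiregular, and then apply Theorem~\ref{main-semireg2} with the same two-case split $S\le G$ versus $S\not\le G$. Two small remarks: the converse you mention (that \emph{every} semiregular subgroup of $G^*$ arises as a stabilizer) is neither needed nor proved in the paper, so drop it; and in the paper's actual construction the oddness of $n$ is not used in Case~1 at all but only in Case~2, where it forces the order-$2$ element of $S$ (guaranteed by Cauchy) to lie in $G^*\setminus G$ rather than in $G$.
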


Given the possibility result stated in Theorem \ref{main-mathc}, one may wonder whether it is possible, for the case $n$ odd,  to be more ambitious and ask for some further property, like minimal optimality, and still get the existence.  The next result shows that the answer is negative.

\begin{theorem}\label{no-sym-mo}
Let $n$ be odd. Then there exists no resolute, symmetric and minimally optimal matching mechanism.
\end{theorem}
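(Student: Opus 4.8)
The plan is to argue by contradiction: suppose $F$ is resolute, symmetric and minimally optimal when $n$ is odd. By Proposition \ref{facile1}$(ii)$, $F$ must be a refinement of $C^{G^*}$, so for every $p\in\mathcal{P}$ we have $F(p)\subseteq C^{G^*}(p)$; since $F$ is also minimally optimal, $F(p)$ consists of a single matching that is both minimally optimal for $p$ and centralized by $\mathrm{Stab}_{G^*}(p)$ (via the identification in \eqref{nuova-veste}). The strategy is then to exhibit a single carefully chosen preference profile $p^\circ$ for which the only matchings in $C^{G^*}(p^\circ)$ are \emph{not} minimally optimal for $p^\circ$; concretely, I expect to engineer $p^\circ$ so that $C^{G^*}(p^\circ)=\{\mu\}$ where $\mu$ is the unique non-minimally-optimal matching for $p^\circ$ (recall from Proposition \ref{soloMO} that there is at most one such matching per profile). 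Then $F(p^\circ)\subseteq C^{G^*}(p^\circ)=\{\mu\}$ forces $F(p^\circ)=\{\mu\}$, contradicting minimal optimality.

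The concrete construction would go as follows. Since $n$ is odd, pick a matching $\mu$ — say $\mu=(1\ n{+}1)(2\ n{+}2)\cdots(n\ 2n)$ — and build $p^\circ$ so that every individual ranks her/his $\mu$-partner \emph{last}; this makes $\mu$ the (necessarily unique) non-minimally-optimal matching for $p^\circ$. The remaining freedom in the rankings must be spent to make $\mathrm{Stab}_{G^*}(p^\circ)$ large enough that its centralizer, intersected with $\mathcal{M}$, collapses to exactly $\{\mu\}$. A natural candidate is to take a permutation $\sigma\in G^*\setminus G$ of order a power of $2$ times something compatible with $n$ odd — for instance a product of $n$ disjoint $4$-cycles is impossible since that needs $4n$ points, so instead one uses $\sigma$ built from transpositions swapping $W$ and $M$ together with a suitable cyclic action; the point is to choose $\sigma$ with $\sigma\in\mathrm{Stab}_{G^*}(p^\circ)$ and $C_{G^*}(\langle\sigma\rangle)\cap\mathcal{M}=\{\mu\}$. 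One then checks, using \eqref{conj-cyc} and the explicit description of how $\sigma$ permutes names, that $p^{\circ\sigma}=p^\circ$ (the rankings have been chosen $\sigma$-invariantly) and that any matching commuting with $\sigma$ must pair $z$ with $\sigma(z)$-translates forced to coincide with $\mu$ because $n$ is odd removes the alternative fixed-point-free involutions that would otherwise commute with $\sigma$.

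The main obstacle I anticipate is precisely the second half: finding $\sigma\in G^*$ whose centralizer in $G^*$ meets $\mathcal{M}$ in the single desired matching, while simultaneously admitting a $\sigma$-invariant profile in which everyone's $\mu$-partner is ranked last. These two demands pull against each other — a large stabilizer tends to force many symmetric matchings, not few — so the parity hypothesis $n$ odd must be used essentially, presumably through the structure of semiregular subgroups and their centralizers developed in Theorem \ref{main-semireg2} and the technical appendix. A cleaner route may be to avoid a single profile and instead run an orbit-counting argument: if $F$ were minimally optimal then, restricting attention to the $G^*$-orbit of $p^\circ$, resoluteness plus symmetry would force $F$ to pick, on each profile in the orbit, the $\sigma$-invariant image of a fixed matching, and minimal optimality would then have to hold simultaneously at $p^\circ$ and at every conjugate — which one shows is impossible by a counting comparison between the $2(n!)^2/|\mathrm{Stab}_{G^*}(p^\circ)|$ profiles in the orbit and the at-most-one bad matching each carries. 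I would first attempt the explicit single-profile version, falling back to the orbit argument if the explicit $\sigma$ proves awkward to pin down.
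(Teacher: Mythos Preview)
Your overall strategy is exactly the paper's: exhibit a single profile $p$ with $C^{G^*}(p)\cap MO(p)=\varnothing$ and invoke Theorem~\ref{F-ref-consistent} (or, equivalently, Proposition~\ref{facile1}$(ii)$). The gap is that you never identify the right $\sigma$. Your guesses (products of $4$-cycles, ``transpositions swapping $W$ and $M$ together with a suitable cyclic action'') are either infeasible or too vague, and the tension you diagnose --- ``a large stabilizer tends to force many symmetric matchings, not few'' --- is precisely what you must dissolve with a concrete choice.

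The paper's answer is simply to take $\sigma$ a $2n$-cycle in $G^*\setminus G$, e.g.\ $\varphi=(1\ n{+}1\ 2\ n{+}2\ \cdots\ n\ 2n)$. The point is Proposition~\ref{centralizza-cicloni}: the centralizer of $\langle\varphi\rangle$ in $G^*$ is $\langle\varphi\rangle$ itself, a cyclic group of order $2n$, which therefore contains a \emph{unique} element of order $2$, namely $\varphi^n$; and since $n$ is odd, $\varphi^n\in G^*\setminus G$, so $\varphi^n\in\mathcal{M}$. Hence $C^{G^*}(p)=\{\varphi^n\}$ whenever $\mathrm{Stab}_{G^*}(p)=\langle\varphi\rangle$. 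The remaining work is Proposition~\ref{ciclone-odd}: one can build such a $p$ with the additional feature that $\mathrm{Rank}_{p(z)}(\varphi^n(z))=n$ for all $z$, i.e.\ $\varphi^n$ is exactly the non-minimally-optimal matching for $p$. (Note, incidentally, that $\varphi^n$ is \emph{not} your proposed $\mu=(1\ n{+}1)\cdots(n\ 2n)$; for $n=3$ one gets $\varphi^3=(1\,5)(2\,6)(3\,4)$. Fixing $\mu$ first and then hunting for $\sigma$ is the wrong order of operations --- choose $\sigma$ to pin down the centralizer, then let $\mu=\sigma^n$.)

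Your fallback orbit-counting idea does not lead anywhere as stated: symmetry already forces $F$ to be constant-up-to-conjugation on an orbit, and minimal optimality is itself $G^*$-invariant, so comparing orbit size against ``at most one bad matching per profile'' gives no extra leverage beyond the single-profile argument.
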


Taking into account Theorems \ref{n2} and \ref{no-sym-mo}, we then get the following general impossibility result.

\begin{theorem}\label{impossibilita1}
There exists no resolute, symmetric and minimally optimal matching mechanism.
\end{theorem}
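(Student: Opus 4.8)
The plan is to reduce Theorem \ref{impossibilita1} to the two preceding theorems by a parity case split on $n=|W|=|M|$, recalling that $n\ge 2$. The statement asserts the non-existence of a matching mechanism that simultaneously satisfies three properties, so it suffices to exhibit, for every admissible value of $n$, at least one pair among these properties that cannot coexist (for that particular $n$). Since every integer $n\ge 2$ is either even or odd, I would treat these two cases separately.

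In the case $n$ even, Theorem \ref{n2} already gives that there is no resolute and symmetric matching mechanism. Since a resolute, symmetric and minimally optimal matching mechanism is in particular a resolute and symmetric one — adding the requirement of minimal optimality can only shrink the class of admissible mechanisms — the non-existence in the even case follows immediately. In the case $n$ odd, Theorem \ref{no-sym-mo} states precisely the desired impossibility. Putting the two cases together yields Theorem \ref{impossibilita1}.

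There is essentially no obstacle here: all the substantive work lies in Theorems \ref{n2} and \ref{no-sym-mo}, and the present statement is simply their common consequence obtained by exhausting the parity of $n$. The only point worth spelling out explicitly is the elementary implication that ``resolute, symmetric and minimally optimal'' entails ``resolute and symmetric'', so that the even case is genuinely covered by Theorem \ref{n2}; this is immediate from Definition \ref{mm-prop}. One could equivalently phrase the whole argument via Proposition \ref{implications} to emphasize that the impossibility in fact propagates upward to stability, Pareto optimality and weak Pareto optimality, but for the theorem as stated the two-line case analysis suffices.
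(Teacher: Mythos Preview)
Your proof is correct and follows exactly the paper's own approach: the paper states Theorem \ref{impossibilita1} immediately after Theorems \ref{n2} and \ref{no-sym-mo} with the remark ``Taking into account Theorems \ref{n2} and \ref{no-sym-mo}, we then get the following general impossibility result,'' which is precisely your parity case split.
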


Of course, as a consequence of the previous theorem and Proposition \ref{implications}, we also deduce that there exists no resolute, symmetric and stable [Pareto optimal; weakly Pareto optimal] matching mechanism. 

Now we move our focus on the weaker notion of $\{\varphi\}$-symmetry, where  $\varphi\in\mathcal{M}$. 
As established by Proposition \ref{iff},  $\{\varphi\}$-symmetry is equivalent to $\langle \varphi\rangle$-symmetry and we adopt the latter writing from now on.  Clearly $\varphi$ exchanges individuals' identities between the two groups $W$ and $M$.
Since $\langle \varphi\rangle$ is a subgroup of $G^*$ of order $2$, the only smaller subgroup of $\langle \varphi\rangle$ is given by the trivial group $\{id_I\}$. Thus there exists no proper subgroup of $\langle \varphi\rangle$ containing permutations exchanging individuals' identities between the two groups $W$ and $M$. As a consequence, $\langle\varphi\rangle$-symmetry represents a minimal level of symmetry among those that allow for an exchange of individuals' identities between the two groups $W$ and $M$.

The first result we propose is a possibility result.

\begin{theorem}\label{existence3}
Let $\varphi\in\mathcal{M}$. Then there exists a resolute, $\langle \varphi\rangle$-symmetric and weak Pareto optimal matching mechanism.
\end{theorem}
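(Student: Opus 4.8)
By Theorem~\ref{F-ref-consistent} applied with $U=\langle\varphi\rangle$, it suffices to show that $WPO$ is $\langle\varphi\rangle$-symmetric and that, for every $p\in\mathcal{P}$, the intersection $WPO(p)\cap C^{\langle\varphi\rangle}(p)$ is nonempty. The first fact is already recorded (Proposition~\ref{STsym} in Appendix~\ref{GT2} asserts the symmetry of $WPO$, hence in particular its $\langle\varphi\rangle$-symmetry). So the whole burden is the nonemptiness of $WPO(p)\cap C^{\langle\varphi\rangle}(p)$. Unwinding the definition via \eqref{nuova-veste}: if $\varphi\notin\mathrm{Stab}_{\langle\varphi\rangle}(p)$, i.e. $p^\varphi\neq p$, then $\mathrm{Stab}_{\langle\varphi\rangle}(p)=\{id_I\}$, so $C^{\langle\varphi\rangle}(p)=\mathcal{M}$ and the intersection is just $WPO(p)$, which is nonempty since $WPO$ is decisive. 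Hence the only case requiring work is $p^\varphi=p$, where we must exhibit a weakly Pareto optimal matching $\mu$ with $\mu^\varphi=\mu$.

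\textbf{The self-conjugate case.} Assume $p^\varphi=p$. The key idea is to use $\varphi$ to pair up matchings: since $\varphi\in\mathcal{M}$ coincides with its inverse, conjugation by $\varphi$ is an involution on $\mathcal{M}$ (recall $\mathcal{M}^\varphi=\mathcal{M}$), so $\mathcal{M}$ splits into $\varphi$-fixed matchings and pairs $\{\mu,\mu^\varphi\}$ with $\mu\neq\mu^\varphi$. I want to run a ``greedy / serial-dictatorship''-type construction that is insensitive to the relabelling induced by $\varphi$. Concretely, consider the orbits of $\langle\varphi\rangle$ acting on $I$; since $\varphi\in\mathcal{M}$ matches each woman with a man, every orbit is a pair $\{z,\varphi(z)\}$ consisting of one woman and one man. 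Process these orbits in some fixed order; at each orbit $\{x,\varphi(x)\}$ with $x\in W$, let $x$ pick her most preferred still-available man $y$, and simultaneously assign $\varphi(x)$'s partner to be $\varphi(y)$ (this is exactly the $\varphi$-image of the step just taken). Because $p^\varphi=p$, the preference data seen at the ``mirror'' step is literally the $\varphi$-translate of the data at the original step, so the two assignments are mutually consistent and the resulting matching $\mu$ satisfies $\mu^\varphi=\mu$ by construction. That this $\mu$ is weakly Pareto optimal follows from the standard serial-dictatorship argument: the first agent to move gets her top choice, so no matching can make \emph{everyone} strictly better off.

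\textbf{Main obstacle and how I expect to handle it.} The delicate point is making the ``simultaneous mirror assignment'' well-defined: when at orbit $\{x,\varphi(x)\}$ we assign $x\mapsto y$ and $\varphi(x)\mapsto\varphi(y)$, we need $\varphi(y)$ to still be available, and we need that we are not forced at some later orbit to contradict a choice already fixed by mirroring. Two sub-cases can arise: the orbit $\{x,\varphi(x)\}$ may be ``linked'' to the orbit $\{y,\varphi(y)\}$ (distinct orbits) or we could hit the degenerate situation $y=\varphi(x)$, i.e. $x$'s top available man is her own $\varphi$-partner, in which case the pair $(x,\varphi(x))$ is matched to each other and the mirror step is vacuous. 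In the linked case, processing the orbit of $x$ should consume both orbits at once (remove $x,\varphi(x),y,\varphi(y)$ from the available pool together) and move on; since $p^\varphi=p$ guarantees that $\varphi(y)$ ranks $\varphi(x)$ wherever $y$ ranks $x$, no inconsistency can appear. I would set this up cleanly by inducting on $n$: fix the smallest woman $x$, let $y=p(x)$'s top element; if $y=\varphi(x)$ remove the single orbit, otherwise remove the two orbits $\{x,\varphi(x)\}$ and $\{y,\varphi(y)\}$; in either case the restriction of $p$ to the remaining individuals is still fixed by the restriction of $\varphi$, so the inductive hypothesis yields a $\varphi$-invariant weakly Pareto optimal matching on the rest, and prepending the couple(s) just formed preserves both properties. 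A short argument (the removed women all received their top available choice, and in particular the very first woman received her global top choice) then certifies weak Pareto optimality of the whole $\mu$, completing the verification that $WPO(p)\cap C^{\langle\varphi\rangle}(p)\neq\varnothing$ and hence, by Theorem~\ref{F-ref-consistent}, the existence of the desired mechanism.
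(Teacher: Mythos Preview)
Your proposal is correct and shares the overall architecture with the paper's proof: reduce via Theorem~\ref{F-ref-consistent} to showing $WPO(p)\cap C^{\langle\varphi\rangle}(p)\neq\varnothing$, dispose of the case $\mathrm{Stab}_{\langle\varphi\rangle}(p)=\{id_I\}$ trivially, and in the case $p^\varphi=p$ exhibit a $\varphi$-commuting matching under which some individual gets her top choice (which alone guarantees weak Pareto optimality).

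Where you diverge from the paper is in how the $\varphi$-commuting matching is produced. The paper first proves an algebraic characterization (Propositions~\ref{prop44} and~\ref{matchcomm}): a matching $\mu$ commutes with $\varphi$ if and only if $\mu=\varphi^\nu$ for some $\nu\in G_M$ with $|\nu|\le 2$. It then simply takes $\nu$ to be the transposition swapping $\varphi(1)$ with woman~$1$'s top man (or $\nu=id_I$ if these coincide), so that $\varphi^\nu(1)$ is that top man, and is done. Your route is instead a constructive mirrored serial dictatorship, removing one or two $\varphi$-orbits at a time and recursing on the smaller instance; this avoids the algebraic lemmas entirely and is self-contained, at the cost of a longer argument. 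In fact your very first step already produces exactly the same couples on $\{1,y,\varphi(1),\varphi(y)\}$ that the paper's $\varphi^\nu$ does; the paper simply completes with $\varphi$ on the remaining orbits rather than recursing.

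Two small clean-ups. First, the recursion is needed only to \emph{define} the rest of $\mu$ and to carry $\varphi$-invariance; the WPO conclusion for the full $\mu$ follows already from the first step (woman~$1$ receives her global top), so you need not invoke the inductive hypothesis for WPO. Second, your parenthetical claim that ``the removed women all received their top available choice'' is not quite right in the two-orbit case: woman $\varphi(y)$ is forced to $\varphi(1)$, which need not be her top. This is harmless, since the argument only uses that the \emph{first} woman receives her global top, as you correctly note immediately afterward.
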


Given Theorem \ref{existence3}, one may wonder whether it is possible to replace the assumption of weak Pareto optimality with the stronger property of stability and still get existence. The next two results show that this can be actually done if and only if $n=2$.
Note that Theorem \ref{endriss} corresponds to Theorem 6 in Endriss (2020).\footnote{Observe also that Theorem \ref{endriss} implies Theorem 3 in Root and Bade (2023).} The proof in Endriss is based both on first-order logic and on Satisfiability (SAT) solving techniques. In fact, it is first shown that some properties of matching mechanisms, including stability and $\langle \varphi\rangle$-symmetry, can be encoded in a specific formal language. Then, it is proved that impossibility results involving these properties can be reduced to impossibility results for the case with three women and three men and that the analysis of this latter case can be fully automated using SAT solving technology. Here, we propose an alternative proof of that impossibility result.

\begin{theorem}\label{existence2}
Let $n=2$ and $\varphi\in\mathcal{M}$. Then there exists a resolute,  $\langle \varphi\rangle$-symmetric and stable matching mechanism.
\end{theorem}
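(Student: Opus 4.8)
The plan is to realize the desired mechanism as a resolute refinement of $ST$ and to invoke Theorem \ref{F-ref-consistent} with $F=ST$ and $U=\langle\varphi\rangle$. Since $ST$ is $G^{*}$-symmetric (Proposition \ref{STsym}) and $\langle\varphi\rangle\le G^{*}$, the mechanism $ST$ is in particular $\langle\varphi\rangle$-symmetric, so Theorem \ref{F-ref-consistent} is applicable: it says that $ST$ admits a $\langle\varphi\rangle$-symmetric and resolute refinement if and only if $ST(p)\cap C^{\langle\varphi\rangle}(p)\neq\varnothing$ for every $p\in\mathcal{P}$. Any refinement of $ST$ is stable by definition (Definition \ref{mm-prop}), so producing such a refinement completes the proof. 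Hence the whole argument reduces to verifying the intersection condition for every preference profile.

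The key point, and the place where the hypothesis $n=2$ enters, is that conjugation by $\varphi$ acts trivially on $\mathcal{M}$. Indeed, $|\mathcal{M}|=n!=2$; by Proposition \ref{car-match} the map $\mu\mapsto\mu^{\varphi}$ is a permutation of the two-element set $\mathcal{M}$; and $\varphi\in\mathcal{M}$ is a fixed point of this permutation, since $\varphi^{\varphi}=\varphi\varphi\varphi^{-1}=\varphi$. A self-map of a two-element set that fixes one of its two points must be the identity, so $\mu^{\varphi}=\mu$ for every $\mu\in\mathcal{M}$.

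From this I conclude that $C^{\langle\varphi\rangle}(p)=\mathcal{M}$ for every $p\in\mathcal{P}$: the subgroup $\mathrm{Stab}_{\langle\varphi\rangle}(p)$ is contained in $\{id_{I},\varphi\}$, and for every $\mu\in\mathcal{M}$ we have $\mu^{id_{I}}=\mu$ and, by the previous paragraph, $\mu^{\varphi}=\mu$. Therefore $ST(p)\cap C^{\langle\varphi\rangle}(p)=ST(p)$, which is nonempty for every $p$ by Gale and Shapley (1962). Theorem \ref{F-ref-consistent} then produces a $\langle\varphi\rangle$-symmetric and resolute refinement $F$ of $ST$, and $F$ is stable precisely because it refines $ST$; this $F$ is the mechanism we want.

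The only genuinely load-bearing step is the triviality of the conjugation action of $\varphi$ on $\mathcal{M}$, which is exactly what breaks for larger markets: when $n\ge 3$ the set $\mathcal{M}$ is bigger, conjugation by a matching $\varphi$ is no longer trivial, $C^{\langle\varphi\rangle}$ need not be decisive, and Theorem \ref{endriss} shows the conclusion is false. If one preferred to avoid Proposition \ref{car-match}, an alternative is to reduce to $\varphi=(13)(24)$ without loss of generality — conjugating $p$ and the candidate matchings by a suitable $\psi\in G^{*}$ that carries $\varphi$ to $(13)(24)$ — and then check directly on the two matchings $(13)(24)$ and $(14)(23)$ that conjugation by $(13)(24)$ fixes both; the fixed-point argument above is cleaner and sidesteps this case check, so I would use it.
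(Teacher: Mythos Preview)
Your proof is correct and follows essentially the same approach as the paper: both reduce to Theorem \ref{F-ref-consistent} with $F=ST$ and $U=\langle\varphi\rangle$, and both establish $C^{\langle\varphi\rangle}(p)=\mathcal{M}$ for every $p$ by showing that the two matchings in $\mathcal{M}$ commute. The only cosmetic difference is that the paper checks $\mu_1\mu_2=\mu_2\mu_1$ directly on $(13)(24)$ and $(14)(23)$, while you phrase the same fact as a fixed-point argument for a self-map of a two-element set.
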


\begin{theorem}[Endriss, 2020]\label{endriss}
Let $n\ge 3$ and $\varphi\in\mathcal{M}$. Then there exists no resolute, $\langle \varphi\rangle$-symmetric and stable  matching mechanism.
\end{theorem}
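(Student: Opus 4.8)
The plan is to read the impossibility off Theorem \ref{F-ref-consistent} applied to $F=ST$. Since $ST$ is $G^*$-symmetric (Proposition \ref{STsym}) and $\langle\varphi\rangle\le G^*$, $ST$ is $\langle\varphi\rangle$-symmetric; and, by the remark after Definition \ref{mm-prop}, a matching mechanism is resolute, $\langle\varphi\rangle$-symmetric and stable precisely when it is a resolute, $\langle\varphi\rangle$-symmetric refinement of $ST$. So, by Theorem \ref{F-ref-consistent}, it suffices to exhibit one $p\in\mathcal{P}$ with $ST(p)\cap C^{\langle\varphi\rangle}(p)=\varnothing$. I would search for such a $p$ among the profiles fixed by $\varphi$: if $p^\varphi=p$, then $\mathrm{Stab}_{\langle\varphi\rangle}(p)=\langle\varphi\rangle=\{id_I,\varphi\}$, hence $C^{\langle\varphi\rangle}(p)=\{\mu\in\mathcal{M}:\mu^\varphi=\mu\}$, while at the same time $ST(p)=ST(p^\varphi)=ST(p)^\varphi$ is a union of $\langle\varphi\rangle$-orbits on $\mathcal{M}$. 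Thus the whole task reduces to producing a $\varphi$-fixed profile \emph{none of whose stable matchings is $\varphi$-invariant}.

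Next I would normalise $\varphi$. Any two elements of $\mathcal{M}$ are conjugate in $G$ (map one decomposition into disjoint transpositions onto the other), so, using that $ST$ is $G^*$-symmetric, a routine transport-of-structure argument lets us assume $\varphi=(1\ n{+}1)(2\ n{+}2)\cdots(n\ 2n)$. Writing a matching $\mu$ as $\mu(i)=n+\tau(i)$ for $i\in\ldbrack n\rdbrack$ with $\tau\in\mathrm{Sym}(\ldbrack n\rdbrack)$, a direct check gives $\mu^\varphi=\mu$ iff $\varphi\mu=\mu\varphi$ iff $\tau=\tau^{-1}$; that is, $\varphi$-invariant matchings correspond bijectively to involutions of $\ldbrack n\rdbrack$. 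For $n=3$ the involutions of $\ldbrack 3\rdbrack$ are $id,(12),(13),(23)$, so the only matchings that are \emph{not} $\varphi$-invariant are the two ``rotations'' $(15)(26)(34)$ and $(16)(24)(35)$, and these are interchanged by conjugation by $\varphi$.

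For the base case $n=3$ I would take the $\varphi$-fixed profile
\[
p:\qquad 1:[5,6,4],\ \ 2:[6,4,5],\ \ 3:[4,5,6];\qquad 4:[2,3,1],\ \ 5:[3,1,2],\ \ 6:[1,2,3]
\]
(each man's list is the $\varphi$-image of the corresponding woman's list, so $p^\varphi=p$) and verify, by inspecting at most two candidate blocking pairs for each, that every one of the four $\varphi$-invariant matchings $(14)(25)(36)$, $(14)(26)(35)$, $(15)(24)(36)$, $(16)(25)(34)$ is blocked. Since $ST(p)\ne\varnothing$ by Gale and Shapley, this forces $ST(p)\subseteq\{(15)(26)(34),(16)(24)(35)\}$; as neither rotation lies in $C^{\langle\varphi\rangle}(p)$, we get $ST(p)\cap C^{\langle\varphi\rangle}(p)=\varnothing$, settling $n=3$.

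For $n>3$ I would pad the profile: keep women $1,2,3$ and men $n{+}1,n{+}2,n{+}3$ exactly as above, and for each $i\in\{4,\dots,n\}$ let woman $i$ and man $n{+}i$ rank one another first, completing all lists $\varphi$-symmetrically so that the three ``old'' partners are always preferred to the ``new'' ones. Since a woman and a man who top each other are matched in every stable matching, in any stable matching of the padded $p$ woman $i$ is matched to man $n{+}i$ for all $i\ge 4$; the remaining blocking-pair analysis reduces to the $3\times 3$ block (a woman or man from the block cannot block with a padded partner, and conversely), so $ST(p)$ is in bijection with the stable set of the block — still exactly the two rotations, interchanged by $\varphi$. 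Hence $ST(p)\cap C^{\langle\varphi\rangle}(p)=\varnothing$ once more, and Theorem \ref{F-ref-consistent} delivers the impossibility. The main obstacle is the base case: pinning down a $\varphi$-fixed $3\times 3$ profile whose stable set is exactly the pair of $\varphi$-swapped rotations — once the cyclic pattern above is guessed, the verification is short, but finding it (and being certain the stable set is no larger) is the crux.
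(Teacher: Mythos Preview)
Your proposal is correct and follows essentially the same approach as the paper's own proof. Both apply Theorem~\ref{F-ref-consistent} to $F=ST$, construct a $\varphi$-fixed profile whose $3\times 3$ core is the same cyclic ``Latin-square'' pattern (your profile for $n=3$ is literally the paper's), and extend to $n>3$ by padding so that each extra woman--man pair mutually top-rank one another, forcing every stable matching to live in the $3\times 3$ block; the only cosmetic differences are that you first normalise $\varphi$ by conjugation in $G$ (the paper instead writes $y_x=\varphi(x)$ throughout) and that the paper specifies the padded preferences via an explicit cyclic shift $\sigma$ rather than your verbal description.
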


Of course, in light of Theorem \ref{existence3}, one may also wonder whether there are resolute, $\langle \varphi\rangle$-symmetric
and Pareto optimal matching mechanism. While, by Theorem \ref{existence2}, we get existence if $n=2$, when $n\ge 3$ the problem is still open.

\section{A generalization: the model with outside option}\label{outside}

The model studied in this paper considers those situations where individuals are not allowed to stay alone, but are always forced to be matched to another individual. The model where the outside option of being unmatched is given to individuals is important and well studied in the literature. In this section, after recalling how preference profiles, matching and matching mechanisms are defined in the model with outside option, we introduce a concept of symmetry analogous to the one in Definition \ref{usymm}. Then, we propose an impossibility result.

A generalized preference profile is a function $\overline{p}$ from $I$ to the set
\[
\left(\bigcup_{x\in W}\mathbf{L}(M\cup\{x\})\right)\cup \left(\bigcup_{y\in M}\mathbf{L}(W\cup\{y\})\right)
\]
such that, for every $x\in W$, $\overline{p}(x)\in \mathbf{L}(M\cup\{x\})$ and, for every $y\in M$, $\overline{p}(y)\in \mathbf{L}(W\cup\{y\})$.
The set of generalized preference profiles is denoted by $\overline{\mathcal{P}}$.
We represent a generalized preference profile $\overline{p}\in\overline{\mathcal{P}}$ by the table
\[
\begin{array}{|ccc||ccc|}
\hline
1&\ldots&n&n+1&\ldots&2n\\
\hline
\hline
\overline{p}(1)&\ldots&\overline{p}(n)&\overline{p}(n+1)&\ldots&\overline{p}(2n)\\
\hline
\end{array},
\]
where in the first row there are the names of individuals and in the second row the corresponding preferences.
For instance, if $W=\{1,2,3\}$ and $M=\{4,5,6\}$, the table
\begin{equation}\label{tablepgen}
\begin{array}{|ccc||ccc|}
\hline
1&2&3&4&5&6\\
\hline
\hline
4&4&6&2&3&6\\
1&6&5&1&5&2\\
6&2&3&3&2&1\\
5&5&4&4&1&3\\
\hline
\end{array}
\end{equation}
represents the preference profile $\overline{p}$ such that
\[
\overline{p}(1)=[4,1,6,5]\in \mathbf{L}(M\cup\{1\}), \; \overline{p}(2)=[4,6,2,5]\in \mathbf{L}(M\cup\{2\}), \; \overline{p}(3)=[6,5,3,4]\in \mathbf{L}(M\cup\{3\}),
\]
\[
\overline{p}(4)=[2,1,3,4]\in \mathbf{L}(W\cup\{4\}), \; \overline{p}(5)=[3,5,2,1]\in \mathbf{L}(W\cup\{5\}), \; \overline{p}(6)=[6,2,1,3]\in \mathbf{L}(W\cup\{6\}).
\]
Given $\overline{p}\in \overline{\mathcal{P}}$ and $\varphi\in G^*$, taking into account the definition \eqref{prodottorel}, we denote by $\overline{p}^\varphi$ the generalized preference profile defined, for every $z\in I$, by $\overline{p}^\varphi(z)=\varphi \overline{p}(\varphi^{-1}(z))$. Using Lemma \ref{phiRR}, it is easily checked that $\overline{p}^\varphi$ actually belongs to $\overline{\mathcal{P}}$. Similarly to what said for preference profiles in Section \ref{SymMM}, if $\varphi \in G^*$ and $\overline{p}\in\overline{\mathcal{P}}$ is represented by the table
\[
\begin{array}{|ccc||ccc|}
\hline
1&\ldots&n&n+1&\ldots&2n\\
\hline
\hline
\overline{p}(1)&\ldots&\overline{p}(n)&\overline{p}(n+1)&\ldots&\overline{p}(2n)\\
\hline
\end{array},
\]
then $\overline{p}^\varphi$ is represented by the table
\[
\begin{array}{|ccc||ccc|}
\hline
\varphi(1)&\ldots&\varphi(n)&\varphi(n+1)&\ldots&\varphi(2n)\\
\hline
\hline
\varphi \overline{p}(1)&\ldots&\varphi \overline{p}(n)&\varphi \overline{p}(n+1)&\ldots&\varphi \overline{p}(2n)\\
\hline
\end{array}.
\]
Thus, $\overline{p}^\varphi$ is obtained by $\overline{p}$ as if all individual identities were changed according to $\varphi$.
For a concrete example, consider the generalized preference profile $\overline{p}$ represented by the table in \eqref{tablepgen}. If $\varphi=(123)(46)\in G$, then
\[
\overline{p}^{\varphi}=
\begin{array}{|ccc||ccc|}
\hline
2&3&1&6&5&4\\
\hline
\hline
6&6&4&3&1&4\\
2&4&5&2&5&3\\
4&3&1&1&3&2\\
5&5&6&6&2&1\\
\hline
\end{array}
=
\begin{array}{|ccc||ccc|}
\hline
1&2&3&4&5&6\\
\hline
\hline
4&6&6&4&1&3\\
5&2&4&3&5&2\\
1&4&3&2&3&1\\
6&5&5&1&2&6\\
\hline
\end{array};
\]
if  $\varphi=(1426)(35)\in G^*$, then
\[
\overline{p}^{\varphi}=
\begin{array}{|ccc||ccc|}
\hline
4&6&5&2&3&1\\
\hline
\hline
2&2&1&6&5&1\\
4&1&3&4&3&6\\
1&6&5&5&6&4\\
3&3&2&2&4&5\\
\hline
\end{array}
=
\begin{array}{|ccc||ccc|}
\hline
1&2&3&4&5&6\\
\hline
\hline
1&6&5&2&1&2\\
6&4&4&4&3&1\\
4&5&6&1&5&6\\
5&2&4&3&2&3\\
\hline
\end{array}.
\]

A generalized matching is a permutation $\overline{\mu}\in \mathrm{Sym}(I)$ such that, for every $x\in W$, if $\overline{\mu}(x)\neq x$ then  $\overline{\mu}(x)\in M$; for every $y\in M$, if $\overline{\mu}(y)\neq y$ then  $\overline{\mu}(y)\in W$; for every $z\in I$, $\overline{\mu}(\overline{\mu}(z))=z$. The set of generalized matchings is denoted by $\overline{\mathcal{M}}$. For instance, if $W=\{1,2\}$ and $M=\{3,4\}$, then 
\[
\overline{\mathcal{M}}=\Big\{
id_I, (13),(14),(23),(24),(13)(24),(14)(23)\Big\}.
\]
Recalling \eqref{conj-split} and \eqref{conj-cyc}, it is clear that, for every $\overline{\mu}\in\overline{\mathcal{M}}$ and $\varphi\in G^*$, we have $\overline{\mu}^\varphi\in\overline{\mathcal{M}}$. 
Note also that, $\overline{\mu}^\varphi$ is obtained by $\overline{\mu}$ as if all individual identities were changed according to $\varphi$. As an example, assume $W=\{1,2,3\}$ and $M=\{4,5,6\}$ and consider $\overline{\mu}=(15)(26)\in\overline{\mathcal{M}}$ and $\varphi=(1426)(35)\in G^*$. Then $\overline{\mu}^\varphi=(43)(61)$.

Given $\overline{p}\in \overline{\mathcal{P}}$  and $\overline{\mu}\in\overline{\mathcal{M}}$, we say that $\overline{\mu}$ is
stable for $\overline{p}$ if there is no $z\in W\cup M$ such that $z \succ_{\overline{p}(z)} \overline{\mu}(z)$ and there is no $(x,y)\in W\times M$ such that $y\succ_{\overline{p}(x)} \overline{\mu}(x)$ and $x\succ_{\overline{p}(y)} \overline{\mu}(y)$; Pareto optimal for $\overline{p}$ if there is no $\overline{\mu}'\in\overline{\mathcal{M}}$ such that, for every $z \in I$, $\overline{\mu}'(z) \succeq_{\overline{p}(z)} \overline{\mu}(z)$ and there exists $z^* \in I$ such that $\overline{\mu}'(z^*) \succ_{\overline{p}(z^*)} \overline{\mu}(z^*)$. It can be easily shown that if $\overline{\mu}$ is
stable for $\overline{p}$, then it is Pareto optimal for $\overline{p}$. 

A (two-sided) generalized matching mechanism is a correspondence from $\overline{\mathcal{P}}$ to $\overline{\mathcal{M}}$.
Given a generalized matching mechanism $F$, we say that $F$ is decisive if, for every $\overline{p}\in \overline{\mathcal{P}}$, $F(\overline{p})\neq\varnothing$;
resolute if, for every $\overline{p}\in \overline{\mathcal{P}}$, $|F(\overline{p})|=1$; stable if, for every $\overline{p}\in \overline{\mathcal{P}}$, all the elements of $F(\overline{p})$ are stable for $\overline{p}$; Pareto optimal if, for every $\overline{p}\in \overline{\mathcal{P}}$, all the elements of $F(\overline{p})$ are Pareto optimal for $\overline{p}$; $U$-symmetric, where $U\subseteq G^*$, if, for every $\overline{p}\in \overline{\mathcal{P}}$ and $\varphi\in U$, $F(\overline{p}^{\varphi})= F(\overline{p})^\varphi$. As for the model without outside option, we simply say symmetric instead of $G^*$-symmetric.

We stress that, even in the considered framework, the properties of $G$-symmetry (anonymity) and $(G*\setminus G)$-symmetry (gender-fairness) are employed by some authors, specifically to obtain characterization results (see, for instance, Theorem 2 in Miyagawa, 2002 and Theorem 3.1 in Nizamogullari and Ozkal-Sanver, 2014). It is also worth noting that, using the same argument as in the proof of Proposition \ref{iff}, we can deduce that, for every $U\subseteq G^*$, a generalized matching mechanism is $U$-symmetric if and only if it is $\langle U\rangle$-symmetric. Therefore, by Proposition \ref{generato1}, $(G^*\setminus G)$-symmetry is equivalent to symmetry. 

As a consequence of Theorem \ref{impossibilita1}, we are able to prove the following impossibility result, whose proof is in Appendix \ref{main-proofs}.

\begin{theorem}\label{gen-teo}
There exists no resolute, symmetric and Pareto optimal generalized matching mechanism.
\end{theorem}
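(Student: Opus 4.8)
The plan is to reduce the statement about generalized matching mechanisms to the impossibility result already established for the model without outside option, namely Theorem \ref{impossibilita1}. The intuition is that ordinary preference profiles and ordinary matchings sit inside the generalized model as the special case where every individual ranks the outside option last and no one is left unmatched; if the generalized model admitted a resolute, symmetric and Pareto optimal mechanism, we could restrict it to this special case and obtain a resolute, symmetric mechanism on $\mathcal{P}$ that is Pareto optimal, hence minimally optimal, contradicting Theorem \ref{impossibilita1}.

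First I would make the embedding precise. For $p\in\mathcal{P}$, define the generalized profile $\iota(p)\in\overline{\mathcal{P}}$ by appending, for each individual $z$, the option $z$ itself at the bottom of the list $p(z)$; that is, $\iota(p)(x)=[p(x),x]$ for $x\in W$ and $\iota(p)(y)=[p(y),y]$ for $y\in M$. The key compatibility facts to check are: (a) $\iota$ commutes with the $G^*$-action, i.e. $\iota(p)^\varphi=\iota(p^\varphi)$ for every $\varphi\in G^*$ — this follows from Lemma \ref{phiRR} together with \eqref{action-w}, since conjugation/relabelling sends the bottom element $z$ of $p(z)$ to the bottom element $\varphi(z)$ of $\varphi p(z)$, which is exactly the appended outside option of $\iota(p^\varphi)$ at $\varphi(z)$; and (b) for such a profile $\iota(p)$, every Pareto optimal generalized matching is in fact a genuine matching in $\mathcal{M}$, and it is Pareto optimal for $p$ in the sense of Definition \ref{DefOptCond}. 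Fact (b) holds because if a generalized matching $\overline\mu$ leaves some individual $z$ unmatched, then (since $z$ is ranked strictly last in $\iota(p)(z)$, below every genuine partner) the genuine matching obtained by pairing $z$ with, say, the partner of $z$'s most preferred available mate, improves $z$ strictly and, after iterating/repairing into a full matching in $\mathcal{M}$, weakly improves everyone — more carefully, one observes directly that any $\overline\mu\in\overline{\mathcal{M}}$ with an unmatched individual is Pareto dominated within $\overline{\mathcal{M}}$ by some element of $\mathcal{M}$, so a Pareto optimal generalized matching for $\iota(p)$ must lie in $\mathcal{M}$; and once it lies in $\mathcal{M}$, comparison with other generalized matchings that are also in $\mathcal{M}$ reduces to the ordinary Pareto comparison, while comparison with generalized matchings leaving someone unmatched cannot produce domination since being unmatched is strictly worst.

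Then I would argue as follows. Suppose, for contradiction, that $F$ is a resolute, symmetric, Pareto optimal generalized matching mechanism. Define $\widehat F$ on $\mathcal{P}$ by $\widehat F(p)\coloneq F(\iota(p))$. By fact (b), $\widehat F(p)\subseteq\mathcal{M}$ and every element of $\widehat F(p)$ is Pareto optimal for $p$, so $\widehat F$ is a well-defined, resolute, Pareto optimal matching mechanism; by Proposition \ref{implications}(ii)–(iii) it is in particular minimally optimal. Symmetry of $\widehat F$ follows from fact (a): for $\varphi\in G^*$, $\widehat F(p^\varphi)=F(\iota(p^\varphi))=F(\iota(p)^\varphi)=F(\iota(p))^\varphi=\widehat F(p)^\varphi$, using that $\mathcal{M}^\varphi=\mathcal{M}$. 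Thus $\widehat F$ is a resolute, symmetric, minimally optimal matching mechanism, contradicting Theorem \ref{impossibilita1}. Hence no such $F$ exists.

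The main obstacle I anticipate is the rigorous verification of fact (b) — that Pareto optimality in the generalized model, applied to profiles of the form $\iota(p)$, genuinely forces full matchings and then collapses to ordinary Pareto optimality. One has to be a little careful because a generalized matching that leaves two individuals unmatched is not automatically dominated by pairing exactly those two (they might not be a man and a woman), so the domination argument must either invoke a global rearrangement into some element of $\mathcal{M}$ or exploit that any unmatched woman together with any man's partner already yields a strict improvement for at least that woman while harming no one after a careful swap. Making this "repair into a full matching without hurting anyone" step airtight, rather than hand-wavy, is where the real work lies; everything else (the equivariance of $\iota$, the transfer of resoluteness and symmetry) is routine bookkeeping built on Lemma \ref{phiRR} and \eqref{action-w}.
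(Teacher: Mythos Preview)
Your proposal is correct and follows essentially the same route as the paper: embed $\mathcal{P}$ into $\overline{\mathcal{P}}$ by appending the outside option at the bottom, check equivariance of the embedding under $G^*$, argue that Pareto optimality at such profiles forces a full matching, and then pull back the hypothetical generalized mechanism to contradict Theorem~\ref{impossibilita1}. Your concern about fact~(b) is easily dispatched: since $|W|=|M|$ and every nontrivial $2$-cycle of $\overline{\mu}$ pairs one woman with one man, the set of unmatched individuals has the same number of women as men, so if it is nonempty you can pick one unmatched $x\in W$ and one unmatched $y\in M$ and replace $\overline{\mu}$ by $(x\,y)\overline{\mu}$, which strictly improves $x$ and $y$ and leaves everyone else fixed---this single step already shows $\overline{\mu}$ is not Pareto optimal, no global repair needed.
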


Of course, Theorem \ref{gen-teo} implies that there exists no resolute, symmetric and stable generalized matching mechanism. On the other hand, the existence of a resolute and symmetric generalized matching mechanism remains an open problem.

\section{Conclusions and future research}

In this paper we focused on balanced markets where $|W|=|M|$ and proved several possibility and impossibility results for matching mechanisms satisfying resoluteness, a given level of symmetry, (i.e. $G^*$-symmetry or $\langle \varphi\rangle$-symmetry), and possibly other desirable properties (i.e. stability, Pareto optimality, weak Pareto optimality and minimal optimality). The two main results of the paper (Theorem \ref{n2} and Theorem \ref{main-mathc}) state that resoluteness and $G^*$-symmetry are consistent with each other if and only if the size of $W$ and $M$ is odd. Moreover, we prove that resoluteness and $G^*$-symmetry are inconsistent with minimal optimality, definitely a very weak condition (Theorem \ref{impossibilita1}); there exist resolute and $\langle\varphi\rangle$-symmetric matching mechanisms that are weak Pareto optimal but none of them can be stable, unless the size of $W$ and $M$ is 2 (Theorem \ref{existence3}, Theorem \ref{existence2}, Theorem \ref{endriss}). Roughly, the conclusions that we can draw from these results could be recap as follows: when we restrict attention to resolute matching mechanisms, the highest level of symmetry, that is $G^*$-symmetry, is incompatible with any sort of optimality condition while weaker symmetry conditions, still involving a comparison between the two groups of agents, are incompatible with strong optimality conditions, like stability, but may be consistent with other interesting optimality conditions, like weak Pareto optimality.

These results are obtained by employing suitable algebraic methods based on group theory, an approach strongly innovative and not yet explored in matching theory. Our paper demonstrates that such an approach is fruitful and able to provide novel and remarkable insight into issues that pertain to fairness.

All these findings constitute a first step in the direction of a complete understanding of the Main Problem described in Section \ref{ResProb}. Several other areas deserve attention for future research.

First of all, one could focus on other forms of fairness which are intermediate between $G^*$-symmetry and
$\langle \varphi\rangle$-symmetry and economically sound, as well. The purpose is to get, in particular, positive results for these intermediate forms of symmetry when $n$ is even. As an example, in a framework where it is natural to identify a partition $\{W_1,W_2\}$ of $W$ and  
a partition $\{M_1,M_2\}$ of $M$ that are economically meaningful and such that $|W_1|=|M_1|$ and $|W_2|=|M_2|$, it could reasonable to focus on $U$-symmetry, where 
\[
U=\{\varphi\in G^*: \varphi(W_1)=W_1 \mbox{ or }\varphi(W_1)=M_1\}.
\] 
In other words, one may focus on those changes in individual identities within the four sets $W_1$, $W_2$, $M_1$ and $M_2$ or the ones across the pair $W_1$ and $M_1$ and the pair $W_2$ and $M_2$. 

Another possible approach for getting possibility results consists in analyzing whether it is possible to identify suitable preference domain restrictions that guarantee the existence of resolute matching mechanisms satisfying strong optimality conditions, like stability, while maintaining a sufficiently high level of symmetry.

Finally, one could ask whether our notion of $U$-symmetry and the
possibility and impossibility results provided in our paper can be
extended to unbalanced markets, namely where the two sets $W$ and $M$ have
different sizes (and where, of course, individuals have the outside option to be unmatched). However, we believe that symmetries involving permutations that change individuals' identities
across the two groups do not make sense in such a framework as the two sets are
substantially different because of their different sizes. On the other hand, symmetries involving only permutations that change individuals' identities within
the two groups (that is, weaker forms of anonymity) may be considered and it seems reasonable to conjecture that any matching mechanisms,  satisfying a symmetry condition of that type, admits a resolute refinement satisfying the same level of symmetry.

\section*{Declarations of competing interest}

The authors have no competing interests to declare that are relevant to the content
of this article.

\section*{Data availability}

No data was used for the research described in the article.

\section*{Acknowledgements}
The authors gratefully thank an Associate Editor and two referees for their valuable comments which contributed to improve the paper.
The authors also wish to express their gratitude to Pablo Spiga for inspiring discussions and suggestions about centralizers of semiregular subgroups. Daniela Bubboloni has been supported by GNSAGA of INdAM (Italy),  by the European Union - Next Generation EU, Missione 4 Componente 1, PRIN 2022-2022PSTWLB - Group Theory and Applications, CUP B53D23009410006, and by local funding from the Universit\`a degli Studi di Firenze. Michele Gori has been supported by local funding from the Universit\`a degli Studi di Firenze. Claudia Meo has been supported by the national project PRIN 2022-2022HLPMKN - Externalities and Fairness in Allocations and Contracts, CUP E53D23006290006.

\appendix

\renewcommand\thesection{\Alph{section}}

\section{Optimality conditions} \label{OptCond}

In this section we first deepen the property of minimal optimality; then, we provide the proof of Proposition \ref{implications} about the relationships among the four optimality conditions introduced in Definition \ref{DefOptCond}.

\begin{proposition}\label{soloMO}
Let $p\in \mathcal{P}$. Then there exists at most one matching that is not minimally optimal for $p$.
\end{proposition}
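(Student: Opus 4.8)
The statement asserts uniqueness of a non-minimally-optimal matching for a fixed $p$. Recall from Definition~\ref{DefOptCond} that $\mu$ is \emph{not} minimally optimal for $p$ precisely when $\mathrm{Rank}_{p(z)}(\mu(z))=n$ for every $z\in I$, i.e.\ every individual is matched to the bottom of her/his list. The plan is to show that there is at most one matching with this extremal property, by a direct combinatorial argument.

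First I would suppose, for contradiction, that $\mu$ and $\nu$ are two distinct matchings, both failing minimal optimality for $p$. Then for every woman $x\in W$ we have $\mathrm{Rank}_{p(x)}(\mu(x))=\mathrm{Rank}_{p(x)}(\nu(x))=n$; since $p(x)\in\mathbf{L}(M)$ is a linear order and the element of rank $n$ is uniquely determined, this forces $\mu(x)=\nu(x)$. Symmetrically, for every man $y\in M$ the bottom element of $p(y)\in\mathbf{L}(W)$ is unique, so $\mu(y)=\nu(y)$. Hence $\mu(z)=\nu(z)$ for all $z\in I=W\cup M$, i.e.\ $\mu=\nu$, contradicting distinctness. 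This already gives the result, so strictly speaking the contradiction setup is unnecessary: one can phrase it directly as ``if $\mu$ is not minimally optimal then $\mu(x)$ is the unique $p(x)$-minimal man for each $x$ and $\mu(y)$ is the unique $p(y)$-minimal woman for each $y$, so $\mu$ is completely determined.''

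The only point requiring a line of care is that the function $z\mapsto(\text{bottom of }p(z))$ so described is actually a matching at all — but we do not need this: the statement only claims \emph{at most one}, so the candidate need not exist. Thus there is no real obstacle here; the argument is essentially immediate once the definition is unwound, the key observation being that ``rank $=n$'' pins down a single element of a finite linearly ordered set. I would write it in two or three sentences without introducing further machinery.
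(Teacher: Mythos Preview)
Your proposal is correct and essentially identical to the paper's proof: both observe that if $\mu$ is not minimally optimal then $\mu(z)$ must equal the unique rank-$n$ element of $p(z)$ for every $z\in I$, so $\mu$ is completely determined by $p$. The paper writes this directly (without the contradiction setup, as you yourself suggest), using the notation $\mu(z)=\mathrm{Rank}_{p(z)}^{-1}(n)$.
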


\begin{proof}
Consider $\mu,\mu'\in\mathcal{M}$ such that $\mu,\mu'\not\in MO(p)$. Then, for every $z\in I$, we have $\mathrm{Rank}_{p(z)}(\mu(z))=n$ and $\mathrm{Rank}_{p(z)}(\mu'(z))=n$. Thus, for every $z\in I$, $\mu(z)=\mu'(z)=\mathrm{Rank}_{p(z)}^{-1}(n)$. Thus, $\mu=\mu'$.
\end{proof}

From the proof of the previous proposition we understand that, given $p\in\mathcal{P}$, we have that $MO(p)=\mathcal{M}\setminus \{\sigma_p\}$, where $\sigma_p:I\to I$ is defined, for every $z\in I$, by $\sigma_p(z)=\mathrm{Rank}_{p(z)}^{-1}(n)$.
Thus, $MO(p)=\mathcal{M}$ if and only if $\sigma_p\notin \mathcal{M}$, that is, it is not bijective or does not satisfies $\sigma_p^2=id_I$ (since surely $\sigma_p(W)\subseteq M$ and $\sigma_p(M)\subseteq W$). For instance, if $p\in \mathcal{P}$ is represented by the table in \eqref{tablep}, then $\sigma_p$ is the matching $(1 6)(25)(34)$ and then $MO(p)\neq\mathcal{M}$.  If instead $p\in \mathcal{P}$ is represented by the table 
\[
\begin{array}{|ccc||ccc|}
\hline
1&2&3&4&5&6\\
\hline
\hline
4&4&6&2&2&3\\
5&5&5&1&1&2\\
6&6&4&3&3&1\\
\hline
\end{array},
\]
we have that $\sigma_p$ is not bijective because $\sigma_p(1)=\sigma_p(2)=6$, and then $MO(p)=\mathcal{M}$. Finally, consider $p\in \mathcal{P}$ represented by the table 
\[
\begin{array}{|ccc||ccc|}
\hline
1&2&3&4&5&6\\
\hline
\hline
4&4&6&3&2&3\\
5&6&5&1&1&2\\
6&5&4&2&3&1\\
\hline
\end{array}.
\]
In this case $\sigma_p$ is bijective. However, $\sigma_p$ fails to satisfy $\sigma_p^2=id_I$ because
$\sigma_p(2)=5$ and $\sigma_p(5)=3\neq 2$. Thus,  $MO(p)=\mathcal{M}$.

Let us introduce some notation that will be used in the proof of Proposition \ref{implications} and also in the sequel.
Let $n\in \mathbb{N}$.  
Given $x\in \mathbb{Z}$, we denote the equivalence class of $x$ modulo $n$ by $[x]_n\coloneq\{x+kn: k\in \mathbb{Z}\}$. Note that, for every  $x\in \mathbb{Z}$, $[x]_n\cap \ldbrack n \rdbrack$ is a singleton whose unique element can be taken as a canonical representative for the set $[x]_n$.

\begin{proof}[Proof of Proposition \ref{implications}]
$(i)$ Let $\mu$ be stable for $p$. Assume by contradiction that $\mu$ is not  Pareto optimal for $p$. Then, there exists $\mu'\in\mathcal{M}$ such that, for every $z \in I$, $\mu'(z) \succeq_{p(z)} \mu(z)$ and there exists $z^* \in I$ such that $\mu'(z^*) \succ_{p(z^*)} \mu(z^*)$. Consider then $z^*$ and set $z' = \mu'(z^*)$.
Then we have $z'=\mu'(z^*) \succ_{p(z^*)} \mu(z^*)$ and $z^*=\mu'(z') \succeq_{p(z')} \mu(z')$. Note that $z^*\neq \mu(z')$. Indeed, assume the contrary. Then
 $z'=\mu(z^*)$ and thus $z'\succ_{p(z^*)} z'$, a contradiction.
As a consequence, we have $z^*=\mu'(z') \succ_{p(z')} \mu(z')$ and hence
 the pair $(z^*,z')$ blocks $\mu$ according to $p$. Thus, $\mu$ is not stable, a contradiction.

$(ii)$ Trivial.

$(iii)$ By Proposition \ref{soloMO}, we have that there exists at most one matching not minimally optimal for $p$. If there is none, then  trivially $WPO(p)\subseteq \mathcal {M}=MO(p).$ Assume instead that $MO(p)= \mathcal {M}\setminus\{\mu^*\}$. In order to prove that  $WPO(p)\subseteq MO(p)$, it is enough to show that $\mu^*\notin WPO(p).$
Let $\mu'\in\mathcal{M}$ be the unique matching such that, for every $y\in M$, $\mu'(y)$ is the unique woman in the set $[\mu^*(y)+1]_n$.
 Observe that, since $n\geq 2$, for every $z\in I$, we have $\mu'(z)\neq\mu^*(z)$ and therefore
$\mathrm{Rank}_{p(z)}(\mu'(z))<n$. As a consequence, for every $z\in I$, $\mu'(z) \succ_{p(z)} \mu^*(z)$. Thus  $\mu^*$ is not weakly Pareto optimal for $p$.
\end{proof}

\section{Group theory in our setup} \label{GT}

\subsection{Preliminary results}\label{GT1}

We start with some additional basic notion from group theory.

Let $H$ be a finite group. Lagrange Theorem  states that the size of a subgroup of $H$ divides the size of $H$ (see Theorem 1.26 in  Milne, 2021). Given $h\in H$, the order of $h$ is defined by $|h|\coloneq|\langle h\rangle|$ and we have that:   
$\langle h\rangle=\{h^j:j\in\mathbb{N}\}$; $|h|$ is the minimum $j\in\mathbb{N}$ such that $h^j=1$; $h^b=1$ for some $b\in \mathbb{N}$ if and only if $|h|$ divides $b$; by Lagrange Theorem, $|h|$ divides $|H|$. Cauchy Theorem partially reverses Lagrange Theorem assuring that if $r$ is a prime dividing $|H|$, then there exists in $H$ an element of order $r$ (see Theorem 4.13 in Milne, 2021).
We say that $H$ is cyclic if there exists $h\in H$ such that $H=\langle h\rangle$. In that case, it is customary to denote the group by $C_m$, where $m=|H|$.

Given $K\le H$, we say that $K$ is a normal subgroup of $H$ if, for every $h\in H$, $K^h=K$ (see Chapter $1$ in Milne, 2021). For the purposes of our paper it is important to emphasize that $G$ is a normal subgroup of $G^*$. Indeed, let $\sigma\in G$ and $\varphi\in G^*$. Then all the cycles in which $\sigma$ splits have support included in $W$ or in $M$ and thus, by \eqref{conj-split} and \eqref{conj-cyc}, the same happens to $\sigma^{\varphi}$. As a consequence, we have $\sigma^{\varphi}\in G$, and hence,  recalling that conjugation establishes a bijection, $G^{\varphi}=G.$ 

Given a finite group $K$, a homomorphism from $H$ to $K$ is a function $f:H\rightarrow K$ such that, for every $h_1,h_2\in H$, $f(h_1h_2)=f(h_1)f(h_2)$, where in both groups the operation are denoted by juxtaposition. 
A homomorphism $f$ from $H$ to $K$ is called (group) isomorphism from $H$ to $K$ if it is bijective. 
The groups $H$ and $K$ are called isomorphic if there exists an isomorphism from $H$ to $K$. In such a case, we write $H\cong K$. 

An action of $H$ on a nonempty finite set $X$ is a homomorphism from $H$ to $\mathrm{Sym}(X)$. Assume that $H$ acts on $X$ through the action $f$. For every $h\in H$ and $x\in X$, we usually write $x^h$ instead of $f(h)(x)$.  In particular, the action is not explicitly mentioned.
Given $x\in X$, the $H$-orbit of $x$ is the  set $x^H \coloneq\{x^h\in X:h\in H\}$; the $H$-stabilizer of $x$ is the subgroup of $H$ defined by $\mathrm{Stab}_H(x) \coloneq\{h\in H: x^h=x\}$. A well-known result establishes that, for every $x\in X$,
\[
|x^H|=\frac{|H|}{|\mathrm{Stab}_H(x)|},
\]
and so, in particular, $|x^H|$ divides $|H|$ (see Chapter $4$ in Milne, 2021).
The set $X^H \coloneq\{x^H\subseteq X:x\in X\}$ is a partition\footnote{A partition of a nonempty set $X$ is a set of nonempty pairwise disjoint subsets of $X$ whose union is $X$.} of $X$ called the set of the $H$-orbits of $X$ or, alternatively, the set of orbits of $H$ on $X$.  The reference to $X$ is usually omitted when $X$ is clear from the context.

Consider a nonempty finite set $X$.  
Given $H\le \mathrm{Sym}(X)$, we can consider the natural action $f$ of $H$ on $X$ defined, for every $\varphi\in H$, by $f(\varphi)=\varphi$. Thus, with respect to that action, for every $x\in X$, we have $f(\varphi)(x)=\varphi(x)$, $x^H=\{\varphi(x)\in X:\varphi\in H\}$ and $\mathrm{Stab}_H(x) =\{\varphi\in H: \varphi(x)=x\}$.

Let $\varphi\in \mathrm{Sym}(X)$.  Since $\langle \varphi\rangle\le\mathrm{Sym}(X)$, we can consider the natural action of $\langle \varphi\rangle$ on $X$. Given $x\in X$, we have that $x^{\langle \varphi\rangle}=\{\varphi^j(x)\in X:j\in\mathbb{N}\}$ and $|x^{\langle \varphi\rangle}|=1$ if and only if $x$ is a fixed point for $\varphi$, that is $\varphi(x)=x$. The set of the $\langle \varphi\rangle$-orbits of $X$, that is the set $X^{\langle \varphi\rangle}=\{x^{\langle \varphi\rangle}\subseteq X: x\in X\}$, is also called the set of the orbits of $\varphi$ on $X$. Let $r=|X^{\langle \varphi\rangle}|\in\mathbb{N}$ be the number of the distinct orbits of $\varphi$ on $X$ and let $x_1,\dots, x_r\in X$ be such that $X^{\langle \varphi\rangle}=\{x_j^{\langle \varphi\rangle}:j\in \ldbrack r \rdbrack\}$. The elements $x_1,\dots, x_r$ are called representatives of the orbits of $\varphi$ on $X$. For every $j\in \ldbrack r\rdbrack$, set $b_j\coloneq |x_j^{\langle \varphi\rangle}|$. The type of $\varphi$ is defined by the unordered list $T_{\varphi}\coloneq[b_1,\dots,b_r]$, whose elements are called the parts of $T_\varphi$. We have that $\sum_{j=1}^rb_j=|X|$ and $|\varphi|=\mathrm{lcm}(b_1,\dots,b_r)$, where lcm denotes the least common multiple. 
Let $J=\{j\in \ldbrack r\rdbrack: b_j\ge 2\}$. If $J=\varnothing$, then $\varphi=id_X$; if $J\neq\varnothing$, then $\varphi\neq id_X$  and $\varphi$ decomposes into the $|J|$ disjoint cycles $\varphi_1, \dots, \varphi_{|J|}$, where, for every $j\in J$, $\varphi_j$ is the $b_j$-cycle given by $\varphi_j\coloneq (x_j\ \varphi(x_j) \cdots \varphi^{b_j-1}(x_j))$. In particular, for every $j\in  \ldbrack r\rdbrack$ and $x\in x_j^{\langle \varphi\rangle}$, $\varphi^{b_j}(x)=x$. 
When we need to emphasize the distinct parts $b_1,\dots, b_s$ of $T_{\varphi}$, the notation $T_{\varphi}=[b_1^{(r_1)},\dots, b_s^{(r_s)}]$ will be used, where, for every $i\in\ldbrack s\rdbrack$, $r_i$ counts how many times $b_i$ appears in $T_\varphi$. As an example, if $T_{\varphi}=[1,1,2,2,3,4,4,4]$, then we can write $T_{\varphi}=\left[1^{(2)},2^{(2)},3,4^{(3)}\right]$.
It is well known that, given $\varphi_1, \varphi_2\in \mathrm{Sym}(X)$, we have $T_{\varphi_1}=T_{\varphi_2}$ if and only if $\varphi_1$ and $\varphi_2$ are conjugate in $\mathrm{Sym}(X)$ (see Proposition 4.30 in Milne, 2021).

Given $S\subseteq \mathrm{Sym}(X)$ and $\varphi\in \mathrm{Sym}(X)$, we set
\[
\varphi S\coloneq\left\{\rho \in  \mathrm{Sym}(X): \exists\sigma \in S \mbox{ such that }\rho=\varphi\sigma\right\},
\]
\[
S\varphi \coloneq\left\{\rho \in  \mathrm{Sym}(X): \exists\sigma \in S \mbox{ such that }\rho=\sigma\varphi\right\}.
\]
Of course, $id_X\, S=S$ and $S\, id_X=S$.
Moreover, for every $\varphi_1,\varphi_2\in \mathrm{Sym}(X)$, we have that $(\varphi_1\varphi_2)S=\varphi_1(\varphi_2 S)$,
$S(\varphi_1\varphi_2)=(S \varphi_1)\varphi_2$, and $(\varphi_1 S)\varphi_2=\varphi_1(S\varphi_2)$. In particular,
given $\varphi_1,\varphi_2,\varphi_3,\varphi_4\in \mathrm{Sym}(X)$, the writing $\varphi_1\varphi_2S\varphi_3\varphi_4$ without brackets is unambiguous: in any way we perform the operations we always get the same set. Note that if $S\subseteq T$ and $\varphi\in \mathrm{Sym}(X)$, then $\varphi S\subseteq \varphi T$ as well as $S\varphi\subseteq T\varphi.$

\subsection{Basic applications and the action on $\mathcal{P}$}\label{GT2}

We collect in this section some fundamental points where group theory comes at play in our model.
Proposition \ref{car-match} establishes some useful properties of matchings. Proposition \ref{action-mat} shows that a subgroup $U$ of $G^*$ acts on the set of preference profiles $\mathcal{P}$, a result that enables to exploit many general facts from group theory. We propose then some consequences of the aforementioned action: the proof of Proposition \ref{iff}, which states that dealing with symmetry for subsets of $G^*$ is equivalent to dealing with symmetry for subgroups of $G^*$, and Propositions \ref{generato1} and \ref{generato2}. Finally, Proposition \ref{STsym} and Proposition \ref{SEsym} are about the different levels of symmetry exhibited by some matching mechanisms studied in the literature.

\begin{proposition} \label{car-match} The following facts hold true:
\begin{itemize}
\item[$(i)$] $\mathcal{M}=\{\mu\in G^*\setminus G: |\mu|=2\}\neq \varnothing;$
\item[$(ii)$] $\mathcal{M}=\{\mu\in G^*\setminus G: T_{\mu}=[2^{(n)}]\};$
\item[$(iii)$] for every $\varphi\in G^*$,  $\mathcal{M}^{\varphi}= \mathcal{M}.$
\end{itemize}
\end{proposition}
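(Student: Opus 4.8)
The three claims are purely group-theoretic translations of the elementary definition of a matching, so the plan is to unwind the definitions and use the cycle-decomposition facts \eqref{conj-split} and \eqref{conj-cyc} already recorded in the paper.

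For $(i)$, I would first argue $\mathcal{M}\subseteq\{\mu\in G^*\setminus G:|\mu|=2\}$. Let $\mu\in\mathcal{M}$. Since $\mu(W)\subseteq M$ and $\mu(M)\subseteq W$ and $\mu$ is a bijection with $|W|=|M|$, we get $\mu(W)=M$ and $\mu(M)=W$, hence $\{\mu(W),\mu(M)\}=\{W,M\}$ so $\mu\in G^*$, and $\mu\notin G$ because $\mu(W)=M\neq W$ (here $n\geq 2$, in fact $n\geq 1$, guarantees $W\neq M$). The condition $\mu(\mu(z))=z$ for all $z$ says $\mu^2=id_I$; since $\mu\neq id_I$ (again $\mu(W)=M\neq W$), the order of $\mu$ is exactly $2$. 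Conversely, let $\mu\in G^*\setminus G$ with $|\mu|=2$. From $\mu\in G^*$ and $\mu\notin G$ we get $\mu(W)=M$ and $\mu(M)=W$; in particular $\mu(x)\in M$ for every $x\in W$ and $\mu(y)\in W$ for every $y\in M$. From $|\mu|=2$ we get $\mu^2=id_I$, i.e. $\mu(\mu(z))=z$ for all $z\in I$. Hence $\mu\in\mathcal{M}$. Non-emptiness is immediate: the map sending $i\mapsto n+i$ and $n+i\mapsto i$ for $i\in\ldbrack n\rdbrack$ is a matching (explicitly, the product $(1\ n{+}1)(2\ n{+}2)\cdots(n\ 2n)$), so $\mathcal{M}\neq\varnothing$.

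For $(ii)$, the point is that a permutation of order $2$ inside $G^*\setminus G$ is exactly one whose type is $[2^{(n)}]$. One inclusion: if $T_\mu=[2^{(n)}]$ then $|\mu|=\mathrm{lcm}(2,\dots,2)=2$, so by $(i)$, $\mu\in\mathcal{M}$ whenever also $\mu\in G^*\setminus G$. For the other inclusion, take $\mu\in\mathcal{M}$; by $(i)$ it has order $2$, so every part of $T_\mu$ divides $2$, i.e. each part is $1$ or $2$. It remains to rule out fixed points. If $\mu(z)=z$ for some $z$, say $z=x\in W$, then $\mu(x)=x\notin M$, contradicting $\mu(W)=M$; similarly for $z\in M$. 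Hence $\mu$ has no fixed point, every part equals $2$, and since the parts sum to $|I|=2n$ there are exactly $n$ of them, giving $T_\mu=[2^{(n)}]$. Combined with $\mu\in G^*\setminus G$ from $(i)$, this proves the asserted equality.

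For $(iii)$, fix $\varphi\in G^*$ and $\mu\in\mathcal{M}$; I must show $\mu^\varphi\in\mathcal{M}$, and then equality follows because conjugation by $\varphi$ is a bijection of $\mathrm{Sym}(I)$ (so $\mathcal{M}^\varphi$ and $\mathcal{M}$ have the same finite cardinality and one contains the other). Conjugation is a homomorphism, so $|\mu^\varphi|=|\mu|=2$; also $\mu^\varphi\in G^*$ since $G^*$ is a subgroup (being closed under the group operation, it is closed under conjugation by its own elements). Finally $\mu^\varphi\notin G$: by \eqref{conj-split} and \eqref{conj-cyc}, writing $\mu$ as a product of $2$-cycles $(z\ \mu(z))$, the conjugate $\mu^\varphi$ is the product of the $2$-cycles $(\varphi(z)\ \varphi(\mu(z)))$; if $\varphi\in G$ these supports meet both $W$ and $M$ exactly as before, and if $\varphi\in G^*\setminus G$ they are swapped, but in either case $\mu^\varphi$ maps $W$ onto $M$, not onto $W$, so $\mu^\varphi\notin G$. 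By $(i)$, $\mu^\varphi\in\mathcal{M}$. Hence $\mathcal{M}^\varphi\subseteq\mathcal{M}$, and by the cardinality remark $\mathcal{M}^\varphi=\mathcal{M}$.

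\textbf{Main obstacle.} There is no real obstacle here; the only subtlety is being careful that ``$\mu\notin G$'' survives conjugation by elements of $G^*\setminus G$ in part $(iii)$, which is precisely where one must observe that such a $\varphi$ swaps $W$ and $M$ and hence still sends a $W\!\to\!M$ bijection to a $W\!\to\!M$ bijection. A cleaner alternative for $(iii)$ is simply: conjugation preserves the type, so $T_{\mu^\varphi}=T_\mu=[2^{(n)}]$; and $G^*\setminus G$ is conjugation-invariant in $G^*$ because $G\trianglelefteq G^*$ (already noted in Appendix~\ref{GT1}), so $\mu\in G^*\setminus G$ forces $\mu^\varphi\in G^*\setminus G$; apply $(ii)$.
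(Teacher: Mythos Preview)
Your proof is correct and follows essentially the same approach as the paper. The only minor variation is in $(iii)$: your main argument verifies $\mu^\varphi\notin G$ by explicitly tracking the cycle supports under $\varphi$, whereas the paper (like your ``cleaner alternative'') invokes the normality of $G$ in $G^*$ to conclude that $G^*\setminus G$ is stable under conjugation; for the reverse inclusion the paper conjugates by $\varphi^{-1}$ rather than appealing to cardinality, but these are equivalent one-line finishes.
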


\begin{proof}
 $(i)$ Let $\mu\in \mathcal{M}$. Then, by definition, $\mu$ leaves the partition $\{W,M\}$ of $I$ invariant and swaps $W$ and $M$; hence $\mu\in G^*\setminus G$. In particular, $\mu\neq id_I$. Moreover, the fact that $\mu(\mu(z))=z$ for all $z\in I$ implies that $|\mu|=2.$ The other inclusion is obvious. The nonemptyness of $\mathcal{M}$ is immediate since $\mu= (1\ n+1)\cdots (n\ 2n)$ is a matching.

$(ii)$ By $(i)$, we just need to show that $\mu\in G^*\setminus G$ has order $2$ if and only if $T_{\mu}=[2^{(n)}].$ If $T_{\mu}=[2^{(n)}]$, then obviously $|\mu|=2$. Let  $\mu\in G^*\setminus G$ with $|\mu|=2$. Then $T_{\mu}=[2^{(k)}, 1^{(s)}]$ with $2k+s=2|W|$ and $s\geq 0.$ Since $\mu$ takes $W$ into $M$ and $M$ into $W$, every orbit of $\mu$ contains at least an element in $W$  and an element in $M$.
As a consequence, no part of $T_{\mu}$ can be smaller than $2$ and thus $s=0.$

$(iii)$ We first show that, for every $\varphi\in G^*$, $\mathcal{M}^\varphi\subseteq \mathcal{M}$.
Let $\varphi\in G^*$. Consider then $\mu\in\mathcal{M}$ and prove that $\mu^{\varphi}\in\mathcal{M}$. By $(i)$, we have $\mu\in G^*\setminus G$ and  $|\mu|=2.$ Since $G^*$ is a group, $\mu^\varphi\in G^*$. Assume that $\mu^\varphi\in G.$ Since $G$ is normal in $G^*$, we deduce that $\mu=(\mu^\varphi)^{\varphi^{-1}} \in G$, a contradiction. Thus $\mu^\varphi\in G^*\setminus G$. Moreover, we have $|\mu^\varphi|=|\mu|$ as $\mu$ and $\mu^\varphi$ are conjugate. Thus, by $(i)$, we get $\mu^\varphi\in \mathcal{M}$. 

We complete the proof showing that, for every $\varphi\in G^*$, $\mathcal{M}\subseteq \mathcal{M}^\varphi$.
Let  $\varphi\in G^*$. Since $\varphi^{-1}\in G^*$, by the first part of the proof, we get
$\mathcal{M}^{\varphi^{-1}}\subseteq \mathcal{M}$ and then
$\mathcal{M}=(\mathcal{M}^{\varphi^{-1}})^\varphi\subseteq \mathcal{M}^\varphi$.
\end{proof}

\begin{proposition}\label{action-mat}
Let $U\le G^*$. Then, the function $f:U\to \mathrm{Sym}(\mathcal{P})$ defined, for every $\varphi\in U$, by
\[
f(\varphi):\mathcal{P}\to \mathcal{P},\quad\quad p\mapsto f(\varphi)(p)= p^{\varphi}
\]
is well defined and it is an action of $U$ on the set $\mathcal{P}$.
\end{proposition}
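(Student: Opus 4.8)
The plan is to verify the two defining properties of a group action: that $f$ sends the neutral element of $U$ to the identity of $\mathrm{Sym}(\mathcal{P})$, and that $f$ respects the group operation, i.e.\ $f(\varphi_1\varphi_2)=f(\varphi_1)f(\varphi_2)$ for all $\varphi_1,\varphi_2\in U$. But before checking these, one must first confirm that $f$ is \emph{well defined}, which breaks into two sub-points: for each $\varphi\in U$, the assignment $p\mapsto p^\varphi$ indeed lands in $\mathcal{P}$ (already noted right after \eqref{action-w}, as a consequence of Lemma \ref{phiRR}), and, more substantially, the map $f(\varphi):\mathcal{P}\to\mathcal{P}$ is a bijection, so that it genuinely belongs to $\mathrm{Sym}(\mathcal{P})$.

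First I would record the key algebraic identity governing the construction. Using definition \eqref{action-w}, for $p\in\mathcal{P}$, $\varphi_1,\varphi_2\in U$ and $z\in I$ we have
\[
p^{\varphi_1\varphi_2}(z)=(\varphi_1\varphi_2)\,p\big((\varphi_1\varphi_2)^{-1}(z)\big)=\varphi_1\Big(\varphi_2\, p\big(\varphi_2^{-1}(\varphi_1^{-1}(z))\big)\Big)=\varphi_1\Big(p^{\varphi_2}\big(\varphi_1^{-1}(z)\big)\Big)=(p^{\varphi_2})^{\varphi_1}(z),
\]
where the middle step uses $(\varphi_1\varphi_2)^{-1}=\varphi_2^{-1}\varphi_1^{-1}$ together with the associativity $(\varphi_1\varphi_2)R=\varphi_1(\varphi_2 R)$ for relations noted after \eqref{prodottorel}. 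Hence $p^{\varphi_1\varphi_2}=(p^{\varphi_2})^{\varphi_1}$ for all $p$, which is exactly $f(\varphi_1\varphi_2)=f(\varphi_1)f(\varphi_2)$ under the right-to-left composition convention of $\mathrm{Sym}(\mathcal{P})$. Taking $\varphi_1=\varphi_2=id_I$, or simply observing $p^{id_I}=p$ directly from \eqref{action-w} and the fact that $id_I R=R$, gives $f(id_I)=id_{\mathcal{P}}$.

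With the homomorphism identity in hand, bijectivity of each $f(\varphi)$ is essentially automatic: from $f(\varphi)f(\varphi^{-1})=f(\varphi\varphi^{-1})=f(id_I)=id_{\mathcal{P}}$ and symmetrically $f(\varphi^{-1})f(\varphi)=id_{\mathcal{P}}$ (note $\varphi^{-1}\in U$ since $U$ is a subgroup), we conclude $f(\varphi)\in\mathrm{Sym}(\mathcal{P})$ with inverse $f(\varphi^{-1})$. This closes the well-definedness gap, and then the two displayed identities $f(id_I)=id_{\mathcal{P}}$ and $f(\varphi_1\varphi_2)=f(\varphi_1)f(\varphi_2)$ say precisely that $f$ is a homomorphism from $U$ to $\mathrm{Sym}(\mathcal{P})$, i.e.\ an action. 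I do not expect any serious obstacle here; the only mild subtlety is bookkeeping the order of composition so that the right-to-left convention is respected throughout, which is why I would compute $p^{\varphi_1\varphi_2}=(p^{\varphi_2})^{\varphi_1}$ rather than the reverse and match it against $f(\varphi_1)f(\varphi_2)$.
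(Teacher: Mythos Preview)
Your proof is correct and follows essentially the same approach as the paper: both establish the key identity $(p^{\varphi_2})^{\varphi_1}=p^{\varphi_1\varphi_2}$ by direct computation from \eqref{action-w}, and then derive bijectivity of each $f(\varphi)$ from it. The only cosmetic difference is that the paper deduces bijectivity from surjectivity plus finiteness of $\mathcal{P}$, whereas you exhibit the two-sided inverse $f(\varphi^{-1})$ directly; both are equally valid.
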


\begin{proof}
As a preliminary step, let us prove that, for every $p\in\mathcal{P}$ and $\varphi_1,\varphi_2\in G^*$, we have that 
\begin{equation}\label{action-e}
\left(p^{\varphi_2}\right)^{\varphi_1}=p^{\varphi_1\varphi_2}.
\end{equation}
Consider then $p\in\mathcal{P}$ and $\varphi_1,\varphi_2\in G^*$.
We have to show that, for every $z\in I$, $\left(p^{\varphi_2}\right)^{\varphi_1}(z)=p^{\varphi_1\varphi_2}(z)$.
Consider $z\in I$. Then we have
\[
\left(p^{\varphi_2}\right)^{\varphi_1}(z)=\varphi_1 p^{\varphi_2}(\varphi_1^{-1}(z))=
\varphi_1\left(\varphi_2 p(\varphi_2^{-1}(\varphi_1^{-1}(z)))\right)
=\left(\varphi_1 \varphi_2\right)p((\varphi_1\varphi_2)^{-1}(z))=p^{\varphi_1\varphi_2}(z),
\]
as desired.

Let us fix now $\varphi\in U$ and prove that $f(\varphi)\in \mathrm{Sym}(\mathcal{P})$. Since $\mathcal{P}$ is finite, it is enough to show that $f(\varphi)$ is surjective. Consider $p\in \mathcal{P}$. By \eqref{action-e},
\[
f(\varphi)\left(p^{\varphi^{-1}}\right)=\left(p^{\varphi^{-1}}\right)^{\varphi}=p^{id_{I}}=p.
\]
Since, by \eqref{action-e}, we also have that, for every $\varphi_1,\varphi_2\in U$, $f(\varphi_1\varphi_2)=f(\varphi_1)f(\varphi_2)$,
we have that $f$ is a group homomorphism and so we conclude that $f$ is an action of $U$ on $\mathcal{P}$.
\end{proof}

\begin{proof}[Proof of Proposition \ref{iff}]
Assume that $F$ is $\langle U \rangle$-symmetric. Then,
for every $p\in \mathcal{P}$ and $\varphi\in \langle U\rangle$, $F(p^{\varphi})=F(p)^\varphi$. In particular, for every $p\in \mathcal{P}$ and $\varphi\in U$, $F(p^{\varphi})=F(p)^\varphi$. Thus, we conclude that $F$ is $U$-symmetric.

Assume now that $F$ is $U$-symmetric. Then, for every $p\in \mathcal{P}$ and $\varphi\in U$, $F(p^{\varphi})=F(p)^\varphi$. It remains to prove that, for every $p\in \mathcal{P}$ and $\varphi\in \langle U\rangle$, $F(p^{\varphi})=F(p)^\varphi$. Let us set $U_1=U$ and, for every $k\in\mathbb{N}$, let
\[
U_k\coloneq\{\varphi\in G^*:\exists \sigma_1,\ldots,\sigma_k\in U\mbox{ such that }\varphi=\sigma_1\cdots \sigma_k\}.
\]
For every $k\in\mathbb{N}$, consider the following statement, denoted by $S(k)$,
\[
\mbox{for every $p\in \mathcal{P}$ and $\varphi\in U_k$, we have that  $F(p^{\varphi})= F(p)^\varphi$.}
\]
We claim that, for every $k\in\mathbb{N}$, $S(k)$ is true. We prove that fact by induction. The truth of $S(1)$ is immediate. Assume that $S(k)$ is true and prove $S(k+1)$ is true. Let $p\in \mathcal{P}$ and $\varphi\in U_{k+1}$ and prove that  $F(p^{\varphi})=F(p)^\varphi$. Since $\varphi\in U_{k+1}$, we can find $\psi\in U_k$ and $\sigma \in U=U_1$ such that $\varphi=\psi\sigma$. Thus, by \eqref{action-e}, we have that $p^\varphi=p^{\psi\sigma}=
\left(p^{\sigma}\right)^{\psi}$. Thus,
\[
F(p^\varphi)=F(p^{\psi\sigma})=F(\left(p^{\sigma}\right)^{\psi})=F(p^\sigma)^\psi=
(F(p)^\sigma)^\psi=F(p)^{\psi\sigma}=F(p)^\varphi,
\]
where the third equality follows from the fact that $S(k)$ is true, the forth one from the fact that $S(1)$ is true and the fifth one from \eqref{conprop}.

By the claim and the fact that $ \langle U \rangle=\bigcup_{k\in\mathbb{N}}U_k$, we then deduce that, for every $p\in \mathcal{P}$ and $\varphi\in \langle U\rangle$, $F(p^{\varphi})= F(p)^\varphi$, as desired.
\end{proof}

\begin{proposition}\label{generato1}
$\langle G^*\setminus G\rangle=G^*$.
\end{proposition}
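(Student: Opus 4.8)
The plan is to prove the two inclusions separately, the only nontrivial one being $G^*\subseteq\langle G^*\setminus G\rangle$. The inclusion $\langle G^*\setminus G\rangle\subseteq G^*$ is immediate: $G^*$ is a group containing $G^*\setminus G$, hence it contains the subgroup generated by $G^*\setminus G$.

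For the reverse inclusion I would first record that $G^*\setminus G\neq\varnothing$: by Proposition \ref{car-match}$(i)$ we have $\mathcal{M}\subseteq G^*\setminus G$ and $\mathcal{M}\neq\varnothing$, so I may fix some $\mu\in G^*\setminus G$. Next I would note that, since any $\varphi\in G^*$ is a bijection of $I=W\cup M$ with $|W|=|M|$, the condition $\{\varphi(W),\varphi(M)\}=\{W,M\}$ forces either $\varphi(W)=W,\varphi(M)=M$ (which is membership in $G$) or $\varphi(W)=M,\varphi(M)=W$; hence $G^*\setminus G$ is exactly the set of permutations of $I$ that swap $W$ and $M$. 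In particular such permutations are closed under inversion, so $\mu^{-1}\in G^*\setminus G$.

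Now take an arbitrary $\varphi\in G^*$. If $\varphi\in G^*\setminus G$ it lies in $\langle G^*\setminus G\rangle$ trivially. If instead $\varphi\in G$, then $(\mu\varphi)(W)=\mu(\varphi(W))=\mu(W)=M$ and $(\mu\varphi)(M)=\mu(M)=W$, so $\mu\varphi\in G^*\setminus G$; writing $\varphi=\mu^{-1}(\mu\varphi)$ exhibits $\varphi$ as a product of two elements of $G^*\setminus G$, whence $\varphi\in\langle G^*\setminus G\rangle$. Since $G^*=G\cup(G^*\setminus G)$, this gives $G^*\subseteq\langle G^*\setminus G\rangle$ and completes the proof.

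There is essentially no real obstacle here: the only points requiring care are the nonemptiness of $G^*\setminus G$ (supplied by the existence of a matching via Proposition \ref{car-match}) and the one-line verification that the composite of a $W$--$M$ swap with a member of $G$ is again a $W$--$M$ swap.
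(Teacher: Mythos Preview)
Your proof is correct and follows essentially the same strategy as the paper: show that every element of $G$ can be written as a product of two elements of $G^*\setminus G$. Your argument is in fact slightly more direct than the paper's, which invokes the normality of $G$ in $G^*$ to produce the decomposition, whereas you simply verify that $\mu\varphi$ swaps $W$ and $M$ and write $\varphi=\mu^{-1}(\mu\varphi)$.
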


\begin{proof} Let $U\coloneq\langle G^*\setminus G\rangle$. We want to show that $U=G^*$. Since $U\supseteq  G^*\setminus G$, it is enough to show that $U\supseteq G$. Consider then $\psi \in G$ and prove that $\psi \in U$. Pick $\varphi\in G^*\setminus G$. Using the fact that $G$ is normal in $G^*$, we get $G=G^{\varphi}$. Then there exists $\nu\in G$ such that $\psi=\nu^{\varphi}$. Hence, we have $\psi=\varphi (\nu \varphi^{-1}).$ We cannot have $\nu \varphi^{-1}\in G$, because that would imply $\varphi \in G.$ Thus, we have expressed $\psi$ as the product of two elements in $G^*\setminus G,$ and hence $\psi\in U$.
\end{proof}

\begin{proposition}\label{generato2}
Let $\varphi\in\mathcal{M}$. Then $\langle G\cup\{\varphi\}\rangle=G^*$.
\end{proposition}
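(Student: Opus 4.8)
The plan is to exploit the fact that $G$ is a subgroup of index $2$ in $G^*$, so that $G$ is maximal among the proper subgroups of $G^*$; any subgroup squeezed strictly between $G$ and $G^*$ must then coincide with $G^*$. Concretely, set $U\coloneq\langle G\cup\{\varphi\}\rangle$. Since $G\cup\{\varphi\}\subseteq G^*$ and $G^*$ is a group, we have $U\le G^*$, so it only remains to prove the reverse inclusion $G^*\subseteq U$. Clearly $G\subseteq U$ and $\varphi\in U$.

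First I would recall that $\varphi\in\mathcal{M}\subseteq G^*\setminus G$ by Proposition \ref{car-match}$(i)$, so $\varphi\notin G$. Since $|G^*|=2(n!)^2=2|G|$, the index of $G$ in $G^*$ is $2$, hence $G$ has exactly two left cosets in $G^*$, namely $G$ itself and $G\varphi$; consequently $G\varphi=G^*\setminus G$. Therefore every $\psi\in G^*$ either lies in $G$, in which case $\psi\in U$, or lies in $G^*\setminus G=G\varphi$, in which case $\psi=\nu\varphi$ for some $\nu\in G$, and since $\nu\in G\subseteq U$ and $\varphi\in U$ we again get $\psi\in U$. This shows $G^*\subseteq U$, and hence $U=G^*$, as desired.

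I expect no real obstacle here: the argument is the standard index-$2$ observation, and it parallels the proof of Proposition \ref{generato1} (where the coset $G^*\setminus G$ is likewise written as a product $G\varphi$ for a suitable $\varphi$). The only point deserving a line of care is the identification $G^*\setminus G=G\varphi$, which is precisely where the equality $|G^*|=2|G|$ (equivalently, the normality of $G$ in $G^*$ together with $[G^*:G]=2$) enters.
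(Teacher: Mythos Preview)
Your proof is correct and takes essentially the same approach as the paper: both arguments rest on the fact that $[G^*:G]=2$, so any subgroup of $G^*$ properly containing $G$ must equal $G^*$. The paper phrases this via Lagrange's theorem ($|G|$ properly divides $|U|$, which divides $2|G|$), whereas you make the coset decomposition $G^*=G\cup G\varphi$ explicit, but the underlying idea is identical.
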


\begin{proof} Let $U\coloneq \langle G\cup\{\varphi\}\rangle$. We want to show that $U=G^*$. Since $U\supseteq  G\cup\{\varphi\}$ and $\varphi\in G^*\setminus G$, we have that $U\ge G$ and $U\neq G$. By Lagrange Theorem, we then have that $|G|$ properly divides $|U|$. Then $|U|=|G|k$, for some $k\in \mathbb{N},$ with $k\geq 2$. On the other hand, using again Lagrange Theorem, we also have that $|U|$ divides $|G^*|=2|G|$. As a consequence $|U|=2|G|=|G^*|$ and hence $U=G^*$.
\end{proof}

\begin{proposition}\label{STsym}
$GS_w$ and $GS_m$ are $G$-symmetric but they are not symmetric. $GS$, $ST$, $PO$, $WPO$, $MO$ and $TO$ are symmetric.
\end{proposition}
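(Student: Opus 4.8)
The plan is to establish two groups of claims: first, that $GS_w$ and $GS_m$ are $G$-symmetric but not symmetric; second, that the decisive mechanisms $GS$, $ST$, $PO$, $WPO$, $MO$, $TO$ are all $G^*$-symmetric. For the $G$-symmetry of $GS_w$, I would argue as follows. Fix $\psi \in G$ and $p \in \mathcal{P}$, and let $\mu$ be the woman-optimal stable matching for $p$. Since $\psi(W)=W$ and $\psi(M)=M$, relabelling individuals' names by $\psi$ maps the preference profile $p$ to $p^\psi$ and, by the defining property \eqref{action-w2} together with Lemma \ref{phiRR}, preserves ranks: $\mathrm{Rank}_{p^\psi(\psi(z))}(\psi(\mu(z))) = \mathrm{Rank}_{p(z)}(\mu(z))$. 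Consequently $\mu^\psi$ is stable for $p^\psi$ (no blocking pair is created or destroyed under a pure relabelling that respects the $\{W,M\}$ partition), and, more strongly, for every woman $x$, $\psi(\mu(x))$ is the best partner $x$'s image $\psi(x)$ can obtain among stable matchings for $p^\psi$, because the set of stable matchings for $p^\psi$ is exactly $\{\sigma^\psi : \sigma \in ST(p)\}$ (conjugation by $\psi$ is a bijection preserving stability). Hence $GS_w(p^\psi) = \{\mu^\psi\} = GS_w(p)^\psi$, i.e. $GS_w$ is $G$-symmetric; the same argument with the roles of $W$ and $M$ exchanged gives $G$-symmetry of $GS_m$.

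For the failure of symmetry, it suffices to exhibit, for each $n\ge 2$, a profile $p$ and a permutation $\varphi \in G^*\setminus G$ — e.g. a matching $\varphi \in \mathcal{M}$ — for which $GS_w(p^\varphi) \ne GS_w(p)^\varphi$. The natural choice is a profile where the woman-optimal and man-optimal stable matchings differ; applying $\varphi \in \mathcal{M}$ swaps the two sides, so it turns the woman-optimal stable matching of $p$ into the man-optimal stable matching of $p^\varphi$, which is not $GS_w(p^\varphi)$ unless the two coincide. A small explicit profile (for $n=2$ the preference profile in \eqref{pspeciale} or a variant of it, suitably extended for larger $n$ by adding agents with a fixed stable sub-matching) makes this concrete; I would present one such profile and the two distinct matchings. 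The same example, with $w$ and $m$ interchanged, witnesses non-symmetry of $GS_m$.

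For the second group of mechanisms, the key observation is a uniform one: each of $ST$, $PO$, $WPO$, $MO$, $TO$ is defined by a property of the pair $(p,\mu)$ that is \emph{invariant under the joint action} $(p,\mu) \mapsto (p^\varphi, \mu^\varphi)$ for every $\varphi \in G^*$. For $TO$ this is trivial since $\mathcal{M}^\varphi = \mathcal{M}$ by Proposition \ref{car-match}(iii). For $ST$: $(x,y)$ blocks $\mu$ at $p$ iff $(\varphi(x),\varphi(y))$ (or $(\varphi(y),\varphi(x))$ if $\varphi$ swaps the sides) blocks $\mu^\varphi$ at $p^\varphi$, by \eqref{action-w2} and the rank-preservation in Lemma \ref{phiRR}; so $\mu$ is stable for $p$ iff $\mu^\varphi$ is stable for $p^\varphi$, whence $ST(p^\varphi) = ST(p)^\varphi$. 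For $PO$ and $WPO$: the relations $\mu'(z)\succeq_{p(z)}\mu(z)$ and $\mu'(z)\succ_{p(z)}\mu(z)$ are each preserved when one conjugates $p,\mu,\mu'$ simultaneously by $\varphi$, and conjugation is a bijection of $\mathcal{M}$, so the existence of a dominating $\mu'$ transfers both ways. For $MO$: $\mathrm{Rank}_{p(z)}(\mu(z)) = \mathrm{Rank}_{p^\varphi(\varphi(z))}(\mu^\varphi(\varphi(z)))$ by Lemma \ref{phiRR}, so "$\exists z$ with rank $<n$" is a conjugation-invariant condition. Finally $GS = GS_w \cup GS_m$ as mechanisms, and although neither $GS_w$ nor $GS_m$ is individually symmetric, an element of $G^*\setminus G$ swaps the woman-optimal and man-optimal stable matchings (as noted above), so $GS(p^\varphi) = \{$woman-opt of $p^\varphi$, man-opt of $p^\varphi\} = \{$(man-opt of $p)^\varphi$, (woman-opt of $p)^\varphi\} = GS(p)^\varphi$; combined with $G$-symmetry this gives $G^*$-symmetry by Proposition \ref{iff}.

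The main obstacle is bookkeeping rather than depth: one must carefully track how a permutation $\varphi \in G^*\setminus G$ interchanges the "woman" and "man" roles in each definition — in particular that a blocking pair $(x,y)\in W\times M$ for $p$ becomes a blocking pair $(\varphi(y),\varphi(x))\in W\times M$ for $p^\varphi$ when $\varphi$ swaps the sides, and that the woman/man-optimality statements swap accordingly — and verify that every quantity used ($\succeq$, $\succ$, $\mathrm{Rank}$) is genuinely preserved by the construction $p \mapsto p^\varphi$, $\mu\mapsto\mu^\varphi$. Once the invariance lemma "$(p,\mu)$ has property $\Pi$ iff $(p^\varphi,\mu^\varphi)$ has property $\Pi$" is stated cleanly for each $\Pi$, all the symmetry assertions follow in one line; for $GS$, $GS_w$, $GS_m$ one additionally invokes the classical fact (cited in the introduction, Knuth 1976) that an element of $\mathcal{M}$ swaps the extremal stable matchings. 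For the negative assertions, the only work is producing one explicit witnessing profile for each $n\ge 2$, which I would do by a single family of examples.
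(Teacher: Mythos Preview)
Your approach matches the paper's: invariance of each defining property under $(p,\mu)\mapsto(p^\varphi,\mu^\varphi)$ gives the symmetry of $ST$, $PO$, $WPO$, $MO$, $TO$; an explicit profile with distinct extremal stable matchings witnesses the failure of symmetry for $GS_w$ and $GS_m$; and the swap of the two extrema under any $\varphi\in G^*\setminus G$ yields the symmetry of $GS$. One minor correction: the swap of woman-optimal and man-optimal under a gender-exchanging permutation is not a result of Knuth (1976) --- the paper attributes it to \"{O}zkal-Sanver (2004, Proposition~3.3) --- but since you have already sketched why it holds, no external citation is really needed.
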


\begin{proof}
The $G$-symmetry of $GS_w$ and $GS_m$ follows from the nature of the algorithm in Gale and Shapley (1962). Let us prove that $GS_w$ and $GS_w$ are not symmetric.
Indeed, let $W=\ldbrack n\rdbrack$, $M=\{y_1,\dots, y_n\}$,
and consider the following preference profile $p$:
\[
p=
\begin{array}{|ccccc||ccccc|}
\hline
1 & 2 & (3)  & \ldots & (n) & y_1 & y_2 & (y_3) & \ldots & (y_n) \\
\hline
\hline
y_1 & y_2 & (y_3 )& \ldots & (y_n) & 2 & 1 & (3) &  \ldots & (n) \\
\vdots & \vdots & \vdots &\vdots & \vdots & \vdots  & \vdots & \vdots & \vdots  & \vdots     \\
\hline
\end{array}
\]
where the vertical dots mean that the missing entries of each linear order can be listed in any possible order.
Consider the permutation $\varphi=(1 \, y_1)(2 \,  y_2)\cdots (n \,  y_n) \in G^*$ that switches woman $x$ and man $y_x$ for each $x\in\ldbrack n\rdbrack$.
Let $\gamma^*=id_I$ if $n= 2$ while $\gamma^*=(3\,y_3)\cdots(n\,y_n)\in \mathrm{Sym}(I)$ if $n\ge 3$. Note that, for every $n\geq 2$, we have $\varphi=(1 \, y_1)(2 \,  y_2)\gamma^*$.
It is immediate to check that $GS_w(p)=\{\varphi\}$ and $GS_m(p)=\{(1 \, y_2) (2\,  y_1) \gamma^*\}$, and then 
$GS_w(p)=GS_w (p)^\varphi$ and $GS_m(p)=GS_m (p)^\varphi$. We also have that
\begin{eqnarray*}
p^{\varphi}=
& \begin{array}{|ccccc||ccccc|}
\hline
y_1 & y_2 & (y_3) &  \ldots & (y_n) & 1 & 2 & (3) & \ldots & (n) \\
\hline
\hline
1 & 2 & (3)&  \ldots & (n) & y_2 & y_1 & (y_3) &  \ldots & (y_n) \\
\vdots & \vdots &\vdots & \vdots & \vdots & \vdots & \vdots & \vdots & \vdots& \vdots   \\
\hline
\end{array}
\\
=&\begin{array}{|ccccc||ccccc|}
\hline
1 & 2 & (3) &  \ldots & (n) & y_1 & y_2 & (y_3) & \ldots & (y_n) \\
\hline
\hline
y_2 & y_1 & (y_3 )& \ldots & (y_n) & 1 & 2 & (3 )& \ldots & (n )\\
\vdots & \vdots & \vdots & \vdots & \vdots & \vdots  & \vdots & \vdots & \vdots & \vdots   \\
\hline
\end{array}.
\end{eqnarray*}
Thus,  $GS_w(p^{\varphi})=\{(1\, y_2)(2\, y_1)\gamma^*\}\neq GS_w (p)^\varphi$ and
 $GS_m(p^{\varphi})=\{\varphi\} \neq GS_m (p)^\varphi$.
As a consequence, $GS_w$ and $GS_m$ fail to be symmetric.

In order to prove that $GS$ is symmetric, we need to show that, for every $p \in \mathcal{P}$ and $\varphi \in G^*$, $GS(p^\varphi)=GS(p)^\varphi$. If $\varphi \in G$, that is a consequence of the $G$-symmetry of $GS_w$ and $GS_m$. Indeed, $GS(p^{\varphi})= \{GS_w(p^{\varphi}), GS_m(p^{\varphi})\}=\{GS_w(p)^\varphi,GS_m(p)^\varphi \}=GS(p)^\varphi$. If $\varphi \in G^*\setminus G$, the proof has been provided in Proposition 3.3 by \"{O}zkal-Sanver (2004).

The fact that $ST$ is symmetric follows immediately from the following property: if $\mu \in \mathcal{M}$, $p\in \mathcal{P}$ and $\varphi\in G^*$, the pair $(x,y) \in W \times M$ blocks $\mu$ according to $p$ if and only if the pair $(\varphi(x),\varphi(y))$ blocks  $\mu^\varphi$  according to  $p^{\varphi}$. We remark that $(\varphi(x),\varphi(y)) \in W \times M$ if $\varphi \in G$, while  $(\varphi(x),\varphi(y)) \in M \times W$ if $\varphi \in G^*\setminus G$.
Let us prove the stated property. By definition, $(x,y) \in W\times M$ blocks $\mu$ according to  $p$ if
\[
 y \succ_{p(x )}\mu(x) \hbox{ and } x \succ_{p(y)} \mu(y).
\]
By \eqref{action-w2}, that is equivalent to
\[
\varphi(y) \succ_{p^{\varphi}(\varphi(x))} \varphi (\mu(x)) \hbox{ and } \varphi(x) \succ_{p^{\varphi}(\varphi(y))} \varphi (\mu(y)),
\]
which, in turn, is equivalent to
\[
 \varphi(y) \succ_{p^\varphi (\varphi(x))} \mu^\varphi(\varphi(x)) \hbox{ and } \varphi(x) \succ_{p^\varphi (\varphi(y))} \mu^\varphi(\varphi(y)),
\]
that is,  $(\varphi(x),\varphi(y))$ blocks  the matching $\mu^\varphi$ according to $p^{\varphi}$.

The fact that $WPO$ is symmetric follows immediately from the following property: for every $\mu \in \mathcal{M}$ and $\varphi \in G^*$,
$\mu$  is not weakly Pareto optimal for $p$ if and only if $\mu^\varphi$ is not weakly Pareto optimal for $p^{\varphi}$.
In order to prove the aforementioned property, consider $\mu \in \mathcal{M}$ and $\varphi \in G^*$ and note that, by definition, $\mu$ is not weakly Pareto optimal for $p$ if and only if there exists a matching $\mu' \in \mathcal{M}$ such that, for every $z\in I$, $\mu'(z) \succ_{p(z)}\mu (z)$. For every $z\in I$, we have that $\mu'(z) \succ_{p(z)}\mu (z)$ is equivalent to $\varphi(\mu'(z))  \succ_{p^{\varphi}(\varphi(z))} \varphi(\mu(z))$ that in turn is equivalent to $(\mu')^\varphi(\varphi(z)) \succ_{p^{\varphi}(\varphi(z))}  \mu^\varphi(\varphi(z))$. Since that holds true for each $z\in I$ and since $(\mu')^\varphi\in \mathcal{M}$, that is equivalent to state that the matching $\mu ^\varphi$ is not weakly Pareto optimal for $p^{\varphi}$.

The proof that $PO$ is symmetric is similar to the proof that $WPO$ is symmetric and then omitted

Let us prove that $MO$ is symmetric. This is an immediate consequence of the following property: for every $\mu \in \mathcal{M}$ and $\varphi \in G^*,$ $\mu$ is  not minimally optimal for $p$ if and only if $\mu^\varphi$ is  not minimally optimal for $p^{\varphi}$.
That property can be proved by noticing that $\mu \not\in MO(p)$ is equivalent to $\hbox{Rank}_{p(z)}\mu(z)=n$ for all $z \in I$, which in turn is equivalent to $\hbox{Rank}_{p^{\varphi}(\varphi(z))}(\mu^\varphi)(\varphi(z))=n$ for all $z\in I$, which is equivalent to $\mu^\varphi \not\in MO(p^{\varphi})$.

Finally,  $TO$ is symmetric since, for every $p \in \mathcal{P}$ and $\varphi \in G^*$, we have that $TO(p)=TO(p^{\varphi})= \mathcal{M}$ and, by Proposition \ref{car-match}$(iii)$, $\mathcal{M}^\varphi= \mathcal{M}$.
\end{proof}

\begin{proposition}\label{SEsym}
$SE$ and $ES$ are symmetric.
\end{proposition}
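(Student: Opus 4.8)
The plan is to derive the $G^*$-symmetry of both $SE$ and $ES$ from the already-established symmetry of $ST$ (Proposition \ref{STsym}) together with an invariance property of the functions $\delta(\cdot,\cdot)$ and $e(\cdot,\cdot)$ under the simultaneous action of a permutation $\varphi\in G^*$ on preference profiles and matchings. The first ingredient is the transformation rule for ranks: for $R\in \mathbf{L}(X)$, $\varphi\in\mathrm{Sym}(I)$ and $x\in X$ one has $\mathrm{Rank}_{\varphi R}(\varphi(x))=\mathrm{Rank}_R(x)$, which is immediate from Lemma \ref{phiRR}, since writing $R=[x_1,\dots,x_m]$ gives $\varphi R=[\varphi(x_1),\dots,\varphi(x_m)]$, so the position of $\varphi(x_i)$ in $\varphi R$ is the position $i$ of $x_i$ in $R$. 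Applying this with $R=p(z)$ and recalling that $p^\varphi(\varphi(z))=\varphi p(z)$ by \eqref{action-w2} and that $\mu^\varphi(\varphi(z))=\varphi(\mu(z))$, I obtain, for every $p\in\mathcal{P}$, $\mu\in\mathcal{M}$, $\varphi\in G^*$ and $z\in I$,
\[
\mathrm{Rank}_{p^\varphi(\varphi(z))}\big(\mu^\varphi(\varphi(z))\big)=\mathrm{Rank}_{p(z)}(\mu(z)).
\]

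Next I would use this identity to prove $\delta(p^\varphi,\mu^\varphi)=\delta(p,\mu)$ and $e(p^\varphi,\mu^\varphi)=e(p,\mu)$ for every $\varphi\in G^*$. Summing the displayed identity over $z\in W$ (resp. over $z\in M$) and reindexing by $z'=\varphi(z)$, the quantity $\sum_{z'\in\varphi(W)}\mathrm{Rank}_{p^\varphi(z')}(\mu^\varphi(z'))$ equals $\sum_{x\in W}\mathrm{Rank}_{p(x)}(\mu(x))$, and similarly $\sum_{z'\in\varphi(M)}\mathrm{Rank}_{p^\varphi(z')}(\mu^\varphi(z'))$ equals $\sum_{y\in M}\mathrm{Rank}_{p(y)}(\mu(y))$. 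If $\varphi\in G$, then $\varphi(W)=W$ and $\varphi(M)=M$, so the two partial sums defining $\delta(p^\varphi,\mu^\varphi)$ and $e(p^\varphi,\mu^\varphi)$ coincide term-by-term with those defining $\delta(p,\mu)$ and $e(p,\mu)$. If instead $\varphi\in G^*\setminus G$, then $\varphi(W)=M$ and $\varphi(M)=W$, so the two partial sums are merely interchanged; this leaves $e$ unchanged, being their sum, and leaves $\delta$ unchanged, being the absolute value of their difference. In all cases $\delta$ and $e$ are invariant.

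Finally I would conclude as follows. Fix $p\in\mathcal{P}$ and $\varphi\in G^*$. By Proposition \ref{STsym}, $ST(p^\varphi)=ST(p)^\varphi$, and since conjugation by $\varphi$ is a bijection and $\mathcal{M}^\varphi=\mathcal{M}$ by Proposition \ref{car-match}$(iii)$, the map $\mu\mapsto\mu^\varphi$ is a bijection from $ST(p)$ onto $ST(p^\varphi)$. By the invariance just established, this bijection turns $\delta(p,\cdot)$ into $\delta(p^\varphi,\cdot)$, hence it maps the set of minimizers of $\delta(p,\cdot)$ over $ST(p)$ onto the set of minimizers of $\delta(p^\varphi,\cdot)$ over $ST(p^\varphi)$; that is, $SE(p^\varphi)=SE(p)^\varphi$. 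The same argument with $e$ in place of $\delta$ yields $ES(p^\varphi)=ES(p)^\varphi$. Therefore both $SE$ and $ES$ are symmetric.

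I expect no genuine obstacle here: the argument reduces to the rank-transformation identity (a direct consequence of Lemma \ref{phiRR}) plus the elementary remark that interchanging the $W$-sum and the $M$-sum leaves both a sum and an absolute difference unchanged. The only point demanding a little care is the case $\varphi\in G^*\setminus G$, where the roles of the two sides of the market are swapped; but precisely because $\delta$ and $e$ are built symmetrically from the two sides, this swap is harmless, and this is exactly why $SE$ and $ES$ attain the full $G^*$-symmetry rather than only $G$-symmetry.
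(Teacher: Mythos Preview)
Your proof is correct and follows essentially the same approach as the paper: both establish the rank-invariance identity $\mathrm{Rank}_{p^\varphi(\varphi(z))}(\mu^\varphi(\varphi(z)))=\mathrm{Rank}_{p(z)}(\mu(z))$, deduce the invariance of $\delta$ and $e$ under the simultaneous action of $\varphi$, and then combine this with the symmetry of $ST$ to transport the minimizers. Your treatment of the case $\varphi\in G^*\setminus G$ is in fact a bit more explicit than the paper's, where the swap of the $W$- and $M$-sums is absorbed silently into a reindexing step and the absolute value.
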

\begin{proof}
Let us prove that $SE$ is symmetric.
Observe first that, immediately from the definition of $p^{\varphi}$, for every $\varphi \in G^*$, $p \in \mathcal{P}$ and $z,u\in I$ of different gender, we have that
\begin{equation}\label{fatto1}
 \hbox{Rank}_{p^{\varphi}(\varphi(z))}\varphi(u)=\hbox{Rank}_{p(z)}u.
\end{equation}
We next show that, for every $\varphi \in G^*$, $p \in \mathcal{P}$ and $\mu \in \mathcal{M}$, we have
\begin{equation}\label{fatto2}
\delta(p^{\varphi}, \mu^{\varphi})= \delta(p, \mu).
\end{equation}
Indeed, using \eqref{fatto1}, we have
\begin{eqnarray*}
\delta(p^{\varphi}, \mu^{\varphi})&=&\left|\sum_{x \in W}\mathrm{Rank}_{p^{\varphi}(x)}(\mu^{\varphi}(x))-\sum_{y \in M}\mathrm{Rank}_{p^{\varphi}(y)}(\mu^{\varphi}(y))\right|\\
&=&\left|\sum_{x \in W}\mathrm{Rank}_{p^{\varphi}(\varphi(x))}(\mu^{\varphi}(\varphi(x)))-\sum_{y \in M}\mathrm{Rank}_{p^{\varphi}(\varphi(y))}(\mu^{\varphi}(\varphi(y)))\right|
\\
&=&\left|\sum_{x \in W}\mathrm{Rank}_{p^{\varphi}(\varphi(x))}(\varphi\mu(x))-\sum_{y \in M}\mathrm{Rank}_{p^{\varphi}(\varphi(y))}(\varphi\mu(y))\right|\\
&=&\left|\sum_{x \in W}\mathrm{Rank}_{p(x)}(\mu(x))-\sum_{y \in M}\mathrm{Rank}_{p(y)}(\mu(y))\right|=\delta(p, \mu).
\end{eqnarray*}
As a consequence, using the fact that $ST$ is symmetric and applying \eqref{fatto2}, we have that, for every $p\in\mathcal{P}$
and $\varphi \in G^*$,
\begin{eqnarray*}
SE(p^{\varphi}) &=&  \underset{\mu \in ST(p^{\varphi})}{\mathrm{arg\,min}}\; \delta(p^{\varphi},\mu)=
\underset{\mu \in ST(p)^\varphi}{\mathrm{arg\,min}}\; \delta(p^{\varphi},\mu)=\left(\underset{ \sigma \in ST(p)}{\mathrm{arg\,min}}\; \delta(p^{\varphi},\sigma^{\varphi})\right)^\varphi\\
&=&\left(\underset{ \sigma \in ST(p)}{\mathrm{arg\,min}}\; \delta(p,\sigma)\right)^\varphi=SE(p)^\varphi.
\end{eqnarray*}
Thus $SE$ is symmetric.

The proof that $ES$ is symmetric follows the same line of reasoning and  hence, for the sake of brevity, is omitted.
\end{proof}

\section{Proof of Theorem \ref{F-ref-consistent} }\label{Appendix-orbits}

Thanks to the fact that the function $f$ defined in Proposition \ref{action-mat} is an action of $G^*$ on $\mathcal{P}$, we can transfer to the framework of matching theory notation and results concerning the action of a group on a set.
Let $U\le G^*$. For every $p\in \mathcal{P}$, the $U$-orbit of $p$ in $\mathcal{P}$ is $p^U=\{p^\varphi\in \mathcal{P}: \varphi\in U\}$. We denote by $R$ the size of the set $\mathcal{P}^U=\{p^U:p\in \mathcal{P}\}$ of the $U$-orbits of $\mathcal{P}$.\footnote{The number $R$ depends of course on $U$ and $\mathcal{P}$. We can avoid more precise notation since $U$ and $\mathcal{P}$ are always clear from the context.} A system of representatives of the $U$-orbits of $\mathcal{P}$ is any $(p_j)_{j=1}^{R}\in \mathcal{P}^{R}$ such that $\mathcal{P}^U=\{p_j^{U} : j\in \ldbrack R \rdbrack\}$.

Before proving Theorem \ref{F-ref-consistent}, we propose the following preliminary result. 

\begin{theorem}\label{existence}
Let $U\le G^*$, $(p_j)_{j=1}^R\in \mathcal{P}^R$ be a system of representatives for the
$U$-orbits of $\mathcal{P}$, and $(\mu_j)_{j=1}^R$ be a sequence of matchings such that, for every $j\in \ldbrack R \rdbrack$, $\mu_j\in C^U(p_j)$.
Then, there exists a unique resolute and $U$-symmetric matching mechanism $H$ such that, for every $j\in \ldbrack R \rdbrack$, $H(p_j)=\{\mu_j\}$.
\end{theorem}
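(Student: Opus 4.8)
The idea is to define $H$ on an arbitrary preference profile by transporting the prescribed value $\mu_j$ along a permutation that carries $p_j$ to $p$, and then to check that this is well defined, resolute, $U$-symmetric, and unique. The plan is as follows.

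\textbf{Construction.} Fix $p\in\mathcal{P}$. Since $(p_j)_{j=1}^R$ is a system of representatives for the $U$-orbits, there is a unique $j\in\ldbrack R\rdbrack$ with $p\in p_j^U$, hence there exists $\varphi\in U$ with $p=p_j^\varphi$. Set $H(p)\coloneq\{\mu_j^\varphi\}$. By Proposition \ref{car-match}$(iii)$ we have $\mu_j^\varphi\in\mathcal{M}^\varphi=\mathcal{M}$, so $H(p)$ is a singleton subset of $\mathcal{M}$; thus, once well-definedness is established, $H$ is automatically resolute.

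\textbf{Well-definedness.} The choice of $\varphi$ with $p=p_j^\varphi$ is not unique: if $p=p_j^{\varphi}=p_j^{\psi}$ with $\varphi,\psi\in U$, then by \eqref{action-e} we get $p_j^{\psi\varphi^{-1}}=p_j$, i.e. $\psi\varphi^{-1}\in\mathrm{Stab}_U(p_j)$. Since $\mu_j\in C^U(p_j)$, the definition of $C^U$ gives $\mu_j^{\psi\varphi^{-1}}=\mu_j$, and applying conjugation by $\varphi$ and using \eqref{conprop} yields $\mu_j^{\psi}=(\mu_j^{\psi\varphi^{-1}})^{\varphi}=\mu_j^{\varphi}$. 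Hence $H(p)$ does not depend on the choice of $\varphi$, and in particular $H(p_j)=\{\mu_j^{id_I}\}=\{\mu_j\}$ (taking $\varphi=id_I$), as required. I expect this step — pinning down exactly which ambiguity in $\varphi$ must be killed, and recognizing that it is precisely the stabilizer of $p_j$, which $C^U$ was designed to handle — to be the conceptual heart of the argument, though the computation itself is short.

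\textbf{$U$-symmetry.} Let $p\in\mathcal{P}$ and $\sigma\in U$; we must show $H(p^\sigma)=H(p)^\sigma$. Write $p=p_j^\varphi$ with $\varphi\in U$, so $H(p)=\{\mu_j^\varphi\}$ and, by \eqref{conprop}, $H(p)^\sigma=\{\mu_j^{\sigma\varphi}\}$. On the other hand, by \eqref{action-e}, $p^\sigma=(p_j^\varphi)^\sigma=p_j^{\sigma\varphi}$ with $\sigma\varphi\in U$, and $p^\sigma$ lies in the same $U$-orbit $p_j^U$; so by construction $H(p^\sigma)=\{\mu_j^{\sigma\varphi}\}=H(p)^\sigma$. (Well-definedness guarantees we may compute $H(p^\sigma)$ using the representative $p_j$ and the permutation $\sigma\varphi$.)

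\textbf{Uniqueness.} Suppose $H'$ is another resolute, $U$-symmetric matching mechanism with $H'(p_j)=\{\mu_j\}$ for all $j$. For arbitrary $p\in\mathcal{P}$, write $p=p_j^\varphi$ with $\varphi\in U$; then $U$-symmetry of $H'$ gives $H'(p)=H'(p_j^\varphi)=H'(p_j)^\varphi=\{\mu_j\}^\varphi=\{\mu_j^\varphi\}=H(p)$. Hence $H'=H$, completing the proof.
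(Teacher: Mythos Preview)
Your argument follows exactly the paper's route: define $H$ by transporting $\mu_j$ along an element of $U$ carrying $p_j$ to $p$, kill the ambiguity via the stabilizer and the hypothesis $\mu_j\in C^U(p_j)$, then verify symmetry and uniqueness. The symmetry and uniqueness paragraphs are correct and essentially identical to the paper's.

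There is, however, a bookkeeping slip in the well-definedness step. From $p_j^{\varphi}=p_j^{\psi}$ and \eqref{action-e} (which reads $(p^{\varphi_2})^{\varphi_1}=p^{\varphi_1\varphi_2}$) one obtains $p_j^{\varphi^{-1}\psi}=p_j$ (equivalently $p_j^{\psi^{-1}\varphi}=p_j$), \emph{not} $p_j^{\psi\varphi^{-1}}=p_j$. Correspondingly, by \eqref{conprop} one has $(\mu_j^{\psi\varphi^{-1}})^{\varphi}=\mu_j^{\varphi\psi\varphi^{-1}}$, which is not $\mu_j^{\psi}$ in general; so as written the computation does not close. The fix is immediate: with $\varphi^{-1}\psi\in\mathrm{Stab}_U(p_j)$ one gets $\mu_j^{\varphi^{-1}\psi}=\mu_j$, and then $(\mu_j^{\varphi^{-1}\psi})^{\varphi}=\mu_j^{\varphi\cdot\varphi^{-1}\psi}=\mu_j^{\psi}$, yielding $\mu_j^{\psi}=\mu_j^{\varphi}$ as desired. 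This is precisely the computation in the paper (there written with $\varphi_2^{-1}\varphi_1$).
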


\begin{proof} Observe first that, given $p\in \mathcal{P}$, there exists $j\in \ldbrack R \rdbrack$ and $\varphi_j\in U$ such that $p=p_j^{\varphi_j}$.
Consider then the matching mechanism $H$ defined, for every $p\in \mathcal{P}$, by $H(p)\coloneq\{\mu_j^{\varphi_j}\}$,
where $j\in \ldbrack R \rdbrack$ and $\varphi_j\in U$ are such that $p=p_j^{\varphi_j}$.

We first prove that $H$ is well defined. Let $p\in \mathcal{P}$ and assume that
$p=p_{j_1}^{\varphi_{1}}$ and $p=p_{j_2}^{\varphi_{2}}$,
where $j_1,j_2\in \ldbrack R \rdbrack$ and $\varphi_{1},\varphi_{2}\in U$.
We have to prove that $\mu_{j_1}^{\varphi_{1}}=\mu_{j_2}^{\varphi_{2}}$.
Note that, since  $(p_j)_{j=1}^R\in \mathcal{P}^R$ is a system of representatives for the
$U$-orbits in $\mathcal{P}$, it must be $j_1=j_2$; therefore $\mu_{j_1}=\mu_{j_2}$ and $p_{j_1}=p_{j_2}$.
We now have that $\mu_{j_1}^{\varphi_{1}}=\mu_{j_1}^{\varphi_{2}}$
is equivalent to $(\mu_{j_1}^{\varphi_{1}})^{\varphi_{2}^{-1}} = \mu_{j_1}$ that, in turn, is equivalent to $\mu_{j_1}^{\varphi_{2}^{-1}\varphi_{1}} = \mu_{j_1}$.
In order to prove the equality $\mu_{j_1} ^{\varphi_{2}^{-1}\varphi_{1}} = \mu_{j_1}$, observe that $\varphi_{2}^{-1}\varphi_{1}\in \mathrm{Stab}_U(p_{j_1})$. Indeed, from
$p=p_{j_1}^{\varphi_{1}}$ and $p=p_{j_1}^{\varphi_{2}}$,
we get
$p_{j_1}^{\varphi_{1}}=p_{j_1}^{\varphi_{2}}$
and then, using the action, we have
$
\left(p_{j_1}^{\varphi_{1}}\right)^{\varphi_{2}^{-1}}=p_{j_1},
$
that is,
$
p_{j_1}^{\varphi_{2}^{-1}\varphi_{1}}=p_{j_1}.
$
Since $\mu_{j_1}\in C^U(p_{j_1})$, we can then conclude that
$\mu_{j_1}^{\varphi_{2}^{-1}\varphi_{1}} = \mu_{j_1}$,
as desired. Thus, $H$ is well defined.

Of course, $H$ is resolute and since, for every $j\in \ldbrack R \rdbrack$, $p_j=p_j^{id_I}$, we have that $H(p_j)=\{\mu_j\}$.

Let us prove that $H$ is $U$-symmetric.
Consider then $p\in \mathcal{P}$ and $\varphi\in U$ and let us prove that
$H(p^{\varphi})=H(p)^\varphi$.
Let $j\in \ldbrack R \rdbrack$ and $\varphi_j\in U$ be such that $p=p_j^{\varphi_j}$. Thus,
$H(p)=\{\mu_j^{\varphi_j}\}=H(p_j)^{\varphi_j}$ and
\[
H(p^{\varphi})=H\left(\left(p_j^{\varphi_j}\right)^{\varphi}\right)=H\left(p_j^{\varphi\varphi_j}\right)
=H(p_j)^{\varphi\varphi_j}
=(H(p_j)^{\varphi_j})^{\varphi}=H(p)^{\varphi}.
\]

It remains to prove the uniqueness of $H$. Consider a $U$-symmetric and resolute matching mechanism $H'$ such that, for every $j\in \ldbrack R \rdbrack$, $H'(p_j)=\mu_j$. We have to prove that, for every $p\in \mathcal{P}$, $H(p)=H'(p)$. Consider then $p\in \mathcal{P}$. There exist
$j\in \ldbrack R \rdbrack$ and $\varphi_j\in U$ such that $p=p_j^{\varphi_j}$.
Thus, using the $U$-symmetry of $H$ and $H'$ and the fact that $H(p_j)=H'(p_j)=\{\mu_j\}$, we get
\[
H'(p)=H'(p_j^{\varphi_j})=H'(p_j)^{\varphi_j}=\{\mu_j^{\varphi_j}\}=H(p_j)^{\varphi_j}=
H(p_j^{\varphi_j})=H(p),
\]
as desired.
\end{proof}

\begin{proof}[Proof of Theorem \ref{F-ref-consistent}]
Assume that $F$ admits a $U$-symmetric and resolute refinement $H$. Using Proposition \ref{facile1}, we deduce that, for every
$p\in \mathcal{P}$, $H(p)\subseteq F(p)\cap C^U(p)$ and so $F(p)\cap C^U(p)\neq\varnothing$.

Conversely, assume that, for every $p\in \mathcal{P}$, $F(p)\cap C^{U}(p)\neq\varnothing$. Let $(p_j)_{j=1}^R\in \mathcal{P}^R$ be a system of representatives for the $U$-orbits in $\mathcal{P}$ and $(\mu_j)_{j=1}^R$ be a sequence of matchings such that, for every $j\in \ldbrack R \rdbrack$, $\mu_j\in F(p_j)\cap C^U(p_j)\neq\varnothing$.
By Theorem \ref{existence} there exists a unique $U$-symmetric and resolute matching mechanism $H$ on $\mathcal{P}$ such that, for every $j\in \ldbrack R \rdbrack$, $H(p_j)=\{\mu_j\}$.
Let us prove that $H$ is a refinement of $F$. Let $p\in \mathcal{P}$ and consider $j\in\ldbrack R \rdbrack$ and $\varphi_j\in U$ such that $p=p_j^{\varphi_j}$.
Then, using the $U$-symmetry of $H$ and $F$, we get
\[
H(p)=H(p_j^{\varphi_j})=H(p_j)^{\varphi_j}=\{\mu_j^{\varphi_j}\}\subseteq F(p_j)^{\varphi_j}=F(p_j^{\varphi_j})=F(p).
\]
\end{proof}

\begin{corollary}\label{facile3}
Let $U\le G^*$. Then, there exists a resolute and $U$-symmetric matching mechanism
if and only if, for every $p\in \mathcal{P}$, $C^U(p)\neq\varnothing$.
\end{corollary}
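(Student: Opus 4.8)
The plan is to obtain this corollary as a direct specialization of Theorem \ref{F-ref-consistent}, applied to the matching mechanism $F=TO$. The preliminary observation I would record is twofold. First, $TO$ is $U$-symmetric: by Proposition \ref{STsym} it is $G^*$-symmetric, and since $U\le G^*$ the identity $TO(p^\varphi)=TO(p)^\varphi$ holds in particular for every $\varphi\in U$. Second, for every $p\in\mathcal{P}$ one has $TO(p)\cap C^U(p)=\mathcal{M}\cap C^U(p)=C^U(p)$, because $C^U(p)\subseteq\mathcal{M}$ by the very definition of $C^U$. Thus the condition ``$TO(p)\cap C^U(p)\neq\varnothing$ for all $p$'' appearing in Theorem \ref{F-ref-consistent} is literally the condition ``$C^U(p)\neq\varnothing$ for all $p$''.

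For the ``only if'' direction I would argue as follows. Suppose $F$ is a resolute and $U$-symmetric matching mechanism. Since $F(p)\subseteq\mathcal{M}=TO(p)$ for every $p$, $F$ is a refinement of $TO$; being resolute, it is in particular a $U$-symmetric and resolute refinement of $TO$. By the ``only if'' part of Theorem \ref{F-ref-consistent} we get $TO(p)\cap C^U(p)\neq\varnothing$ for every $p$, i.e. $C^U(p)\neq\varnothing$ for every $p$. (Equivalently, one could invoke Proposition \ref{facile1}$(ii)$ directly: $F$ is a refinement of $C^U$, so $\varnothing\neq F(p)\subseteq C^U(p)$.)

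For the ``if'' direction, assume $C^U(p)\neq\varnothing$ for every $p\in\mathcal{P}$. Then, by the preliminary observation, $TO(p)\cap C^U(p)=C^U(p)\neq\varnothing$ for every $p$, so the ``if'' part of Theorem \ref{F-ref-consistent} produces a $U$-symmetric and resolute refinement $H$ of $TO$; this $H$ is in particular a resolute and $U$-symmetric matching mechanism, which is what we wanted.

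I do not anticipate any real obstacle here, since the statement is a routine consequence of Theorem \ref{F-ref-consistent}: the only points requiring a little care are making explicit that every matching mechanism is automatically a refinement of $TO$, and that $TO(p)\cap C^U(p)$ collapses to $C^U(p)$. If one preferred to avoid $TO$ altogether, an alternative route for the ``if'' direction is to apply Theorem \ref{F-ref-consistent} to $F=C^U$ itself (which is $U$-symmetric by Proposition \ref{facile1}$(i)$), or to invoke Theorem \ref{existence} after choosing a system $(p_j)_{j=1}^R$ of representatives of the $U$-orbits of $\mathcal{P}$ together with an arbitrary $\mu_j\in C^U(p_j)$ for each $j$.
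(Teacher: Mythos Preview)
Your proof is correct and follows exactly the same approach as the paper, which simply reads ``Set $F=TO$ and apply Theorem \ref{F-ref-consistent}.'' You have merely spelled out the details (that $TO$ is $U$-symmetric and that $TO(p)\cap C^U(p)=C^U(p)$) that the paper leaves implicit.
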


\begin{proof}
Simply set $F=TO$ and apply Theorem \ref{F-ref-consistent}.
\end{proof}

\section{Semiregular permutation groups}\label{semi-reg}

\subsection{Semiregular subgroups of $G^*$}\label{sect-semi}

We start this section recalling some definitions from the theory of permutation groups.

Let $X$ be a nonempty and finite set and $S$ be a subgroup of $\mathrm{Sym}(X)$.  $S$ is called semiregular if, for every $x\in X$, we have $\mathrm{Stab}_S(x)=\{id_X\}$. In other words, $S$ is semiregular if the only $\varphi\in S$ fixing at least an element of $X$ is $\varphi=id_X.$\footnote{The concept of semiregularity is widely considered in the literature of permutation groups. Some recent contributions involving semiregular groups are, for instance,  Maynard and Siemons (2002), Bereczky and Mar\`oti (2008), Kov\`acs et al. (2015).} $S$ is called transitive if, for every $x,y\in X$, there exists $s\in S$ such that $s(x)=y$; regular if $S$ is transitive and semiregular. Clearly $S$ is transitive if and only if $S$ admits a unique orbit on $X$. Moreover, if $S$ is transitive, then $|X|$ divides $|S|$ because, once chosen $x\in X$, we have  $|X|=|x^S|=\frac{|S|}{|\mathrm{Stab}_S(x)|}$.

The following proposition describes some general properties of semiregular groups of permutations.\footnote{Even though Proposition \ref{facili-semireg} expresses classic and well-established facts, we could not find a  reference containing the statements necessary for our scopes. So, for clarity of exposition, we have preferred to state and prove what needed.}

\begin{proposition}\label{facili-semireg} Let $S\leq \mathrm{Sym}(X)$ be semiregular. Then, the following facts hold true:
\begin{itemize}
\item[$(i)$] every orbit of $S$ has size $|S|$;
\item[$(ii)$] $|S|$ divides $|X|$ and the number of orbits of $S$ is $\frac{|X|}{|S|}$;
\item[$(iii)$] $S$ is regular if and only if $|S|=|X|$;

\item[$(iv)$] every subgroup of $S$ is semiregular;
\item[$(v)$] if $s_1,s_2\in S$ are such that there exists $x\in X$ with $s_1(x)=s_2(x)$, then $s_1=s_2$.
\end{itemize}
\end{proposition}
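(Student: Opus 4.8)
The plan is to establish the five items in the stated order, since $(ii)$ uses $(i)$ and $(iii)$ uses $(ii)$, while $(iv)$ and $(v)$ are independent and follow at once from unwinding the definition of semiregularity. The only tools needed are the orbit-stabilizer identity $|x^S|=|S|/|\mathrm{Stab}_S(x)|$ and the fact that the orbits of $S$ partition $X$, both recalled in Appendix \ref{GT1}, together with the characterization of transitivity as ``having a single orbit''.

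For $(i)$ I would fix $x\in X$; semiregularity gives $\mathrm{Stab}_S(x)=\{id_X\}$, so the orbit-stabilizer identity yields $|x^S|=|S|$. For $(ii)$, writing $k$ for the number $|X^S|$ of orbits, the partition property together with $(i)$ gives $|X|=k|S|$, hence $|S|$ divides $|X|$ and $k=|X|/|S|$. For $(iii)$, $S$ is regular if and only if it is transitive and semiregular; since $S$ is semiregular by hypothesis, this is equivalent to transitivity, that is, to $S$ having exactly one orbit, that is, by $(ii)$, to $|X|/|S|=1$, i.e.\ $|S|=|X|$.

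For $(iv)$, if $T\le S$ then for every $x\in X$ we have $\mathrm{Stab}_T(x)=T\cap\mathrm{Stab}_S(x)=T\cap\{id_X\}=\{id_X\}$, so $T$ is semiregular. For $(v)$, if $s_1,s_2\in S$ satisfy $s_1(x)=s_2(x)$ for some $x\in X$, then $(s_2^{-1}s_1)(x)=x$, so $s_2^{-1}s_1\in\mathrm{Stab}_S(x)=\{id_X\}$, whence $s_1=s_2$.

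These arguments are short and entirely routine, and there is no genuine obstacle. The one point requiring a little care is to invoke the already-recorded facts correctly — in particular the orbit-stabilizer formula and the identification of transitivity with the existence of a unique orbit — rather than reproving them. This proposition simply packages the standard bookkeeping about semiregular groups that will be used later to control the sizes of orbits and centralizers.
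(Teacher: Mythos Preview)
Your proof is correct and matches the paper's own argument essentially line for line: the same use of orbit-stabilizer for $(i)$, the orbit partition for $(ii)$, the identification of transitivity with a single orbit for $(iii)$, and the same stabilizer computations for $(iv)$ and $(v)$. If anything, your handling of $(iii)$ via $(ii)$ is marginally cleaner than the paper's separate divisibility argument, but the approaches are the same.
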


\begin{proof}
$(i)$ Pick $x\in X$. Then, $|x^S|=\frac{|S|}{|\mathrm{Stab}_S(x)|}=|S|$.

$(ii)$ We know that $X$ is disjoint union of the orbits of $S$. By $(i)$, each orbit of $S$ has size $|S|$. Thus, we deduce that
$|S|$ divides $|X|$ and that the number of orbits of $S$ is $\frac{|X|}{|S|}$;

$(iii)$ Assume that $S$ is regular. Then, $S$ is transitive and hence we have $x^S=X$ and so $|x^S|=|X|$. By the properties of the orbits, we know then that $|X|$ divides $|S|$. Moreover, by $(i)$, we know that $|S|$ divides $|X|$. We can thus conclude that $|S|=|X|$. Conversely, assume that $|S|=|X|$. Then, every orbit of $S$ on $X$ has size $|X|$. Thus, there is a unique orbit and $S$ is transitive.

$(iv)$ Let $U\leq S$. Then, for every $x\in X$, we have $\mathrm{Stab}_U(x)=U\cap \mathrm{Stab}_S(x)=U\cap \{id_X\}= \{id_X\}$. Thus, $U$ is semiregular.

$(v)$ Let $s_1,s_2\in S$ and $x\in X$ be such that $s_1(x)=s_2(x)$. Thus, $s_2^{-1}s_1(x)=x$, that is, $s_2^{-1}s_1\in \mathrm{Stab}_S(x)= \{id_X\}$. As a consequence, $s_2^{-1}s_1=id_X$, that is, $s_1=s_2$.
\end{proof}

We emphasize the strong consequence of semiregularity expressed by Proposition \ref{facili-semireg}$(v)$: if two permutations of $S$ coincide on a single element of $X$, then they coincide on the whole set $X$.

From here on, we focus on the properties of semiregular subgroups of $G^*$ under the assumption that $n$ is odd.

\begin{proposition}\label{orbiteS1}
Let $n$ be odd and $S\le G$ be semiregular. Then, the following facts hold true:
\begin{itemize}
\item[$(i)$] $|S|$ divides $n$, $|S|$ is odd and the number of orbits of $S$ is even;
\item[$(ii)$] every orbit of $S$ is included in $W$ or is included in $M$;
\item[$(iii)$] there are $\frac{n}{|S|}$ orbits of $S$ included in $W$ and $\frac{n}{|S|}$ orbits of $S$ included in $M$.
\end{itemize}
\end{proposition}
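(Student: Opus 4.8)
The plan is to build (i)--(iii) directly on Proposition \ref{facili-semireg}, using only the extra structural fact that every element of $S$ preserves the partition $\{W,M\}$ of $I$.

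First I would dispose of (ii): for $z\in I$ the orbit $z^S=\{\varphi(z):\varphi\in S\}$ is contained in $W$ whenever $z\in W$ and in $M$ whenever $z\in M$, simply because $\varphi(W)=W$ and $\varphi(M)=M$ for all $\varphi\in S\le G$. Hence no $S$-orbit straddles the two blocks, which is exactly (ii); moreover this lets me speak of ``the $S$-orbits contained in $W$'' and ``the $S$-orbits contained in $M$'', and $W$ (resp.\ $M$) is the disjoint union of the former (resp.\ the latter).

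Next, for (iii), I would invoke Proposition \ref{facili-semireg}(i): since $S$ is semiregular, every $S$-orbit has size $|S|$. Writing $W$ as a disjoint union of $S$-orbits each of cardinality $|S|$ forces $|S|$ to divide $|W|=n$ and shows that the number of $S$-orbits contained in $W$ equals $n/|S|$; the identical argument applied to $M$ gives $n/|S|$ orbits contained in $M$. That is (iii).

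Finally (i) follows by arithmetic: $|S|\mid n$ has just been established, and since $n$ is odd every divisor of $n$ is odd, so $|S|$ is odd; and by (ii)--(iii) the total number of $S$-orbits of $I$ is $n/|S|+n/|S|=2\,(n/|S|)$, which is even. (As a consistency check this agrees with Proposition \ref{facili-semireg}(ii), which already yields $|S|\mid 2n$ and $2n/|S|$ orbits in total.) I do not expect a genuine obstacle here: the whole argument is a short combination of Proposition \ref{facili-semireg} with the block-preservation property of $S$. The only point demanding a little care is the bookkeeping that the $S$-orbits contained in $W$ really do partition $W$, so that counting them against $|W|$ is legitimate; but this is immediate once (ii) is in place.
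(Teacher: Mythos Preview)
Your argument is correct and follows essentially the same approach as the paper: first establish (ii) from the block-preservation property of $S\le G$, then use Proposition \ref{facili-semireg}(i) to count orbits inside $W$ and $M$ separately for (iii), and finally read off (i) by arithmetic. The only cosmetic difference is that the paper computes the total number of orbits as $2n/|S|$ via Proposition \ref{facili-semireg}(ii) and notes this is even because $|S|$ is odd, whereas you write it as $n/|S|+n/|S|=2(n/|S|)$; the content is identical.
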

\begin{proof}
Since $S\leq G$, every $\varphi\in S$ takes $W$ in $W$ and $M$ in $M$. Thus, every orbit of $S$ is included in $W$ or in $M$ and this proves $(ii)$. By $(ii)$, we know that $W$ is union of orbits of $S$ and $M$ is union of orbits of $S$. Since $S$ is semiregular, by Proposition \ref{facili-semireg}, we conclude that each orbit of $S$ has size $|S|$. We conclude that $|S|$ divides $|W|=|M|=n$. Thus, there are $\frac{n}{|S|}$ orbits of $S$ included in $W$ and $\frac{n}{|S|}$ orbits of $S$ included in $M$ and this proves $(iii)$. Moreover, since $n$ is odd, we get that $|S|$ is odd. Moreover, by Proposition \ref{facili-semireg}, we know that the number of orbits of $S$ is $\frac{2n}{|S|}$ , which is even because $|S|$ is odd. This proves $(i)$.
\end{proof}

\begin{proposition}\label{orbiteS2} Let $n$ be odd and $S\le G^*$ be semiregular and such that $S\not\le G$. Then, the following facts hold true:
\begin{itemize}
\item[$(i)$] $|S|$ divides $2n$, $|S|$ is even and the number of orbits of $S$ is odd;
\item[$(ii)$] for every $z\in I$, $|z^S\cap W|=|z^S\cap M|=\frac{|S|}{2}$.
\end{itemize}
\end{proposition}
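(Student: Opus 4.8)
The plan is to build everything on the fact that $G$ is a subgroup of index $2$ in $G^*$, so that the hypothesis $S\not\le G$ is equivalent to the existence of some $\varphi\in S$ with $\varphi(W)=M$ and $\varphi(M)=W$, and this single permutation will drive both parts. I would first invoke Proposition \ref{facili-semireg}: since $S\le G^*\le\mathrm{Sym}(I)$ is semiregular and $|I|=2n$, every orbit of $S$ on $I$ has size $|S|$, the integer $|S|$ divides $2n$, and the number of orbits of $S$ equals $2n/|S|$. This already settles the divisibility part of $(i)$, and reduces the rest of $(i)$ to showing that $|S|$ is even (the parity of the orbit count then being automatic, as explained below).

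To show $|S|$ is even I would argue that a chosen $\varphi\in S\setminus G$ has even order: note $\varphi^2\in G$ (it fixes the partition $\{W,M\}$ setwise without swapping), so if $|\varphi|=k$ were odd then $\varphi^{k-1}=(\varphi^2)^{(k-1)/2}\in G$ and hence $\varphi=(\varphi^{k-1})^{-1}\in G$, a contradiction; thus $k$ is even and, by Lagrange, $2\mid|S|$. Writing $|S|=2m$, from $2m\mid 2n$ I get $m\mid n$; since $n$ is odd, $n/m$ is odd as well, and the number of orbits of $S$ is $2n/|S|=n/m$, which is therefore odd. This completes $(i)$.

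For $(ii)$ I would take any $z\in I$, set $O\coloneq z^S$ (so $|O|=|S|=2m$), and use that $O$ is $S$-invariant: the permutation $\varphi$ restricts to a bijection of $O$ which sends $O\cap W$ into $O\cap M$ and $O\cap M$ into $O\cap W$, because $\varphi(W)=M$ and $\varphi(M)=W$. Injectivity of $\varphi$ then forces $|O\cap W|=|O\cap M|$, and since $\{O\cap W,O\cap M\}$ partitions $O$, each has size $m=|S|/2$. None of the steps involves a serious obstacle; the only points requiring a little care are the parity argument (turning $\varphi\notin G$ into an element of even order) and the elementary but essential observation that $m\mid n$ with $n$ odd is exactly what makes the number of orbits odd. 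Conceptually, the thing to keep in view is that $S\not\le G$ is used in two genuinely different ways: first to force $|S|$ even, and then, via the very permutation that swaps $W$ and $M$, to balance each orbit evenly between the two sides.
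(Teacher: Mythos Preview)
Your proof is correct and follows essentially the same approach as the paper: both pick some $\varphi\in S\setminus G$ and use that it swaps $W$ and $M$ to balance each orbit between the two sides. The only cosmetic difference is the order of the argument---you show $|S|$ is even directly from the order of $\varphi$ via Lagrange and then prove $(ii)$, whereas the paper first establishes the orbit balance in $(ii)$ and reads off $|S|=2|z^S\cap W|$ from it.
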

\begin{proof}
Let $z\in I$ and prove that $|z^S\cap W|=|z^S\cap M|$. Since $S\not\leq G$ there exists $s\in S\setminus G\subseteq G^*\setminus G$. Since $s$ takes $W$ in $M$, we can consider the function  $\overline{s}:z^S\cap W\rightarrow z^S\cap M$ defined, for every $x\in z^S\cap W$, by $\overline{s}(x)=s(x)$.
We have that $\overline{s}$ is injective, because if $x_1,x_2\in z^S\cap W$ are such that $\overline{s}(x_1)=\overline{s}(x_2)$, then we have that
$s(x_1)=s(x_2)$ and thus $x_1=x_2$. Let us prove that $\overline{s}$ is surjective. First of all, let us observe that $s^{-1}\in S\setminus G$. Indeed  $s^{-1}\in S$ because $S$ is a group. Moreover, $s^{-1}\notin G$ because, being $G$ a group, $s^{-1}\in G$ implies $s\in G$. Consider now $y\in z^S\cap M$ and note that $s^{-1}(y)\in z^S\cap W$ and $\overline{s}(s^{-1}(y))=ss^{-1}(y)=y$. We conclude then that $\overline{s}$ is a bijection, and so $|z^S\cap W|=|z^S\cap M|$. Using Proposition \ref{facili-semireg}, we obtain that $$|S|=|z^S|=|z^S\cap (W\cup M)|=
|(z^S\cap W)\cup(z^S\cap M)|=|z^S\cap W|+|z^S\cap M|=2|z^S\cap W|.$$ Thus, $|S|$ is even  and $|z^S\cap W|=|z^S\cap M|=\frac{|S|}{2}.$
Finally, by Proposition \ref{facili-semireg}, we know that $|S|$ divides $2n$. Since $n$ is odd and $|S|$ is even, we conclude that the number of orbits of $S$, that is $\frac{2n}{|S|}$, is odd. This proves $(i)$ and $(ii)$.
\end{proof}

We present a basic but useful result about centralizers in $G^*$.\footnote{Proposition \ref{centralizza-cicloni}$(i)$ is well known, but we could not find an exact reference. So, for the sake of a self-contained exposition, we have preferred to prove it.}
 \begin{proposition}\label{centralizza-cicloni}
Let $\varphi\in \mathrm{Sym}(I)$ be a $2n$-cycle. Then, the following facts hold true:
\begin{itemize}
\item[$(i)$] $C_{\mathrm{Sym}(I)}(\langle\varphi\rangle)=\langle\varphi\rangle$;
\item[$(ii)$] if $\varphi\in G^*$, then $C_{G^*}(\langle\varphi\rangle)=\langle\varphi\rangle$;
\item[$(iii)$] if $\varphi\in G^*\setminus G$ and $n$ is odd, then $\varphi^n$ is the unique element of order $2$ in $C_{G^*}(\langle\varphi\rangle)\cap (G^*\setminus G)$;
\item[$(iv)$] if $\varphi\in G^*\setminus G$ and $n$ is even, then there exists no element of order $2$ in $C_{G^*}(\langle\varphi\rangle)\cap (G^*\setminus G)$.
\end{itemize}
\end{proposition}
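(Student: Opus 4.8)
The plan is to prove part $(i)$ first — the identification of the centralizer of a full cycle with the cyclic group it generates — since the remaining three parts follow from it by a short bookkeeping argument on the parity of $n$. Write $\varphi=(x_1\,x_2\,\cdots\,x_{2n})$ with $\{x_1,\dots,x_{2n}\}=I$; note that a $2n$-cycle has order $2n$, so $|\langle\varphi\rangle|=2n$. The inclusion $\langle\varphi\rangle\subseteq C_{\mathrm{Sym}(I)}(\langle\varphi\rangle)$ is immediate, since every power of $\varphi$ commutes with $\varphi$. For the converse, take $\psi\in C_{\mathrm{Sym}(I)}(\langle\varphi\rangle)$, so that $\psi\varphi\psi^{-1}=\varphi$; by the conjugation formula \eqref{conj-cyc} this reads $(\psi(x_1)\,\cdots\,\psi(x_{2n}))=(x_1\,\cdots\,x_{2n})$. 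Two $2n$-cycles on $I$ coincide as permutations precisely when their defining lists are cyclic shifts of one another, so there is some $k$ with $\psi(x_i)=x_{i+k}$ for every $i$ (indices read modulo $2n$). Since $\varphi^k(x_i)=x_{i+k}$ as well, Proposition \ref{facili-semireg}$(v)$ (or a direct check) gives $\psi=\varphi^k\in\langle\varphi\rangle$, establishing $C_{\mathrm{Sym}(I)}(\langle\varphi\rangle)=\langle\varphi\rangle$.

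For $(ii)$ I would just intersect with $G^*$: since $G^*\le\mathrm{Sym}(I)$, one has $C_{G^*}(\langle\varphi\rangle)=C_{\mathrm{Sym}(I)}(\langle\varphi\rangle)\cap G^*=\langle\varphi\rangle\cap G^*$, and because $\varphi\in G^*$ and $G^*$ is a group we have $\langle\varphi\rangle\subseteq G^*$, so the intersection equals $\langle\varphi\rangle$. The key structural fact to carry into the last two parts is then that $C_{G^*}(\langle\varphi\rangle)=\langle\varphi\rangle$ is cyclic of order $2n$, hence contains exactly one element of order $2$, namely $\varphi^n$. Everything reduces to deciding whether $\varphi^n$ lies in $G$ or in $G^*\setminus G$.

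To settle that, recall that $\varphi\in G^*\setminus G$ forces $\varphi(W)=M$ and $\varphi(M)=W$, so $\varphi$ swaps the two blocks; consequently $\varphi^k(W)=W$ when $k$ is even and $\varphi^k(W)=M$ when $k$ is odd. For part $(iii)$, $n$ odd gives $\varphi^n(W)=M\neq W$, hence $\varphi^n\in G^*\setminus G$; being the unique order-$2$ element of $C_{G^*}(\langle\varphi\rangle)$, it is a fortiori the unique order-$2$ element of $C_{G^*}(\langle\varphi\rangle)\cap(G^*\setminus G)$. For part $(iv)$, $n$ even gives $\varphi^n(W)=W$ and $\varphi^n(M)=M$, so $\varphi^n\in G$; since $\varphi^n$ is the only order-$2$ element of $C_{G^*}(\langle\varphi\rangle)$ and it sits in $G$, the set $C_{G^*}(\langle\varphi\rangle)\cap(G^*\setminus G)$ contains no element of order $2$.

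I do not expect a genuine obstacle here: the whole proposition is elementary once $(i)$ is in hand. The only point requiring a little care is the classical step inside $(i)$ — that equal cycle notations for a full cycle differ only by a cyclic rotation of the list — which is exactly what pins the centralizing permutation down to a power of $\varphi$; the rest is routine parity arithmetic together with the defining property of elements of $G^*\setminus G$.
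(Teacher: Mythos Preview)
Your proof is correct. Parts $(ii)$--$(iv)$ match the paper's argument essentially verbatim: intersect with $G^*$, observe that a cyclic group of order $2n$ has $\varphi^n$ as its sole involution, and then decide the side of the partition $\{G,\,G^*\setminus G\}$ on which $\varphi^n$ falls by the parity of $n$.

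Where you diverge is in the proof of $(i)$. The paper argues structurally: it observes that $C\coloneq C_{\mathrm{Sym}(I)}(\langle\varphi\rangle)$ contains $\langle\varphi\rangle$ (abelianity), then shows $C$ is semiregular (if $\psi\in C$ fixes some $z$, transitivity of $\langle\varphi\rangle$ forces $\psi$ to fix every $\varphi^k(z)$, hence all of $I$); since $C$ also contains the transitive group $\langle\varphi\rangle$, it is regular, whence $|C|=2n$ by Proposition~\ref{facili-semireg}$(iii)$ and so $C=\langle\varphi\rangle$. Your route is the more computational one: from $\psi\varphi\psi^{-1}=\varphi$ and the conjugation formula \eqref{conj-cyc} you obtain equality of cycle notations, invoke the fact that two writings of the same full cycle differ by a cyclic shift, and read off $\psi=\varphi^k$. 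Both are standard and sound; the paper's version dovetails with the semiregularity machinery it develops anyway for Theorem~\ref{main-semireg2}, while yours is more self-contained and avoids appealing to that framework. One small remark: your citation of Proposition~\ref{facili-semireg}$(v)$ at the end of the argument for $(i)$ is unnecessary, since once you have $\psi(x_i)=\varphi^k(x_i)$ for every $i$ and $\{x_1,\dots,x_{2n}\}=I$, the equality $\psi=\varphi^k$ is immediate.
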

\begin{proof} $(i)$ By the fact that $\langle\varphi\rangle$ is abelian, we immediately get that $C\coloneq C_{\mathrm{Sym}(I)}(\langle\varphi\rangle)\geq \langle\varphi\rangle$. Let us prove that $C$ is semiregular. Consider then $z\in I$ and $\psi \in \mathrm{Stab}_{C}(z)$ and prove that $\psi=id_I$. Given $u\in I$, since $\langle\varphi\rangle$ is transitive, we know that
there exists $k\in\ldbrack 2n\rdbrack$ such that $u=\varphi^k(z)$. Thus,
$$\psi(u)=\psi\varphi^k(z)=\varphi^k\psi(z)=\varphi^k(z)=u.$$
We conclude then that $\psi=id_I$, as desired.
$C$ is also transitive because it contains the transitive group $\langle\varphi\rangle$. Hence, $C$ is regular. By Proposition \ref{facili-semireg}$(iii)$, we deduce that $|C|=2n$ and hence $C=\langle\varphi\rangle$.

 $(ii)$ Let $\varphi\in G^*$. By $(i)$, we have that $C_{G^*}(\langle\varphi\rangle)= G^*\cap C_{\mathrm{Sym}(I)}(\langle\varphi\rangle)=G^*\cap \langle\varphi\rangle=\langle\varphi\rangle$.

$(iii)$ Let $\varphi\in G^*\setminus G$ and $n$ be odd. By $(ii)$ the only element of order $2$ in $C_{G^*}(\langle\varphi\rangle)$ is $\varphi^n$. Since $n$ is odd we also have that  $\varphi^n\in G^*\setminus G$.

$(iv)$ Let $\varphi\in G^*\setminus G$ and $n$ be even. By $(ii)$ the only element of order $2$ in $C_{G^*}(\langle\varphi\rangle)$ is $\varphi^n$. Since $n$ is even we have that $\varphi^n\in G$ and thus $\varphi^n\not\in G^*\setminus G$.
\end{proof}

Let us prove the main result of the section.

\begin{theorem}\label{main-semireg2} Let $n$ be odd and $S\leq G^*$ be semiregular. Then, there exists $\varphi\in C_{G^*}(S)\cap(G^*\setminus G)$ such that $|\varphi|=2$.
\end{theorem}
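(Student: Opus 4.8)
The plan is to construct the required involution $\varphi$ by hand, prescribing it separately on $S$-invariant pieces of $I$ and exploiting that a semiregular group acts \emph{freely} on each of its orbits (Proposition \ref{facili-semireg}$(v)$): once a base point $z_0$ of an orbit $O$ is fixed, $s\mapsto s(z_0)$ is a bijection $S\to O$, so prescriptions of the form $s(z_0)\mapsto(s\alpha)(z_0)$ are unambiguous, automatically bijective, and interact nicely with the $S$-action. I would split into two cases according to whether $S\le G$, mirroring Propositions \ref{orbiteS1} and \ref{orbiteS2}.

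If $S\le G$, then every orbit of $S$ is contained in $W$ or in $M$, all orbits have size $|S|$, and there are equally many in $W$ and in $M$, say $W_1,\dots,W_k$ and $M_1,\dots,M_k$ with $k=n/|S|$. I would fix $x_i\in W_i$ and $y_i\in M_i$ for each $i\in\ldbrack k\rdbrack$ and set $\varphi(s(x_i))\coloneq s(y_i)$ and $\varphi(s(y_i))\coloneq s(x_i)$ for all $s\in S$. Then $\varphi^2=id_I$, $\varphi(W)=M$, and $\varphi(t(z))=t(\varphi(z))$ for all $t\in S$, $z\in I$ (a one-line check from $(ts)(x_i)=t(s(x_i))$); hence $\varphi\in C_{G^*}(S)\cap(G^*\setminus G)$ with $|\varphi|=2$.

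If $S\not\le G$, I would set $T\coloneq S\cap G$ and fix $s_0\in S\setminus G$. Since $G$ is normal in $G^*$ and $|G^*|/|G|=2$, the cosets $G$ and $s_0 G$ exhaust $G^*$, so $S=T\cup s_0T$, $[S:T]=2$, and $|S|=2|T|$. As $T\le G$ is semiregular (Proposition \ref{facili-semireg}$(iv)$), $|T|$ is odd (Proposition \ref{orbiteS1}$(i)$, using that $n$ is odd), so $|S|$ is even, and Cauchy's theorem yields $g\in S$ with $|g|=2$; since $|T|$ is odd, $g\notin T$, i.e.\ $g\in S\setminus G$. Using that each orbit $O$ of $S$ meets $W$ with $|O\cap W|=|O\cap M|=|S|/2$ (Proposition \ref{orbiteS2}$(ii)$), I would pick a base point $z_O\in O\cap W$ in every orbit and define $\varphi(s(z_O))\coloneq(sg)(z_O)$ for all $s\in S$ — that is, right translation by $g$ on each regular orbit. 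Then $\varphi^2(s(z_O))=(sg^2)(z_O)=s(z_O)$, so $|\varphi|=2$; $\varphi(t(s(z_O)))=(tsg)(z_O)=t(\varphi(s(z_O)))$, so $\varphi$ centralizes $S$; and since $O\cap W=\{t(z_O):t\in T\}$ while $O\cap M=\{s(z_O):s\in S\setminus T\}$ (equal cardinalities force these equalities), and $g\notin T$ makes $x\mapsto xg$ send $T$ onto $S\setminus T$ and $S\setminus T$ onto $T$, the map $\varphi$ interchanges $O\cap W$ and $O\cap M$. Thus $\varphi(W)=M$, and $\varphi\in C_{G^*}(S)\cap(G^*\setminus G)$ with $|\varphi|=2$.

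The main obstacle is the case $S\not\le G$: one must recognise that oddness of $n$ is precisely what forces $|S\cap G|$ odd and $|S|$ even, producing an involution $g\in S\setminus G$, and then — since $g$ itself need not centralize $S$ (e.g.\ when $S\cong S_3$) — that \emph{right} translation by $g$ on each regular orbit is the centralizing involution sought, and that it automatically swaps the women-part and the men-part of every orbit. The case $S\le G$ is easier but genuinely separate, since there $S$ contains no involution at all and one instead has to use the balance between the number of $S$-orbits inside $W$ and inside $M$.
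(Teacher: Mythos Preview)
Your proof is correct and follows essentially the same construction as the paper: the two cases $S\le G$ and $S\not\le G$ are handled identically, and in the second case your involution $g\in S\setminus G$ obtained via Cauchy is precisely the paper's $s^*$, with your ``right translation'' $\varphi(s(z_O))=(sg)(z_O)$ matching the paper's definition $\varphi(s(x_j))=s(y_j)=ss^*(x_j)$. Your framing in terms of the index-two subgroup $T=S\cap G$ and coset swapping is slightly more conceptual than the paper's, but the underlying argument is the same.
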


\begin{proof}  We have to show that there exists $\varphi\in G^*\setminus G$ satisfying the two following properties:
\begin{itemize}
\item[$(i)$]for every $s\in S$ and $z\in I$, $\varphi s (z)=s\varphi(z)$;
\item[$(ii)$]for every $z\in I$, $\varphi\varphi(z)=z$.
\end{itemize}
We will define $\varphi$ on $I$ by specifying its definition on the orbits of $S$ on $I$. To that purpose, we first observe that, given $z,z'\in I$ with $z'\in z^S$, there exists a unique $s\in S$ such that $z'=s(z)$. The existence of $s\in S$ for which $z'=s(z)$ follows directly from the definition of $S$-orbit of $z$. Furthermore, by Proposition \ref{facili-semireg}, if $s,s'\in S$ are such that $z'=s(z)=s'(z)$, then $s=s'$. This guarantees the uniqueness. As a consequence, given $z\in I$, we can uniquely represent every element of $z^S$ as $s(z)$ where $s\in S$. We will refer to such notable property as the {\it unique representation property} throughout the proof.
The proof is structured in two cases.

\vspace{1mm}

\noindent {\it Case 1. $S\le G$.}

\noindent By Proposition \ref{orbiteS1}, we know that every orbit of $S$ is included in $W$ or in $M$ and that  there are $r\coloneq\frac{n}{|S|}$ orbits of $S$ included in $W$ as well as $r$ orbits of $S$ included in $M$, for a total of $2r$ orbits. Let $O^w_1,\ldots,O^w_r$ be the distinct orbits of $S$ included in $W$ and $O^m_1,\ldots,O^m_r$ be the distinct orbits of $S$ included in $M$. For every $j\in \ldbrack r \rdbrack$, let us fix $x_j\in O^w_j$ and $y_j\in O^m_j$. Thus, for every $j\in \ldbrack r \rdbrack$, $O^w_j=x_j^S$ and $O^m_j=y_j^S$.

For every $j\in \ldbrack r \rdbrack$, we define $\varphi$ on $x_j^S\cup y_j^S$ by setting, for every $s\in S$,
$\varphi(s(x_j))\coloneq s(y_j)$ and $\varphi(s(y_j))\coloneq s(x_j)$.
Such a definition is consistent thanks to the unique representation property.
Note that, for every $z\in x_j^S$, $\varphi(z)\in y_j^S$ and, for every $z\in y_j^S$, $\varphi(z)\in x_j^S$. Indeed, if $z\in x_j^S$, then $z=s(x_j)$ for some $s\in S$ and $\varphi(z)=\varphi(s(x_j))=s(y_j)\in y_j^S$. Analogously, if $z\in y_j^S$, then $z=s(y_j)$ for some $s\in S$ and $\varphi(z)=\varphi(s(y_j))=s(x_j)\in x_j^S$. As a consequence, for every $z\in x_j^S\cup y_j^S$, we have that $\varphi(z)\in x_j^S\cup y_j^S$. Moreover, $\varphi:x_j^S\cup y_j^S\rightarrow x_j^S\cup y_j^S$ is a bijection. In fact, it is sufficient to prove that $\varphi:x_j^S\cup y_j^S\rightarrow x_j^S\cup y_j^S$ is injective. Consider then $z,z'\in x_j^S\cup y_j^S$ such that $\varphi(z)=\varphi(z')$ and prove that $z=z'$. By the property of $\varphi$ proved before, we must have $z,z'\in x_j^S$ or $z,z'\in y_j^S$. Assume $z,z'\in x_j^S$.
By definition of orbit, there exist $s,s'\in S$ such that $z=s(x_j)$ and $z'=s'(x_j)$. Thus, $\varphi(z)=\varphi(z')$  implies  $\varphi(s(x_j))=\varphi(s'(x_j))$ that in turn, by the definition of $\varphi$, implies $s(y_j)=s'(y_j)$. By the unique representation property, we deduce that $s=s'$ and therefore $z=s(x_j)=s'(x_j)=z'$.
The case where $z,z'\in y_j^S$ is analogous and thus omitted.

Let us prove that $\varphi\in G^*\setminus G$. It is clear that $\varphi\in \mathrm{Sym}(I)$. Thus, we only need to show that $\varphi(W)=M$. Consider then $x\in W$. We know that there exists  $j\in \ldbrack r \rdbrack$ such that $x\in x_j^S$. Then, $\varphi(x)\in y_j^S\subseteq M$. That shows $\varphi(W)\subseteq M$ and, since $|W|=|M|$, we deduce $\varphi(W)=M$.

Let us prove that $\varphi$ satisfies $(i)$. Consider $s\in S$ and $z\in I$, and let us show that $\varphi s (z)=s\varphi(z)$. If $z\in W$, there exists $j\in \ldbrack r \rdbrack$ such that $z\in x_j^S$. Then, $z=s'(x_j)$ for some $s'\in S$.  By definition of $\varphi$, we then have $\varphi s(z)=\varphi s s'(x_j)=s s'(y_j)$ and $s\varphi(z)=s\varphi (s'(x_j))=ss'(y_j)$; therefore $\varphi s (z)=s\varphi(z)$, as desired. If $z\in M$, there exists $j\in \ldbrack r \rdbrack$ such that $z\in y_j^S$. Then $z=s'(y_j)$ for some $s'\in S$.  By definition of $\varphi$, we then have $\varphi s(z)=\varphi s s'(y_j)=s s'(x_j)$ and $s\varphi(z)=s\varphi (s'(y_j))=ss(x_j)$; therefore $\varphi s (z)=s\varphi(z)$, as desired.

Let us finally prove that $\varphi$ satisfies $(ii)$. Consider $z\in I$. Assume first that $z\in W$. Then, there exists $j\in \ldbrack r \rdbrack$ such that $z\in x_j^S$. Let $s\in S$ be such that $z=s(x_j)$. Thus
\[
\varphi\varphi(z)=\varphi\varphi s(x_j)=\varphi s(y_j)=s (x_j)=z.
\]
Assume now that $z\in M$. Then, there exists $j\in \ldbrack r \rdbrack$ such that $z\in y_j^S$. Let $s\in S$ be such that $z=s(y_j)$. Thus
\[
\varphi\varphi(z)=\varphi\varphi s(y_j)=\varphi s(x_j)=s (y_j)=z.
\]
This completes the proof.

\vspace{1mm}

\noindent {\it Case 2. $S\not \le G$.}

\noindent Let $O_1,\ldots,O_r$ be the orbits of $S$, where $r\in\mathbb{N}$. By  Proposition \ref{orbiteS2}, we know that $|S|$ is even and that, for every $j\in \ldbrack r \rdbrack$, $|O_j\cap W|=|O_j\cap M|=\frac{|S|}{2}$. Since $2$ divides $|S|$, by Cauchy theorem, there exists $s^*\in S$ such that $|s^*|=2$. Let us prove that $s^*\in G^*\setminus G$. Indeed, by Proposition \ref{facili-semireg}, $\langle s^*\rangle$ is semiregular. Assume, by contradiction, that $\langle s^*\rangle\le G$. Then, since $n$ is odd, we can apply Proposition \ref{orbiteS1} to $\langle s^*\rangle$ and deduce that $|s^*|=|\langle s^*\rangle|=2$ divides $n$, a contradiction. For every $j\in \ldbrack r \rdbrack$, let us fix $x_j\in O_j\cap W$ and set $y_j\coloneq s^*(x_j)\in O_j\cap M$. Note that, for every $j\in \ldbrack r \rdbrack$, $x_j\coloneq s^*(y_j)$.

For every $j\in \ldbrack r \rdbrack$, we define $\varphi$ on $x_j^S$ by setting, for every $s\in S$, $\varphi(s(x_j))\coloneq s(y_j)$. Such a definition is consistent thanks to the unique representation property. Note that, for every $z\in x_j^S$, $\varphi(z)\in x_j^S$. Indeed, consider $z\in x_j^S$. Then $z=s(x_j)$ for some $s\in S$ and $\varphi(z)=\varphi(s(x_j))=s(y_j)=s s^*(x_j)\in x_j^S$. Moreover, $\varphi:x_j^S\rightarrow x_j^S$ is a bijection. In order to prove that fact it is sufficient to prove that $\varphi:x_j^S\rightarrow x_j^S$ is injective. Consider then $z,z'\in x_j^S$ such that $\varphi(z)=\varphi(z')$. We know that there are $s,s'\in S$ such that $z=s(x_j)$ and $z'=s'(x_j)$. Thus, $\varphi(z)=\varphi(z')$ is equivalent to $\varphi(s(x_j))=\varphi(s'(x_j))$ that in turn is equivalent to $s(y_j)=s'(y_j)$. By the unique representation property, we deduce that $s=s'$; therefore $z=s(x_j)=s'(x_j)=z'$. Thus, we conclude that $\varphi:x_j^S\rightarrow x_j^S$ is injective and then bijective.

We show that $\varphi\in G^*\setminus G$. It is immediate to prove that $\varphi\in \mathrm{Sym}(I)$.  Thus, it remains to show that $\varphi(W)=M$. Consider then $z\in W$ and prove that $\varphi(z)\in M$. We know that there exists $j\in \ldbrack r \rdbrack$ such that $z\in x_j^S$. Let $s\in S$ be such that $z=s(x_j)$. Since $z,x_j\in W$, we have that $s\in G$. Then, since $y_j\in M$, we have that $\varphi(z)=s(y_j)\in M$. That shows that $\varphi(W)\subseteq M$ and, since $|W|=|M|$, we deduce $\varphi(W)=M$.

We next prove that $\varphi$ satisfies $(i)$. Consider then $s\in S$ and $z\in I$, and show that $\varphi s (z)=s\varphi(z)$. Let $j\in \ldbrack r \rdbrack$ be such that $z\in x_j^S$.
Then $z=s'(x_j)$ for some $s'\in S$.  By definition of $\varphi$, we then have $\varphi s(z)=\varphi s s'(x_j)=s s'(y_j)$ and $s\varphi(z)=s\varphi (s'(x_j))=ss'(y_j)$; therefore $\varphi s (z)=s\varphi(z)$, as desired.  Thus, $(i)$ is proved.

We finally prove that $\varphi$ satisfies $(ii)$. Consider $z\in I$.
We know that there exists $j\in \ldbrack r \rdbrack$ such that $z\in x_j^S$. Let $s\in S$ be such that $z=s(x_j)$. Thus
$
\varphi\varphi(z)=\varphi\varphi s(x_j)=\varphi s(y_j)=\varphi s s^*(x_j)=s s^*(y_j)=s(x_j)=z.
$
\end{proof}

\subsection{Semiregularity of the $G^*$-stabilizers}\label{sr-stab}

In this section we analyze the $G^*$-stabilizers of preference profiles  with the main purpose to show that they are always semiregular subgroups of $G^*$ (Theorem \ref{stabilizzatori-semireg}). The next proposition is the main auxiliary result of the section: it characterizes the types of permutations belonging to the $G^*$-stabilizers of a preference profile. For its proof we rely on the content of Section \ref{GT1}.

\begin{proposition}\label{all-types}
Let $p\in \mathcal{P}$ and $\varphi\in \mathrm{Stab}_{G^*}(p)$.  Then, the following facts hold true:
\begin{itemize}
\item[$(i)$] if $\varphi\in G$, then $T_{\varphi}=[b^{(r)}]$, where $b,r\in\mathbb{N}$, $b=|\varphi|$ and $br=2n$;
\item[$(ii)$] if $\varphi\in G^*\setminus G$, then $T_{\varphi}=[b^{(r)}]$, where $b,r\in\mathbb{N}$, $b=|\varphi|$ is even and  $br=2n$.
\end{itemize}
\end{proposition}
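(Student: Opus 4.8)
The plan is to use that $\mathrm{Stab}_{G^*}(p)$ is a subgroup, so that every power $\varphi^k$ also fixes $p$, and to combine this with the rigidity of linear orders under permutations recorded in Lemma~\ref{phiRR} in order to force all the $\langle\varphi\rangle$-orbits of $I$ to have the same size.

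First I would unwind the hypothesis. Since $\mathrm{Stab}_{G^*}(p)$ is a subgroup of $G^*$, for every $k\in\mathbb{N}$ we have $\varphi^k\in\mathrm{Stab}_{G^*}(p)$, i.e. $p^{\varphi^k}=p$; by \eqref{action-w2} this means $p(\varphi^k(z))=\varphi^k p(z)$ for all $z\in I$ and $k\in\mathbb{N}$. For $z\in I$ let $n(z)$ be the size of the $\langle\varphi\rangle$-orbit of $z$, equivalently the least positive integer with $\varphi^{n(z)}(z)=z$, and let $D(z)$ be the ground set of $p(z)$, so that $D(z)=M$ if $z\in W$ and $D(z)=W$ if $z\in M$. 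The key observation I would isolate is: for every $z\in I$, the permutation $\varphi^{n(z)}$ fixes pointwise the set $D(z)$. Indeed, since $\varphi^{n(z)}(z)=z$ we get $p(z)=p(\varphi^{n(z)}(z))=\varphi^{n(z)}p(z)$, and because $p(z)\in\mathbf{L}(D(z))$, the part of Lemma~\ref{phiRR} stating that $\psi R=R$ holds iff $\psi$ fixes the underlying set pointwise, applied with $\psi=\varphi^{n(z)}$ and $R=p(z)$, gives the claim.

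From this I would deduce that all $\langle\varphi\rangle$-orbits of $I$ share a common size. Fix $x\in W$ and $y\in M$. Applying the observation to $x$ shows $\varphi^{n(x)}(y)=y$, hence $n(y)\mid n(x)$; applying it to $y$ shows $\varphi^{n(y)}(x)=x$, hence $n(x)\mid n(y)$; so $n(x)=n(y)$. Since this holds for every $x\in W$ and $y\in M$, every orbit has the same size $b$, and thus $T_\varphi=[b^{(r)}]$ where $r$ is the number of orbits, with $br=|I|=2n$ and $|\varphi|=\mathrm{lcm}(b,\dots,b)=b$. This gives $(i)$ in full and everything in $(ii)$ except the parity of $b$. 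For the parity, note that when $\varphi\in G^*\setminus G$ we have $\varphi(W)=M$ and $\varphi(M)=W$, so along $z,\varphi(z),\varphi^2(z),\dots$ with $z\in W$ membership alternates between $W$ and $M$; since $\varphi^b(z)=z\in W$, the orbit length $b$ must be even.

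The argument is short once the observation above is extracted, so I do not anticipate a genuine obstacle; the one point requiring care is the bookkeeping about ground sets in the case $\varphi\in G^*\setminus G$, where $\varphi$ carries $\mathbf{L}(W)$ to $\mathbf{L}(M)$ and vice versa. There one must make sure Lemma~\ref{phiRR} is invoked with $X=D(z)\subseteq Y=I$ and with the permutation $\varphi^{n(z)}$ rather than $\varphi$ itself, and one must keep track of the fact that $D(\varphi(z))=\varphi(D(z))$; the divisibility comparison, the lcm identity and the parity count are then entirely routine.
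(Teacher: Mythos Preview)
Your proof is correct and follows essentially the same approach as the paper: both hinge on the observation that if $\varphi^k$ fixes some $z\in I$ and lies in $\mathrm{Stab}_{G^*}(p)$, then by Lemma~\ref{phiRR} applied to $\varphi^k p(z)=p(z)$ the power $\varphi^k$ must fix $D(z)$ pointwise, which forces divisibility relations among orbit sizes. The paper carries this out by splitting into the cases $\varphi\in G$ and $\varphi\in G^*\setminus G$ from the start and in case~(i) passes through the restrictions of $\varphi$ to $W$ and $M$, while in case~(ii) it instead shows $\varphi^{b_j}=id_I$ using that each orbit meets both $W$ and $M$; your version unifies the two cases into the single divisibility comparison $n(x)\mid n(y)$ and $n(y)\mid n(x)$ for arbitrary $x\in W$, $y\in M$, appending the parity argument only at the end, which is a cleaner organisation of the same idea.
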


\begin{proof}
$(i)$ Let $\varphi\in G$. Then $\varphi$ takes the set $W$ into $W$ and the set $M$ into $M$. As a consequence, each orbit of
$\varphi$ on $I$ is included in one of the sets $W$ and $M$. 
Let $x_1,\ldots,x_k,y_1,\ldots,y_h$ be representatives of the orbits of $\varphi$ on $I$, where $k,h\ge 1$, $x_1,\ldots,x_k\in W$ and $y_1,\ldots,y_h\in M$.
Let us set, for every $j\in \ldbrack k \rdbrack$, $b_j\coloneq|x_j^{\langle\varphi\rangle}|$ and, for every $i\in \ldbrack h \rdbrack$,
$c_i\coloneq|y_i^{\langle\varphi\rangle}|$.

Pick $j\in \ldbrack k \rdbrack$. Since $\mathrm{Stab}_{G^*}(p)\cap G$ is a group, we have that $\varphi^{b_j}\in \mathrm{Stab}_{G^*}(p)\cap G$, and then, by \eqref{action-w2}, we deduce
\begin{equation}\label{orbite2}
\varphi^{b_j} p(x_j)=p(\varphi^{b_j}(x_j)).
\end{equation}
Using the fact that $x_j$ is fixed by $\varphi^{b_j}$, \eqref{orbite2} implies $\varphi^{b_j} p(x_j)=p(x_j)$. Since $p(x_j)\in \mathbf{L}(M)$, by Lemma \ref{phiRR}, we get that, for every $y\in M$, $\varphi^{b_j}(y)=y$. 
Consider now $\nu\in \mathrm{Sym}(M)$ defined, for every $y\in M$, by $\nu(y)=\varphi(y)$. For every $y\in M$, we have that $\nu^{b_j}(y)=\varphi^{b_j}(y)=y$. Thus, $\nu^{b_j}=id_M$  which implies $|\nu|$ divides $b_j$. Moreover, $y_1,\ldots,y_h\in M$ are representatives of the orbits of $\nu$ on $M$ and, for every $i\in\ldbrack h\rdbrack$, we have $y_i^{\langle\nu\rangle}=y_i^{\langle\varphi\rangle}$. As a consequence, for every $i\in \ldbrack h \rdbrack$, $c_i=|y_i^{\langle\nu\rangle}|$ divides $|\nu|$. Thus, for every $i\in \ldbrack h \rdbrack$, $c_i$ divides $b_j$.

Pick next $i\in \ldbrack h \rdbrack$. Since $\mathrm{Stab}_{G^*}(p)\cap G$ is a group, we have that $\varphi^{c_i}\in \mathrm{Stab}_{G^*}(p)\cap G$, and then, by \eqref{action-w2}, we deduce
\begin{equation}\label{orbite3}
\varphi^{c_i} p(y_i)=p(\varphi^{c_i}(y_i)).
\end{equation}
Using the fact that $y_i$ is fixed by $\varphi^{c_i}$, \eqref{orbite3} implies $\varphi^{c_i} p(y_i)=p(y_i)$. Since $p(y_i)\in \mathbf{L}(W)$, by Lemma \ref{phiRR}, we get that, for every $x\in W$, $\varphi^{c_i}(x)=x$. 
Consider now $\sigma\in \mathrm{Sym}(W)$ defined, for every $x\in W$, by $\sigma(x)=\varphi(x)$. For every $x\in W$, we have that $\sigma^{c_i}(x)=\varphi^{c_i}(x)=x$.
Thus, $\sigma^{c_i}=id_W$ and that implies that $|\sigma|$ divides $c_i$. Moreover, $x_1,\ldots,x_k\in W$ are representatives of the orbits of $\sigma$ on $W$ and, for every $j\in\ldbrack k\rdbrack$, we have $x_j^{\langle\sigma\rangle}=x_j^{\langle\varphi\rangle}$. As a consequence, for every $j\in \ldbrack k \rdbrack$, $b_j=|x_j^{\langle\sigma\rangle}|$ divides $|\sigma|$. Thus, for every $j\in \ldbrack k \rdbrack$, $b_j$ divides $c_i$.

We finally deduce that, for every $j\in\ldbrack k \rdbrack$ and $i\in\ldbrack h \rdbrack$, $b_j$ divides $c_i$ and $c_i$ divides $b_j$, and therefore $b_j=c_i$.  Thus, the parts of $\varphi$ are equal to each other and all equal to $|\varphi|$. Clearly, $|\varphi|r=2n$, where $r=k+h$.

$(ii)$ Assume that $\varphi\in G^*\setminus G$. Let $T_{\varphi}=[b_1,\dots, b_r]$, where $r\in\mathbb{N}$ and $b_j\in\mathbb{N}$ for all $j\in \ldbrack r \rdbrack$.
Observe that a permutation in $G^*\setminus G$ takes $W$ into $M$ and $M$ into $W$. Thus, every orbit of $\varphi$ on $I$ contains at least an element in $W$ and an element in $M$. Moreover, the number of women and men in an orbit is the same. As a consequence, we have that, for every $j\in \ldbrack r \rdbrack$, $b_j\geq 2$ and $b_j$ is even.
Let $z_1,\ldots,z_r\in I$ be representatives of the orbits of $\varphi$ on $I$. Pick $j\in \ldbrack r \rdbrack$. Since $\mathrm{Stab}_{G^*}(p)$ is a group, we have that $\varphi^{b_j}\in \mathrm{Stab}_{G^*}(p)$, and then, by \eqref{action-w2}, we deduce that,  for every $z\in I$,
\begin{equation}\label{orbite}
\varphi^{b_j} p(z)=p(\varphi^{b_j}(z)).
\end{equation}
Consider $x\in z_j^{\langle\varphi\rangle}\cap W$. Then, $x$ is fixed by $\varphi^{b_j}$, and thus \eqref{orbite} implies $\varphi^{b_j} p(x)=p(x)$. Since $p(x)\in \mathbf{L}(M)$, by Lemma \ref{phiRR}, we deduce that, for every $y\in M$, $\varphi^{b_j}(y)=y$. Consider now $y\in z_j^{\langle\varphi\rangle}\cap M$.
Then $y$ is fixed by $\varphi^{b_j}$, and thus \eqref{orbite} implies $\varphi^{b_j} p(y)=p(y)$. Since $p(y)\in \mathbf{L}(W)$, by Lemma \ref{phiRR}, we deduce that, for every $x\in W$, $\varphi^{b_j}(x)=x$. It follows that $\varphi^{b_j}=id_I$. As a consequence, $|\varphi|$ divides $b_j$.
Since we know that $b_j$ divides $|\varphi|$, we conclude that $|\varphi|= b_j$. Since that holds true for any $j\in\ldbrack r\rdbrack$, we obtain that the parts in $T_{\varphi}$ are equal to each other and all equal to $|\varphi|$. Clearly, $|\varphi|r=2n$.
\end{proof}

\begin{theorem}\label{stabilizzatori-semireg}
Let $p\in \mathcal{P}$. Then, $\mathrm{Stab}_{G^*}(p)$ is a semiregular subgroup of $G^*$.
\end{theorem}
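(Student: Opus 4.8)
The plan is to reduce the statement entirely to Proposition \ref{all-types}, which already does the combinatorial work of pinning down the cycle type of any element of $\mathrm{Stab}_{G^*}(p)$. First I would recall that $\mathrm{Stab}_{G^*}(p)$ is a subgroup of $G^*$: this was noted right after the definition of the $U$-stabilizer (it is the preimage of $\{p\}$ under the action of $G^*$ on $\mathcal{P}$ from Proposition \ref{action-mat}, hence a subgroup). So the only thing left is to verify the defining property of semiregularity, namely that for every $z\in I$ the $\mathrm{Stab}_{G^*}(p)$-stabilizer of $z$ is trivial.

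Fix $z\in I$ and let $\varphi\in \mathrm{Stab}_{G^*}(p)$ satisfy $\varphi(z)=z$; I must show $\varphi=id_I$. Apply Proposition \ref{all-types} to $\varphi$: regardless of whether $\varphi\in G$ (case $(i)$) or $\varphi\in G^*\setminus G$ (case $(ii)$), we obtain $T_{\varphi}=[b^{(r)}]$ for some $b,r\in\mathbb{N}$ with $b=|\varphi|$ and $br=2n$. In particular every orbit of $\varphi$ on $I$ has the same size $b$. But $\varphi(z)=z$ means the orbit $z^{\langle\varphi\rangle}$ has size $1$, so $1$ is one of the parts of $T_{\varphi}$; since all parts equal $b$, this forces $b=1$, i.e.\ $|\varphi|=1$, i.e.\ $\varphi=id_I$. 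Hence $\mathrm{Stab}_{\mathrm{Stab}_{G^*}(p)}(z)=\{id_I\}$ for every $z\in I$, which is exactly the definition of $\mathrm{Stab}_{G^*}(p)$ being semiregular.

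There is essentially no obstacle left at this stage: all the difficulty has been isolated in Proposition \ref{all-types}, whose proof exploits that $\mathrm{Stab}_{G^*}(p)$ is closed under powers together with Lemma \ref{phiRR} (a permutation that fixes a linear order pointwise-fixes its ground set) to force every part of the type to coincide with the order of the permutation. Given that proposition, the present theorem is a one-line deduction. The only care needed is the purely formal observation that fixed points of $\varphi$ count as orbits of size $1$ and therefore show up as parts equal to $1$ in $T_{\varphi}$, which is immediate from the definition of the type recalled in Section \ref{GT1}.
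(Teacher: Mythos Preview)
Your proof is correct and follows essentially the same route as the paper: both reduce the statement to Proposition \ref{all-types} and observe that a nontrivial $\varphi\in\mathrm{Stab}_{G^*}(p)$ has all orbits of size $b=|\varphi|\ge 2$, hence no fixed point. The only cosmetic difference is that the paper argues by contradiction (assume $\varphi\neq id_I$, derive $b\ge 2$, contradict the existence of a fixed point) while you argue directly (a fixed point forces $b=1$).
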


\begin{proof} Let $S\coloneq\mathrm{Stab}_{G^*}(p)\leq \mathrm{Sym}(I)$. We need to show that, for every $z\in I$,
$\mathrm{Stab}_S(z)=\{id_I\}$. Assume, by contradiction, that there exist $z\in I$ and $\varphi\in S\setminus\{id_I\}$ such that
$\varphi(z)=z$.
By the fact that $\varphi \in S\setminus\{id_I\}$ and by Proposition \ref{all-types}, we deduce that $T_{\varphi}=[b^{(r)}]$, with $b=|\varphi|\geq 2$. But such a permutation does not fix any element of $I$, a contradiction.
\end{proof}

We now describe, up to group isomorphisms,  the whole family of stabilizers when $n=3$. That description helps the reader to understand the many possibilities that can arise and the fact that the description of the stabilizers is not, in general, an immediate task. 

Let $n=3$ and consider $p\in \mathcal{P}$. By Theorem \ref{stabilizzatori-semireg} and Proposition \ref{facili-semireg}, we know that the size of $\mathrm{Stab}_{G^*}(p)$ divides 6 and hence it belongs to the set $\{1,2,3,6\}$.
As a consequence, up to isomorphisms, we have $\mathrm{Stab}_{G^*}(p)\in\{\{ id_I\},C_2, C_3, C_6, \mathrm{Sym}(\ldbrack 3\rdbrack)\}$.
Interestingly, all those cases can arise. Indeed, consider the following preference profiles
\[
p_1=\begin{array}{|ccc||ccc|}
\hline
1&2&3&4&5&6\\
\hline
\hline
4&4&4&1&1&1\\
5&5&5&2&2&3\\
6&6&6&3&3&2\\
\hline
\end{array}\,,
\quad
p_2=\begin{array}{|ccc||ccc|}
\hline
1&2&3&4&5&6\\
\hline
\hline
4&4&4&1&1&1\\
5&5&5&2&2&2\\
6&6&6&3&3&3\\
\hline
\end{array}\,,
\quad
p_3=\begin{array}{|ccc||ccc|}
\hline
1&2&3&4&5&6\\
\hline
\hline
4&5&6&2&3&1\\
5&6&4&1&2&3\\
6&4&5&3&1&2\\
\hline
\end{array}\,,
\]
\[
p_4=\begin{array}{|ccc||ccc|}
\hline
1&2&3&4&5&6\\
\hline
\hline
4&5&6&1&2&3\\
5&6&4&2&3&1\\
6&4&5&3&1&2\\
\hline
\end{array}\,,
\quad
p_5=\begin{array}{|ccc||ccc|}
\hline
1&2&3&4&5&6\\
\hline
\hline
4&5&6&1&2&3\\
5&6&4&3&1&2\\
6&4&5&2&3&1\\
\hline
\end{array}\,.
\]
A computation shows that
\[
\begin{array}{l}
\mathrm{Stab}_{G^*}(p_1)=\{id_I\},\\
\vspace{-2mm}\\
\mathrm{Stab}_{G^*}(p_2)= \langle(14)(25)(36)\rangle\cong C_2,\\
\vspace{-2mm}\\
\mathrm{Stab}_{G^*}(p_3)=\langle (123)(456)\rangle\cong C_3,\\
\vspace{-2mm}\\
\mathrm{Stab}_{G^*}(p_4)= \langle (153426)\rangle\cong C_{6},\\
\vspace{-2mm}\\
\mathrm{Stab}_{G^*}(p_5)=\{id_I, (123)(456),(132)(465), (14)(26)(35), (15)(24)(36),(16)(25)(34)\}\cong \mathrm{Sym}(\ldbrack 3\rdbrack).
\end{array}
\]

In light of Theorem \ref{F-ref-consistent}, preference profiles  with large stabilizers are crucial for establishing impossibility results. Indeed, if $p_1, p_2\in \mathcal{P}$ are such that $\mathrm{Stab}_{G^*}(p_1)\supseteq \mathrm{Stab}_{G^*}(p_2),$
by \eqref{nuova-veste}, we get that $C^U(p_1)\subseteq C^U(p_2)$, for all $U\leq G^*$. As a consequence, $C^U(p_1)\cap F(p_1)\subseteq C^U(p_2)\cap F(p_2)$ and so $p_1$ is a better candidate than $p_2$ to get $C^U(p_1)\cap F(p_1)=\varnothing$.

By Theorem \ref{stabilizzatori-semireg} and Proposition \ref{facili-semireg}, we know that the size of a stabilizer cannot be larger than $2n$. Propositions \ref{ciclone} and \ref{ciclone-odd} shows that such an upper bound can be reached by a stabilizer that is cyclic. Moreover, in the odd case, an additional property can also be satisfied.

\begin{proposition}\label{ciclone}
There exists $p\in \mathcal{P}$ such that $\mathrm{Stab}_{G^*}(p)$ is generated by a $2n$-cycle.
\end{proposition}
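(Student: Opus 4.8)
The plan is to write down an explicit $2n$-cycle $\varphi\in G^*$, build a preference profile along its unique orbit on $I$, and then use the semiregularity of stabilizers to show that nothing beyond $\langle\varphi\rangle$ can stabilize it.

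First I would fix the $2n$-cycle
\[
\varphi\coloneq(1\ \ n+1\ \ 2\ \ n+2\ \ \cdots\ \ n\ \ 2n)\in\mathrm{Sym}(I),
\]
so that $\varphi(x)=x+n$ for $x\in W$, $\varphi(y)=y-n+1$ for $y\in M\setminus\{2n\}$, and $\varphi(2n)=1$. Then $\varphi(W)=M$ and $\varphi(M)=W$, hence $\varphi\in G^*\setminus G$; moreover $\varphi$ is transitive on $I$, so $\langle\varphi\rangle$ is a regular subgroup of $G^*$ with $|\langle\varphi\rangle|=2n$. (Note that, in fact, \emph{every} $2n$-cycle in $G^*$ lies in $G^*\setminus G$ and alternates between $W$ and $M$, since a single $2n$-cycle cannot leave a proper nonempty subset of $I$ invariant; so focusing on this particular $\varphi$ is harmless.)

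Next I would define $p$ on the $\langle\varphi\rangle$-orbit $I$: choosing any $L_0\in\mathbf{L}(M)$, set
\[
p\bigl(\varphi^{j}(1)\bigr)\coloneq\varphi^{j}L_0\qquad\text{for }j\in\{0,1,\dots,2n-1\}.
\]
This is well defined because $\varphi^{2n}=id_I$ makes the prescription consistent at the wrap-around. To check $p\in\mathcal{P}$, I would use that $\varphi$ alternates sides: $\varphi^{j}(1)\in W$ for $j$ even and $\varphi^{j}(1)\in M$ for $j$ odd, while $\varphi^{j}(M)=M$ for $j$ even and $\varphi^{j}(M)=W$ for $j$ odd; hence, by Lemma \ref{phiRR}, $\varphi^{j}L_0\in\mathbf{L}(M)$ for $j$ even and $\varphi^{j}L_0\in\mathbf{L}(W)$ for $j$ odd, which is exactly what a preference profile requires. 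Then I would verify $\varphi\in\mathrm{Stab}_{G^*}(p)$: by \eqref{action-w2}, $p^{\varphi}=p$ amounts to $p(\varphi(z))=\varphi\,p(z)$ for all $z\in I$, and writing $z=\varphi^{j}(1)$ this is $\varphi^{j+1}L_0=\varphi(\varphi^{j}L_0)$, which holds by the identity $\varphi_1(\varphi_2 R)=(\varphi_1\varphi_2)R$ recalled in Section \ref{sec-relation} (together with $L_0=\varphi^{2n}L_0$ for the case $j=2n-1$). Thus $\langle\varphi\rangle\le\mathrm{Stab}_{G^*}(p)$.

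To finish, I would invoke Theorem \ref{stabilizzatori-semireg}: $S\coloneq\mathrm{Stab}_{G^*}(p)$ is semiregular. Since $\langle\varphi\rangle\le S$ and $\langle\varphi\rangle$ is transitive, $S$ is transitive, hence regular, so by Proposition \ref{facili-semireg}$(iii)$ we get $|S|=|I|=2n=|\langle\varphi\rangle|$; therefore $S=\langle\varphi\rangle$, i.e.\ $\mathrm{Stab}_{G^*}(p)$ is generated by the $2n$-cycle $\varphi$, as claimed. The only place that needs care is the routine bookkeeping in the construction of $p$ (correct domains of the linear orders, and consistency of the prescription around the orbit); once $\varphi$ stabilizes some such $p$, the semiregularity of stabilizers delivers the matching upper bound on $|\mathrm{Stab}_{G^*}(p)|$ for free, so there is no substantial obstacle.
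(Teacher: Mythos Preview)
Your argument is correct and follows essentially the same approach as the paper: the paper uses the very same $2n$-cycle $\varphi=(1\ n{+}1\ 2\ n{+}2\ \cdots\ n\ 2n)$, builds $p$ by propagating a fixed linear order along the orbit of $1$ (its explicit formulas reduce to your construction with the specific choice $L_0=[n{+}1,\ldots,2n]$), verifies $\varphi\in\mathrm{Stab}_{G^*}(p)$, and then invokes Theorem \ref{stabilizzatori-semireg} together with Proposition \ref{facili-semireg} to force $|\mathrm{Stab}_{G^*}(p)|=2n$. Your presentation is a touch cleaner because you use the identity $(\varphi_1\varphi_2)R=\varphi_1(\varphi_2 R)$ directly rather than unpacking the coordinates case by case, and you allow an arbitrary $L_0$, but the substance is the same.
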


\begin{proof}
Consider the $2n$-cycle $\varphi\coloneq(1\quad n+1\quad 2\quad n+2 \ \dots \ n\quad 2n)\in G^*$.
We are going to exhibit $p\in \mathcal{P}$ such that $\mathrm{Stab}_{G^*}(p)=\langle \varphi\rangle$. Note that $\varphi$ is a $2n$-cycle and $\varphi\in G^*\setminus G$. Thus $\varphi^2\in G$ and clearly $\varphi^2=\sigma\nu$, where $\sigma\coloneq(1\ 2\ \dots\  n)\in G_W$ and $\nu\coloneq(n+1\ n+2\ \dots\  2n)\in G_M$. Note that $|\sigma|=|\nu|=n$; for every $x\in W$, $\sigma(x)=\varphi^2(x)$; for every $y\in M$, $\nu(y)=\varphi^2(y)$. 

Define, for every $x\in W$,
\[
p(x)\coloneq[\nu^{x-1}(n+1),\dots, \nu^{x-1}(2n)]\in \mathbf{L}(M)
\]
and, for every $y\in M$,
\[
p(y)\coloneq[\sigma^{y}(1),\dots, \sigma^{y}(n)]\in \mathbf{L}(W).
\]
In order to prove that $\varphi\in \mathrm{Stab}_{G^*}(p)$, we show that, for every $z\in I$, we have
\begin{equation}\label{stab-phi}
\varphi p(z)=p(\varphi(z)).
\end{equation}
Let us consider then $z\in I$. Assume first that $z\in W$. Then
\[
\varphi p(z)=\varphi[\nu^{z-1}(n+1),\dots, \nu^{z-1}(2n)]=\varphi [\varphi^{2z-2}(n+1),\dots, \varphi^{2z-2}(2n)]
\]
\[
=[\varphi^{2z-1}(n+1),\dots, \varphi^{2z-1}(2n)].
\]
On the other hand, we also have
\[
p(\varphi(z))=p(n+z)=[\sigma^{n+z}(1),\dots, \sigma^{n+z}(n)]=[\sigma^{z}(1),\dots, \sigma^{z}(n)]=[\varphi^{2z}(1),\dots, \varphi^{2z}(n)]
\]
\[
=[\varphi^{2z-1}\varphi(1),\dots, \varphi^{2z-1}\varphi(n)]=[\varphi^{2z-1}(n+1),\dots, \varphi^{2z-1}(2n)],
\]
and hence \eqref{stab-phi} follows.

Assume next that $z\in M$. Then,  we have
\[
\varphi p(z)=\varphi[\sigma^{z}(1),\dots, \sigma^{z}(n)]=\varphi[\varphi^{2z}(1),\dots, \varphi^{2z}(n)]
=[\varphi^{2z}(n+1),\dots, \varphi^{2z}(2n)].
\]
In particular, we have $\varphi p(2n)=[n+1,\dots, 2n]$.
On the other hand, for $n+1\leq z<2n$, we also have
\[
p(\varphi(z))=p(z-n+1)=[\nu^{z-n}(n+1),\dots, \nu^{z-n}(2n)]
\]
\[
=[\nu^{z}(n+1),\dots, \nu^{z}(2n)]=[\varphi^{2z}(n+1),\dots, \varphi^{2z}(2n)].
\]
Finally, observe that $p(\varphi(2n))=p(1)=[n+1,\dots, 2n]=\varphi p(2n).$
Thus, \eqref{stab-phi} is proved.

Now from $\mathrm{Stab}_{G^*}(p)\geq \langle \varphi \rangle$, we deduce that $2n$ divides $\mathrm{Stab}_{G^*}(p)$. By Theorem \ref{stabilizzatori-semireg} and  Proposition \ref{facili-semireg}, we know that $|\mathrm{Stab}_{G^*}(p)|$ divides $2n$ and so
$\mathrm{Stab}_{G^*}(p)=\langle \varphi \rangle$.
\end{proof}

\begin{proposition}\label{ciclone-odd} Let $n$ be odd. Then there exists $p\in \mathcal{P}$ such that $\mathrm{Stab}_{G^*}(p)$ is generated by a $2n$-cycle $\varphi\in G^*$ with the property that, for every $z\in I$, $\mathrm{Rank}_{p(z)}(\varphi^n(z))=n$.
\end{proposition}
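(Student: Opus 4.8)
The plan is to upgrade the construction behind Proposition \ref{ciclone} by keeping the same $2n$-cycle but choosing a smarter $\varphi$-invariant profile; indeed the explicit profile used there makes $\varphi^n(z)$ the least-preferred choice of $z$ only when $n=1$, so a fresh choice is genuinely required. The key observation that makes everything work is that, since a $2n$-cycle $\varphi$ is transitive on $I$, a $\varphi$-invariant preference profile is completely determined by the preference assigned to one single individual, and that preference may be prescribed at will.

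Concretely, let $\varphi\coloneq(1\ n+1\ 2\ n+2\ \cdots\ n\ 2n)\in G^*$ be the $2n$-cycle considered in the proof of Proposition \ref{ciclone}; recall that $\varphi\in G^*\setminus G$. Since $\varphi(W)=M$ we have $\varphi^2\in G$, and as $n$ is odd, $\varphi^n=(\varphi^2)^{(n-1)/2}\varphi$; since $(\varphi^2)^{(n-1)/2}\in G$ while $\varphi\notin G$, it follows that $\varphi^n\in G^*\setminus G$, so in particular $\varphi^n(W)=M$. Set $m_0\coloneq\varphi^n(1)\in M$. Now fix any linear order $p(1)\in\mathbf{L}(M)$ having $m_0$ in last position, i.e. with $\mathrm{Rank}_{p(1)}(m_0)=n$ (for instance, list $M\setminus\{m_0\}$ in increasing order and append $m_0$), and extend $p$ to all of $I$ by declaring
\[
p(\varphi^j(1))\coloneq \varphi^j p(1),\qquad j\in\{0,1,\dots,2n-1\}.
\]
Since $\varphi$ is a $2n$-cycle, the elements $1,\varphi(1),\dots,\varphi^{2n-1}(1)$ are pairwise distinct and exhaust $I$, so $p$ is well defined; moreover, by Lemma \ref{phiRR}, $\varphi^j p(1)\in\mathbf{L}(\varphi^j(M))$, and here $\varphi^j(1)\in W$ with $\varphi^j(M)=M$ when $j$ is even, while $\varphi^j(1)\in M$ with $\varphi^j(M)=W$ when $j$ is odd, so $p$ assigns to each woman a linear order on $M$ and to each man a linear order on $W$, i.e. $p\in\mathcal{P}$. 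Finally $\varphi p(\varphi^j(1))=\varphi^{j+1}p(1)=p(\varphi^{j+1}(1))=p(\varphi(\varphi^j(1)))$ for every $j$, i.e. $\varphi p(z)=p(\varphi(z))$ for all $z\in I$, which together with \eqref{action-w2} gives $p^\varphi=p$; hence $\varphi\in\mathrm{Stab}_{G^*}(p)$ and $\langle\varphi\rangle\le\mathrm{Stab}_{G^*}(p)$.

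It then remains to pin down the stabilizer and to verify the rank equalities, both short. By Theorem \ref{stabilizzatori-semireg}, $\mathrm{Stab}_{G^*}(p)$ is a semiregular subgroup of $\mathrm{Sym}(I)$, so by Proposition \ref{facili-semireg}$(ii)$ its order divides $|I|=2n$; since it contains $\langle\varphi\rangle$, whose order is $|\varphi|=2n$, we conclude $\mathrm{Stab}_{G^*}(p)=\langle\varphi\rangle$, generated by the $2n$-cycle $\varphi$. For the rank condition, take $z\in I$ and write $z=\varphi^j(1)$; then $\varphi^n(z)=\varphi^j(\varphi^n(1))=\varphi^j(m_0)$ and $p(z)=\varphi^j p(1)$, and since by Lemma \ref{phiRR} the operation $\varphi^j$ relabels entrywise the ordered list representing $p(1)$ and therefore preserves ranks, $\mathrm{Rank}_{p(z)}(\varphi^n(z))=\mathrm{Rank}_{\varphi^j p(1)}(\varphi^j(m_0))=\mathrm{Rank}_{p(1)}(m_0)=n$, as wanted.

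There is no serious obstacle here: the statement is essentially a refinement of Proposition \ref{ciclone}, and the only substantive new idea is that a single individual's preference governs the entire invariant profile, which lets one build the rank requirement into one well-chosen linear order. The points needing a little care are purely bookkeeping — checking that $p(\varphi^j(1))\coloneq\varphi^j p(1)$ delivers linear orders on the correct sides (women's preferences on $M$, men's on $W$), and invoking Lemma \ref{phiRR} to see that relabeling by $\varphi^j$ is rank-preserving, which is exactly what reduces the "for every $z$" condition to the single equality $\mathrm{Rank}_{p(1)}(m_0)=n$.
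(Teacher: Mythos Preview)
Your proof is correct and follows essentially the same approach as the paper: both fix the same $2n$-cycle $\varphi$, prescribe $p(1)$ to rank $\varphi^n(1)$ last, and propagate $p$ along the $\varphi$-orbit to obtain a $\varphi$-invariant profile, then invoke semiregularity of the stabilizer to force $\mathrm{Stab}_{G^*}(p)=\langle\varphi\rangle$. Your write-up is in fact cleaner, since by parametrizing $I$ as $\{\varphi^j(1):0\le j\le 2n-1\}$ and setting $p(\varphi^j(1))=\varphi^j p(1)$ uniformly you avoid the paper's explicit computation of $\varphi^n(1)=\tfrac{3n+1}{2}$, its separate formulas for $p$ on $W$ and $M$, and its case-split verification of the rank condition.
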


\begin{proof}
Let $n\geq 3$ be odd. Consider the $2n$-cycle $\varphi\coloneq(1\quad n+1\quad 2\quad n+2 \ \dots \ n\quad 2n)\in G^*$.
Note that, for every $x\in W$, we have $\varphi(x)=x+n$; for every $y\in M$, $\varphi(y)$ is equal to the unique element in $[y-n+1]_n\cap \ldbrack n \rdbrack$.

We claim that, for every $x\in W$ odd, $\varphi^x(1)=n+\frac{x+1}{2}\in M$.  We prove  that by finite induction on $x$.
Surely $\varphi(1)=n+\frac{1+1}{2}=n+1$. Assume that the thesis holds for an odd $x\leq n-2$ and show it for the next odd, that is, for $x+2$.
Observe also that every $x\in W$ satisfies $\frac{x+1}{2}< \frac{x+3}{2}\leq n,$ because $n\geq 3$. Those arithmetic considerations and the inductive assumption imply then that
\[
\varphi^{x+2}(1)= \varphi^2\varphi^{x}(1)=\varphi^2\left(n+\frac{x+1}{2}\right)=\varphi\left(\frac{x+1}{2}+1\right)=\varphi\left(\frac{x+3}{2}\right)=n+\frac{x+3}{2},
\]
which concludes the proof of the claim.
As a consequence, using the fact that $n$ is odd, we get $\varphi^n(1)=\frac{3n+1}{2}.$

We claim next that, for every $x\in W$,  $\varphi^{2(x-1)}(1)=x$. We prove  that by finite induction on $x$, for $x\in W$.
If $x=1$, then $\varphi^{2(1-1)}(1)=\varphi^0(1)=id_I(1)=1$. Assume that the thesis holds for $x\leq n-1$. Then, we have
\[
\varphi^{2x}(1)= \varphi^2\varphi^{2(x-1)}(1)=\varphi^2(x)=\varphi(n+x)=x+1,
\]
and hence the thesis holds for  $x+1$.

Define now $p\in \mathcal{P}$ as follows. Let $p(1)=\left[y_1,\dots, y_{n-1}, \frac{3n+1}{2}\right]$, where
$y_1,\dots, y_{n-1}\in M$ are such that
$M=\left\{y_1,\dots, y_{n-1}, \frac{3n+1}{2}\right\}$. Define then, for every $x\in W$, $p(x)=\varphi^{2(x-1)}p(1)$ and, for every $y\in M$, $p(y)=\varphi^{2y-2n-1}p(1)$. We show that $\varphi\in \mathrm{Stab}_{G^*}(p)$, that is, for every $z\in I$, we have
\begin{equation}\label{stab-phi2}
\varphi p(z)=p(\varphi(z)).
\end{equation}
Let us consider $z\in I$. Assume first that $z\in W$. Then we have
\[
\varphi p(z)=\varphi \varphi^{2(z-1)}p(1)=\varphi^{2z-1}p(1).
\]
On the other hand, we also have
\[
p(\varphi(z))=p(n+z)=\varphi^{2(n+z)-2n-1}p(1)=\varphi^{2z-1}p(1),
\]
and so \eqref{stab-phi2} is satisfied.

Assume next that $z\in M$. Then, we have
\[
\varphi p(z)=\varphi \varphi^{2z-2n-1}p(1)=\varphi^{2z-2n}p(1).
\]
In particular, $\varphi p(2n)=p(1)$. On the other hand, we also have that
$p(\varphi(z))$ equals the preferences, with respect to $p$, of the only individual in the set $[z-n+1]_n\cap \ldbrack n \rdbrack.$
Indeed, if $z=2n$ that gives $p(\varphi(2n))=p(1)$; if $n+1\le z\leq 2n-1$, that gives
\[
p(\varphi(z))=p(z-n+1)=\varphi^{2(z-n+1-1)}p(1)=\varphi^{2z-2n}p(1).
\]
Hence \eqref{stab-phi2} is proved.

Let now $z\in I$. We have to show that $\mathrm{Rank}_{p(z)}(\varphi^n(z))=n$.
Assume first that $z\in W$. By definition of $p$ and by the properties observed for $\varphi$, we have that
\[
\mathrm{Rank}_{p(z)}\left(\varphi^{2(z-1)}\left(\mbox{$\frac{3n+1}{2}$}\right)\right)=n,
\]
and
\[
\varphi^{2(z-1)}\left(\mbox{$\frac{3n+1}{2}$}\right)=\varphi^{2(z-1)}\varphi^{n}(1)=\varphi^{n}\varphi^{2(z-1)}(1)=\varphi^n(z).
\]
Therefore, $\mathrm{Rank}_{p(z)}(\varphi^n(z))=n$.
Assume next that $z\in M$. Observe that $\varphi^n \in G^*\setminus G$ and $|\varphi^n|=2$. Consider then $x\in W$ such that $z=\varphi^n(x)$ and $x=\varphi^n(z)$. Since $\varphi^n\in \mathrm{Stab}_{G^*}(p)$, we know that $p(z)=p(\varphi^n(x))=\varphi^n p(x)$.
Since $x\in W$, we know that $\mathrm{Rank}_{p(x)}(\varphi^n(x))=n$. Thus, we conclude that
\[
\mathrm{Rank}_{p(z)}(\varphi^n(z))=\mathrm{Rank}_{p(z)}(x)=\mathrm{Rank}_{\varphi^n p(x)}(\varphi^n\varphi^n(x))=\mathrm{Rank}_{p(x)}(\varphi^n(x))=n.
\]

Now, from $\mathrm{Stab}_{G^*}(p)\geq \langle \varphi \rangle$, we deduce that $2n$ divides $|\mathrm{Stab}_{G^*}(p)|$. By Theorem \ref{stabilizzatori-semireg} and Proposition \ref{facili-semireg}, we know that $|\mathrm{Stab}_{G^*}(p)|$ divides $2n$. As a consequence,
$\mathrm{Stab}_{G^*}(p)=\langle \varphi \rangle$.
 \end{proof}

\section{Proofs of the results of Section \ref{PossImpossThs} and Section \ref{outside}}\label{main-proofs}

In this section, we finally prove the results stated in Section \ref{PossImpossThs} and Section \ref{outside}. The proofs proposed strongly rely on the results proved in Appendix \ref{semi-reg}.

First of all, note that, using \eqref{nuova-veste}, we have that, for every preference profile $p\in \mathcal{P}$,
\begin{equation}\label{imp}
C^{G^*}(p)=\{\mu\in C_{G^*}(\mathrm{Stab}_{G^*}(p))\cap(G^*\setminus G): |\mu|=2\}.
\end{equation}

\begin{proof}[Proof of Theorem \ref{n2}] By Proposition \ref{ciclone}, there exists $p\in \mathcal{P}$ such that $\mathrm{Stab}_{G^*}(p)=\langle\varphi\rangle$, where $\varphi$ is a $2n$-cycle. By Corollary \ref{facile3}, we complete the proof by showing that
$C^{G^*}(p)=\varnothing$. By Proposition \ref{centralizza-cicloni}$(iv)$, recalling \eqref{imp}, we immediately have that $C^{G^*}(p)=\varnothing$.
\end{proof}

\begin{proof}[Proof of Theorem \ref{main-mathc}] By Corollary \ref{facile3}, it is sufficient to show that, for every $p\in \mathcal{P}$,
$C^{G^*}(p)\neq\varnothing$. Let $p\in \mathcal{P}$ and let $S\coloneq \mathrm{Stab}_{G^*}(p)$. By Theorem \ref{stabilizzatori-semireg}, $S$ is semiregular. Thus, recalling \eqref{imp} and applying Theorem \ref{main-semireg2}, we immediately get $C^{G^*}(p)\neq\varnothing$.
\end{proof}

\begin{proof}[Proof of Theorem \ref{no-sym-mo}] By Theorem \ref{F-ref-consistent}, it is sufficient to show that there exists $p\in \mathcal{P}$ such that $C^{G^*}(p)\cap MO(p)=\varnothing$. Since $n$ is odd, by Proposition \ref{ciclone-odd}, we know that there exists $p\in \mathcal{P}$ such that
$\mathrm{Stab}_{G^*}(p)=\langle\varphi\rangle$, where $\varphi\in G^*$ is a $2n$-cycle such that, for every $z\in I$, $\mathrm{Rank}_{p(z)}(\varphi^n(z))=n$. By Proposition \ref{centralizza-cicloni}$(iii)$, we have that the unique element of order $2$ in $C_{G^*}(\langle\varphi\rangle)\cap (G^*\setminus G)$
 is $\varphi^n$.
However, $\varphi^n\notin  MO(p)$ and thus, recalling \eqref{imp}, we get $C^{G^*}(p)\cap MO(p)=\varnothing$.
\end{proof}

In order to prove Theorem \ref{existence3}, we first need two preliminary results.

\begin{proposition}\label{prop44}
Let $\varphi\in\mathcal{M}$. Then $\{\varphi^\nu\in \mathrm{Sym}(I): \nu\in G_M\}=\mathcal{M}$.
\end{proposition}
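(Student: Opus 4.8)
The plan is to prove the two set inclusions separately, the forward one being immediate and the reverse one requiring an explicit construction of the conjugating permutation. For the inclusion $\{\varphi^\nu:\nu\in G_M\}\subseteq\mathcal{M}$, I would simply note that $G_M\le G^*$ and invoke Proposition \ref{car-match}$(iii)$: since $\varphi\in\mathcal{M}$ and $\nu\in G^*$, we get $\varphi^\nu\in\mathcal{M}^\nu=\mathcal{M}$. All the substance is therefore in the reverse inclusion.

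For the reverse inclusion, fix an arbitrary matching $\mu\in\mathcal{M}$; I want to produce $\nu\in G_M$ with $\varphi^\nu=\mu$. First I would fix a labelling $W=\{x_1,\dots,x_n\}$, $M=\{y_1,\dots,y_n\}$ chosen so that $\varphi=(x_1\ y_1)\cdots(x_n\ y_n)$; this is legitimate because, by Proposition \ref{car-match}, $\varphi\in G^*\setminus G$ has type $[2^{(n)}]$, hence decomposes into $n$ disjoint transpositions each pairing an element of $W$ with an element of $M$. Since $\mu$ likewise swaps $W$ and $M$, for every $i\in\ldbrack n\rdbrack$ there is a unique $\sigma(i)\in\ldbrack n\rdbrack$ with $\mu(x_i)=y_{\sigma(i)}$, and injectivity of $\mu$ forces $\sigma\in\mathrm{Sym}(\ldbrack n\rdbrack)$. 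Now define $\nu\in\mathrm{Sym}(I)$ by $\nu(x_i)=x_i$ and $\nu(y_i)=y_{\sigma(i)}$ for every $i$; this is well defined and bijective because $\sigma$ is a permutation of $\ldbrack n\rdbrack$, and since $\nu$ fixes $W$ pointwise we have $\nu\in G_M$. Using \eqref{conj-split} and \eqref{conj-cyc},
\[
\varphi^\nu=\prod_{i=1}^{n}\bigl(\nu(x_i)\ \nu(y_i)\bigr)=\prod_{i=1}^{n}\bigl(x_i\ y_{\sigma(i)}\bigr)=\mu,
\]
the last equality holding because the $n$ transpositions $(x_i\ y_{\sigma(i)})$ have pairwise disjoint supports covering $I$ and send $x_i$ to $y_{\sigma(i)}=\mu(x_i)$, which determines $\mu$ completely. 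Hence $\mu\in\{\varphi^\nu:\nu\in G_M\}$.

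I do not expect any genuine obstacle here: the argument rests only on the elementary facts, both recorded earlier in the excerpt, that a matching is uniquely determined by the assignment $x_i\mapsto\mu(x_i)$ (Proposition \ref{car-match}) and that conjugation of a product of cycles relabels the entries (formulas \eqref{conj-split}--\eqref{conj-cyc}). The single point meriting a line of care is to verify that the permutation $\nu$ constructed above lies in $G_M$, and not merely in $G^*$; this is clear because by construction $\nu$ is the identity on $W$.
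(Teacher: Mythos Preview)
Your proof is correct and follows essentially the same approach as the paper: both prove the forward inclusion via Proposition~\ref{car-match} and, for the reverse inclusion, construct the same permutation $\nu\in G_M$ (the paper writes it as $\nu(y)=\mu\varphi^{-1}(y)$, which unravels to your $\nu(y_i)=y_{\sigma(i)}$) and then verify $\varphi^\nu=\mu$. The only cosmetic difference is that you check the last equality via the cycle-conjugation formulas \eqref{conj-split}--\eqref{conj-cyc}, whereas the paper computes $\varphi^\nu(z)$ elementwise on $W$ and $M$.
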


\begin{proof}
By Proposition \ref{car-match}, we know that $\{\varphi^\nu\in \mathrm{Sym}(I): \nu\in G_M\}\subseteq\mathcal{M}$.
Consider now $\mu\in\mathcal{M}$. Define $\nu\in G_M$ by $\nu(x)=x$, for all $x\in W$,  and $\nu(y)=\mu\varphi^{-1}(y)$, for all $y\in M$. Note that $\varphi^{-1}=\varphi$, $\mu^{-1}=\mu$. Moreover, $\nu^{-1}(x)=x$, for all $x\in W$,  and $\nu^{-1}(y)=\varphi\mu^{-1}(y)$, for all $y\in M$.
Then, we have that $\varphi^\nu=\mu$. Indeed, let $z\in I$. If $z\in W$, then
\[
\varphi^\nu(z)=\nu\varphi\nu^{-1}(z)=\nu\varphi(z)=\mu\varphi^{-1}\varphi(z)=\mu(z).
\]
If $z\in M$, then
\[
\varphi^\nu(z)=\nu\varphi\nu^{-1}(z)=\nu\varphi\varphi\mu^{-1}(z)=\nu\mu^{-1}(z)=\nu\mu(z)=\mu(z).
\]
Thus $\mu\in\{\varphi^\nu\in \mathrm{Sym}(I): \nu\in G_M\}$.
\end{proof}

\begin{proposition}\label{matchcomm}
Let $\varphi,\mu\in\mathcal{M}$. Then $\varphi\mu=\mu\varphi$ if and only if there exists $\nu\in G_M$ with $|\nu|\le 2$ such that $\mu=\varphi^\nu$.
\end{proposition}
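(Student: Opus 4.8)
The plan is to prove both implications by direct computation, using throughout that every element of $\mathcal{M}$ is a fixed-point-free involution interchanging $W$ and $M$, while every element of $G_M$ fixes $W$ pointwise and maps $M$ onto $M$.

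For the ``only if'' direction I would assume $\varphi\mu=\mu\varphi$ and reuse the construction from the proof of Proposition \ref{prop44}: define $\nu\in G_M$ by $\nu(x)=x$ for all $x\in W$ and $\nu(y)=\mu\varphi(y)$ for all $y\in M$ (recall $\varphi^{-1}=\varphi$), so that $\mu=\varphi^{\nu}$ by that proof. It then remains only to check $\nu^{2}=id_I$, i.e. that $|\nu|\le 2$. On $W$ this is clear; for $y\in M$ we have $\mu\varphi(y)\in M$, hence
\[
\nu^{2}(y)=\mu\varphi\bigl(\mu\varphi(y)\bigr)=\mu(\varphi\mu)\varphi(y)=\mu(\mu\varphi)\varphi(y)=\mu^{2}\varphi^{2}(y)=y,
\]
where the third equality is the commutation hypothesis and the last uses $\mu^{2}=\varphi^{2}=id_I$.

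For the ``if'' direction, let $\nu\in G_M$ with $|\nu|\le 2$ satisfy $\mu=\varphi^{\nu}$; then $\nu^{-1}=\nu$, so $\mu=\nu\varphi\nu$. Setting $a\coloneq\varphi\nu$, one gets $\varphi\mu=\varphi\nu\varphi\nu=a^{2}$ and, since $(\varphi\nu)^{-1}=\nu^{-1}\varphi^{-1}=\nu\varphi$, also $\mu\varphi=\nu\varphi\nu\varphi=(\nu\varphi)^{2}=a^{-2}$. Hence it suffices to show $a^{4}=id_I$. For $z\in W$ we have $\nu(z)=z$, so $a(z)=\varphi(z)\in M$ and $a^{2}(z)=\varphi\nu\varphi(z)\in W$; applying $a^{2}$ once more and using $\varphi^{2}=\nu^{2}=id_I$,
\[
a^{4}(z)=\varphi\nu\varphi\bigl(\varphi\nu\varphi(z)\bigr)=\varphi\nu\nu\varphi(z)=\varphi\varphi(z)=z .
\]
For $z\in M$ an analogous computation (using instead that $\nu$ fixes $W$ pointwise) shows $a^{2}(z)=\nu(z)$ and hence $a^{4}(z)=\nu^{2}(z)=z$. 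Therefore $a^{4}=id_I$, so $a^{2}=a^{-2}$, i.e. $\varphi\mu=\mu\varphi$.

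There is no real obstacle here: once one keeps track of which of $W$ and $M$ the successive partial images $\varphi(z),\nu(z),\varphi\nu(z),\dots$ lie in (since $\varphi$ swaps the two sets while $\nu$ preserves each and fixes $W$), both key identities $\nu^{2}=id_I$ and $a^{4}=id_I$ collapse to $\varphi^{2}=\nu^{2}=id_I$ plus, in the first direction, the commutation hypothesis, and the conjugating permutation $\nu$ itself is supplied directly by Proposition \ref{prop44}.
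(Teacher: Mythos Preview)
Your proof is correct. Both directions work, and the bookkeeping of which set ($W$ or $M$) each intermediate image lies in is accurate.

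Your argument differs from the paper's in two respects worth noting. In the ``only if'' direction, the paper invokes Proposition~\ref{prop44} only to assert existence of some $\nu\in G_M$ with $\mu=\varphi^{\nu}$, then argues by contradiction: assuming $|\nu|\ge 3$, it picks a cycle $\gamma$ of length $\ge 3$ in $\nu$, computes $\varphi\mu(y_0)=\gamma^{-1}(y_0)$ and $\mu\varphi(y_0)=\gamma(y_0)$ for $y_0$ in its support, and derives $\gamma^{2}(y_0)=y_0$. You instead take the \emph{specific} $\nu$ built in the proof of Proposition~\ref{prop44} and verify $\nu^{2}=id_I$ directly from the commutation hypothesis via $\nu^{2}(y)=(\mu\varphi)^{2}(y)=\mu(\varphi\mu)\varphi(y)=\mu(\mu\varphi)\varphi(y)=y$; this is shorter and avoids the contradiction. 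In the ``if'' direction, the paper checks $\varphi\mu(z)=\mu\varphi(z)$ by separate element chases for $z\in W$ and $z\in M$, while you repackage the same computation as $\varphi\mu=a^{2}$, $\mu\varphi=a^{-2}$ for $a=\varphi\nu$ and reduce to $a^{4}=id_I$; the underlying case analysis is the same, but your formulation makes the symmetry between the two products more transparent.
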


\begin{proof}Assume that there exists $\nu\in G_M$ with $|\nu|\le 2$ such that $\mu=\varphi^\nu$. Note that $\varphi=\varphi^{-1}$ and $\nu=\nu^{-1}$. We have to prove that, for every $z\in I$, $\varphi\mu(z)=\mu\varphi(z)$.
Let $z\in I$. If $z\in W$, then
\[
\varphi\mu(z)=\varphi\varphi^\nu(z)=\varphi\nu\varphi\nu^{-1}(z)=\varphi\nu\varphi(z)=\varphi\nu^{-1}\varphi(z)=\nu\varphi\nu^{-1}\varphi(z)=\mu\varphi(z).
\]
If $z\in M$, then
\[
\varphi\mu(z)=\varphi\varphi^\nu(z)=\varphi\nu\varphi\nu^{-1}(z)=\varphi\varphi\nu^{-1}(z)=\nu^{-1}(z)=\nu(z)=\nu\varphi\varphi(z)=\nu\varphi\nu^{-1}\varphi(z)=\mu\varphi(z).
\]

Conversely assume that $\varphi\mu=\mu\varphi$. By Proposition \ref{prop44}, we know that there exists $\nu\in G_M$ such that $\mu=\varphi^\nu$. Suppose, by contradiction, that $|\nu|\ge 3$. Then the cycle decomposition of $\nu$ contains a $k$-cycle $\gamma$, with $k\ge 3$, moving only men.
Let $S\subseteq M$ be the support of $\gamma$ and let $y_0\in S$. Then $|S|=k$, $S=y_0^{\langle \gamma\rangle }$, and $\gamma(y_0)\neq y_0$. We have that
\[
\varphi\mu(y_0)=\varphi\nu\varphi\nu^{-1}(y_0)=\varphi\varphi\nu^{-1}(y_0)=\nu^{-1}(y_0)=\gamma^{-1}(y_0),
\]
and
\[
\mu\varphi(y_0)=\nu\varphi\nu^{-1}\varphi(y_0)=\nu\varphi\varphi(y_0)=\nu(y_0)=\gamma(y_0).
\]
Since $\varphi\mu=\mu\varphi$, we must have $\gamma^{-1}(y_0)=\gamma(y_0)$. Thus, we deduce
$\gamma^2(y_0)=y_0$ and so $\{y_0,\gamma(y_0)\}=y_0^{\langle \gamma\rangle }$. It follows that $S=\{y_0,\gamma(y_0)\}$ and therefore $k=2$, a contradiction.
\end{proof}

\begin{proof}[Proof of Theorem \ref{existence3}]
Since $WPO$ is a $G^*$-symmetric matching mechanism, we have that $WPO$ is a $\langle\varphi\rangle$-symmetric matching mechanism. By Theorem \ref{F-ref-consistent}, it is sufficient to show that, for every $p\in\mathcal{P}$, $C^{\langle\varphi\rangle}(p)\cap WPO(p)\neq\varnothing$.
Consider then $p\in\mathcal{P}$. Observe that $\mathrm{Stab}_{\langle\varphi\rangle}(p)\in\{\{id_I\},\langle\varphi\rangle\}$. If
$\mathrm{Stab}_{\langle\varphi\rangle}(p)=\{id_I\}$, then $C^{\langle\varphi\rangle}(p)=\mathcal{M}$; therefore
$C^{\langle\varphi\rangle}(p)\cap WPO(p)=WPO(p)\neq\varnothing$.
Assume then that $\mathrm{Stab}_{\langle\varphi\rangle}(p)=\langle\varphi\rangle$.
By Proposition \ref{matchcomm}, we know that
\[
C^{\langle\varphi\rangle}(p)=\{\mu\in\mathcal{M}: \exists \nu\in G_M\mbox{ with }|\nu|\le 2\mbox{ such that } \mu=\varphi^\nu\}.
\]
In particular, $\varphi\in C^{\langle\varphi\rangle}(p)$.
Let $y_1\in M$ be the first ranked man in $p(1)$. If $\varphi(1)=y_1$, then $\varphi\in C^{\langle\varphi\rangle}(p)\cap WPO(p)\neq\varnothing$.  If $\varphi(1)=y_0\neq y_1$, consider $\nu=(y_0\,y_1)\in G_M$. Note that $|\nu|=2$ and that
$\varphi^\nu(1)=\nu\varphi(1)=\nu(y_0)=y_1$. Thus,
$\varphi^\nu\in C^{\langle\varphi\rangle}(p)\cap WPO(p)\neq\varnothing$.
\end{proof}

\begin{proof}[Proof of Theorem \ref{existence2}]
We have that $\mathcal{M}=\{\mu_1,\mu_2\}$, where $\mu_1\coloneq(13)(24)$ and $\mu_2\coloneq(14)(23)$. Note that $\mu_1\mu_2=\mu_2\mu_1$. Thus, for every $p\in\mathcal{P}$, we have that $C^{\langle\varphi\rangle}(p)=\mathcal{M}$.
As a consequence, for every $p\in\mathcal{P}$, $C^{\langle\varphi\rangle}(p)\cap ST(p)=ST(p)\neq\varnothing$. Then, by Theorem \ref{F-ref-consistent}, we conclude that there exists a resolute, stable and $\langle \varphi\rangle$-symmetric matching mechanism.
\end{proof}

\begin{proof}[Proof of Theorem \ref{endriss}] Define, for every $x\in W=\ldbrack n \rdbrack$,  $y_x\coloneq \varphi(x)$. Then,
$M=\{y_x: x\in \ldbrack n \rdbrack\}$. Let $p\in\mathcal{P}$ be defined by
\[
p(1)=[y_2,y_3,y_1,(y_4),\ldots,(y_n)],\quad p(y_1)=[2,3,1,(4),\ldots,(n)],
\]
\[
p(2)=[y_3,y_1,y_2,(y_4),\ldots,(y_n)],\quad p(y_2)=[3,1,2,(4),\ldots,(n)],
\]
\[
p(3)=[y_1,y_2,y_3,(y_4),\ldots,(y_n)],\quad p(y_3)=[1,2,3,(4),\ldots,(n)],
\]
and, for every $x\in \ldbrack n \rdbrack$ with $x\ge 4$ if any,
\[
p(x)=\sigma^{x-1}[y_1,y_2,y_3,(y_4),\ldots,(y_n)],\quad p(y_x)=\sigma^{x-1}[1,2,3,(4),\ldots,(n)],
\]
where $\sigma=(1\,2\dots n)(y_1\,y_2\dots y_n)$. It is immediate to check that $p^\varphi=p$, and thus $\varphi\in \mathrm{Stab}_{G^*}(p)$. Using Theorem \ref{F-ref-consistent}, we complete the proof by showing that $C^{\langle\varphi\rangle}(p)\cap ST(p)=\varnothing$.
Let $\gamma_1,\ldots,\gamma_6\in \mathrm{Sym}(I)$ be defined by
\[
\gamma_1=(1\,y_1)(2\,y_2)(3\,y_3), \quad \gamma_2=(1\,y_1)(2\,y_3)(3\,y_2) \quad \gamma_3=(1\,y_2)(2\,y_1)(3\,y_3),
\]
\[
\gamma_4=(1\,y_3)(2\,y_2)(3\,y_1), \quad\gamma_5=(1\,y_2)(2\,y_3)(3\,y_1), \quad\gamma_6=(1\,y_3)(2\,y_1)(3\,y_2),
\]
and let $\gamma^*=(4\,y_4)\cdots(n\,y_n)\in \mathrm{Sym}(I)$ if $n\ge 4$, while $\gamma^*=id_I$ if  $n= 3$. For every $i\in \ldbrack 6 \rdbrack$, set $\mu_i=\gamma_i \gamma^*\in\mathcal{M}$. Note  that $\varphi=\mu_1$ and that
\[
\{\mu\in\mathcal{M}: \forall x \in \ldbrack n \rdbrack\mbox{ with }x\ge 4, \mu(x)=y_x \}=\{\mu_1,\ldots,\mu_6\}.
\]
It is simple to verify that $ST(p)\subseteq \{\mu_1,\ldots,\mu_6\}$. Indeed, consider $\mu\in\mathcal{M}\setminus \{\mu_1,\ldots,\mu_6\}$. Then, there exists $x\in \ldbrack n \rdbrack$ with $x\ge 4$ such that $\mu(x)\neq y_x$. Thus, $\mu$ is blocked by $(x,y_x)$  and so $\mu\not\in ST(p)$.
Observe now that
$\mu_1$ is blocked by $(1,y_2)$;
$\mu_2$ is blocked by $(1,y_3)$;
$\mu_3$ is blocked by $(3,y_2)$;
$\mu_4$ is blocked by $(2,y_1)$.
As a consequence, $ST(p)\subseteq \{\mu_5,\mu_6\}$. However, $C^{\langle\varphi\rangle}(p )\cap\{\mu_5,\mu_6\}=\varnothing$ because both $\mu_5$ and $\mu_6$ do not commute with $\varphi.$ Indeed,
$\mu_5\varphi(1)=\gamma_5\gamma^*(y_1)=\gamma_5(y_1)=3$ and $\varphi\mu_5(1)=\varphi\gamma_5(1)=2$, and so $\mu_5\varphi(1)\neq \varphi\mu_5(1)$;
$\mu_6\varphi(1)=\gamma_6\gamma^*(y_1)=\gamma_6(y_1)=2$ and $\varphi\mu_6(1)=\varphi\gamma_6(1)=\varphi(y_3)=3,$ and so $\mu_6\varphi(1)\neq \varphi\mu_6(1)$.
As a consequence, we also have that  $C^{\langle\varphi\rangle}(p )\cap ST(p)=\varnothing$ and this completes the proof.
\end{proof}

\begin{proof}[Proof of Theorem \ref{gen-teo}] 
Suppose by contradiction that there exists a resolute, symmetric and Pareto optimal generalized matching mechanisms $\overline{F}$.
Let us denote by $\overline{\mathcal{P}}_*$ the set of the generalized preference profiles $\overline{p}$ such that, for every $x\in W$ and $y\in M$,
$y\succ_{\overline{p}(x)}x$ and $x\succ_{\overline{p}(y)}y$. Thus, we have that $\overline{p}\in \overline{\mathcal{P}}_*$ if, for every individual, being unmatched is the worst possible option. 

We claim that, for every $\overline{p}\in \overline{\mathcal{P}}_*$, $\overline{F}(\overline{p})\subseteq \mathcal{M}$. Indeed, assume by contradiction that there exists  $\overline{p}\in \overline{\mathcal{P}}_*$ and $\overline{\mu}\in \overline{\mathcal{M}}\setminus \mathcal{M}$ such that $\overline{F}(\overline{p})=\{\overline{\mu}\}$. Thus, there exists $z\in I$ such that $\overline{\mu}(z)=z$. If $\overline{\mu}=id_I$, then, considering any $x\in W$, $y\in M$ and 
$\overline{\mu}'=(xy)\in \overline{\mathcal{M}}$, we have that 
$\overline{\mu}'(x)=y\succ_{\overline{p}(x)}x=\overline{\mu}(x)$,
$\overline{\mu}'(y)=x\succ_{\overline{p}(y)}y=\overline{\mu}(y)$, and 
$\overline{\mu}'(z)=\overline{\mu}(z)$ for all $z\in I\setminus \{x,y\}$. 
In particular, $\overline{\mu}$ is not Pareto optimal for $\overline{p}$, and $\overline{F}$ is not Pareto optimal, a contradiction. If $\mu\neq id_I$, then the properties of $\overline{\mu}$ guarantees that $\overline{\mu}$ is product of 2-cycles whose support contains an element of $W$ and and element of $M$. As a consequence, the set $J=\{z\in I: \overline{\mu}(z)=z\}$ is nonempty and contains at least an element of $W$ and at least an element of $M$. Consider then  $x\in W\cap J$, $y\in M\cap J$ and $\overline{\mu}'=(xy)\overline{\mu}\in \overline{\mathcal{M}}$. We have that 
$\overline{\mu}'(x)=y\succ_{\overline{p}(x)}x=\overline{\mu}(x)$,
$\overline{\mu}'(y)=x\succ_{\overline{p}(y)}y=\overline{\mu}(y)$, and 
$\overline{\mu}'(z)=\overline{\mu}(z)$ for all $z\in I\setminus \{x,y\}$.
In particular, $\overline{\mu}$ is not Pareto optimal for $\overline{p}$, and $\overline{F}$ is not Pareto optimal, a contradiction.

Consider now $\Phi:\mathcal{P}\to \overline{\mathcal{P}}_*$ that associates with any $p\in\mathcal{P}$ the unique element $\Phi(p)$ of $\overline{\mathcal{P}}_*$ such that: 
for every $x\in W$ and $y_1,y_2\in M$, $y_1\succ_{\Phi(p)(x)} y_2$ if and only if $y_1\succ_{p(x)} y_2$;
for every $y\in M$ and $x_1,x_2\in W$, $x_1\succ_{\Phi(p)(y)} x_2$ if and only if $x_1\succ_{p(y)} x_2$;
for every $x\in W$ and $y\in M$, $y\succ_{\Phi(p)(x)}x$ and $x\succ_{\Phi(p)(y)}y$.
Having in mind the tables representing the elements of $\mathcal{P}$ and $\overline{\mathcal{P}}$, $\Phi(p)$ is built from the table associated with $p$ by simply adding at the bottom of each column the name of the corresponding individual. It can be easily shown that $\Phi$ is a bijection and that, for every $p\in \mathcal{P}$ and $\varphi\in G^*$, $\Phi(p)^\varphi=\Phi(p)^\varphi$.

Let us consider the matching mechanism $F$ defined, for every $p\in\mathcal{P}$, by $F(p)=\overline{F}(\Phi(p))$. It is a cumbersome exercise to show that $F$ is resolute, symmetric and Pareto optimal. Such a fact, by means of Proposition \ref{implications}, contradicts Theorem \ref{impossibilita1}.
\end{proof}

\section*{References}

\noindent Abdulkadiro{\u{g}}lu, A., S{\"o}nmez, T., 2003. School choice:
A mechanism design approach. American Economic Review 93, 729-747.
\vspace{2mm}

\noindent Bartholdi, L.,  Hann-Caruthers, W.,  Josyula, M., Tamuz, O., Yariv, L., 2021. Equitable Voting Rules. Econometrica 89, 563-589.
\vspace{2mm}

\noindent Bereczky, \`A.,  Mar\` oti, A., 2008. On groups with every normal subgroup transitive or semiregular. Journal of Algebra 319, 1733-1751.
\vspace{2mm}
 
\noindent Bubboloni, D., Gori, M., 2014. Anonymous and neutral majority
rules. Social Choice and Welfare 43, 377-401.
\vspace{2mm}

\noindent Bubboloni, D., Gori, M., 2015. Symmetric majority rules.
Mathematical Social Sciences 76, 73-86.
\vspace{2mm}

\noindent Bubboloni, D., Gori, M., 2016. Resolute refinements of social
choice correspondences. Mathematical Social Sciences 84,  37-49.
\vspace{2mm}

\noindent Bubboloni, D., Gori, M., 2021. Breaking ties in collective decision-making.
Decisions in Economics and Finance 44, 411-457.
\vspace{2mm}

\noindent Bubboloni, D., Gori, M., 2023. A generalization to networks of Young's characterization of the Borda rule. Available at arXiv:2211.06626v2.
\vspace{2mm}

\noindent Cooper, F., Manlove, D., 2020. Algorithms for new types of fair stable
matchings. In 18th International Symposium on Experimental Algorithms (SEA 2020). Leibniz International Proceedings in Informatics (LIPIcs), Volume 160,  pp. 20:1-20:13, Schloss Dagstuhl - Leibniz-Zentrum für Informatik (2020).
\vspace{2mm}

\noindent Do\u gan, O., Giritligil, A.E., 2015. Anonymous and neutral social choice: existence results on resoluteness. Murat
Sertel Center for Advanced Economic Studies, Working paper series 2015-01.
\vspace{2mm}

\noindent E\u gecio\u glu, \"{O}., 2009. Uniform generation of anonymous and neutral preference profiles for social choice rules.
Monte Carlo Methods and Applications 15, 241-255.
\vspace{2mm}

\noindent E\u gecio\u glu,  \"{O}., Giritligil, A.E., 2013. The impartial, anonymous, and neutral culture model: a probability model for sampling public preference structures. Journal of Mathematical Sociology 37, 203-222.
\vspace{2mm}

\noindent Endriss,  U.,  2020. Analysis of One-to-One Matching Mechanisms
via SAT Solving: Impossibilities for Universal Axioms.
The Thirty-Fourth AAAI Conference on Artificial Intelligence (AAAI-20), 1-8.
\vspace{2mm}

\noindent Gale, D., Shapley, L. S., 1962. College admissions and the
stability of marriage. The American Mathematical Monthly 69, 9-15.
\vspace{2mm}

\noindent Gusfield, D., Irving, R. W., 1989. {\it The stable marriage
problem: structure and algorithms}. The MIT Press.
\vspace{2mm}

\noindent Ha{\l}aburda, H., 2010. Unravelling in two-sided matching
markets and similarity of preferences. Games and Economic Behavior 69, 365-393.
\vspace{2mm}

\noindent Kivinen, S., 2023. On the manipulability of equitable voting rules. Games and Economic Behavior 141, 286-302.
\vspace{2mm}

\noindent  Kivinen, S., 2024. Equitable, neutral, and efficient voting rules. Journal of Mathematical Economics 115, 103061.
\vspace{2mm}

\noindent Klaus, B., Klijn, F., 2006. Procedurally fair and stable matching. Economic Theory 27, 431-447.
\vspace{2mm}

\noindent Knuth, D.E., 1976. {\it Mariages stables et leurs relations avec d'autres probl\`{e}mes combinatoires}. Montreal: Les Presses de l'Universit\'{e} de Montreal.
\vspace{2mm}

\noindent Kov\`acs, I., Malni\v{c}, A., Maru\v{s}i\v{c}, D., Miklavi\v{c}, \v{S}., 2015. Transitive group actions: (im)primitivity and semiregular groups. Journal of Algebraic Combinatorics 41, 867-885.
\vspace{2mm}

\noindent Maynard, P., Siemons, J., 2002. On the reconstruction index of the permutation groups: semiregular groups. Aequationes Mathematicae 64, 218-231.
\vspace{2mm}

\noindent Masarani, F., Gokturk, S.S., 1989. On the existence of fair matching algorithms.
Theory and Decision 26, 305-322.
\vspace{2mm}

\noindent Milne, J.S., 2021. {\it Group theory}. Available at www.jmilne.org/math/CourseNotes/GT.pdf 
\vspace{2mm}

\noindent Miyagawa, E., 2002. Strategy-proofness and the core in house allocation problems. Games and Economic Behavior
38, 347-361.
\vspace{2mm}

\noindent Nizamogullari, D., {\"O}zkal-Sanver, {\.I}., 2014. Characterization of the core in full domain marriage problems.
Mathematical Social Sciences 69, 34-42.

\noindent {\"O}zkal-Sanver, {\.I}., 2004. A note on gender fairness in
matching problems. Mathematical Social Sciences 47, 211-217.
\vspace{2mm}

\noindent Robinson, D. J. S, 1996. {\it A course in the theory of groups}. Graduate texts in Mathematics 80, Springer new York.
\vspace{2mm}

\noindent Romero-Medina, A., 2001. \lq Sex-equal\rq\  stable matchings.
Theory and Decision 150, 197-212.
\vspace{2mm}

\noindent Root, J., Bade, S., 2023. Royal processions: Incentives, efficiency and fairness in two-sided matching. In Proceedings of the 24th ACM Conference on Economics and Computation, p.1077. Full version of the paper at https://arxiv.org/abs/2301.13037
\vspace{2mm}

\noindent Roth, A. E., 1984. The evolution of the labor market for
medical interns and residents: a case study in game theory. Journal of
Political Economy 92, 991-1016.
\vspace{2mm}

\noindent Roth, A. E., 1991. A natural experiment in the organization of
entry-level labor markets: Regional markets for new physicians and
surgeons in the United Kingdom. The American Economic Review 81,
415-440.
\vspace{2mm}

\noindent Roth, A. E., Sotomayor, M. A., 1990. {\it Two-sided
matchings. A study in game-theoretic modeling and analysis}.
Cambridge University Press.
\vspace{2mm}

\noindent Sasaki, H., Toda, M., 1992. Consistency and characterization
of the core of two-sided matching problems. Journal of Economic Theory
56, 218-227.
\vspace{2mm}

\end{document}